\newtheorem{theorem}{Theorem}
\newtheorem{corollary}[theorem]{Corollary}
\newtheorem{proposition}[theorem]{Proposition}
\newtheorem{definition}{Definition}
\newtheorem{example}{Example}
\newcommand{\I}{\mathcal{I}} 
\renewcommand{\C}{\mathcal{C}} 
\newcommand{\cEJR}{\text{cohEJR}_{\C}}
\newcommand{\cJR}{\text{cohJR}_{\C}}
\newcommand{\agrEJR}{\text{agrEJR}_{\C}}
\newcommand{\cMES}{\text{MES}_{\C}}
\renewcommand{\leq}{\leqslant}
\renewcommand{\geq}{\geqslant}
\newcommand{\prof}[1]{\boldsymbol{#1}}
\renewcommand{\vec}[1]{\boldsymbol{#1}}
\newcommand{\Agr}[2]{\text{\rm Agr}(#1,#2)}
\newcommand{\fixDec}[3]{{#1}[{#2\leftarrow #3}]}
\newcommand{\sat}[2]{u_{#1}(#2)}
\newcommand{\myparagraph}[1]{\vspace*{3pt}\noindent\textbf{#1}}
\newcommand{\cmark}{\mbox{\ding{51}}}%
\renewcommand{\vec}[1]{\boldsymbol{#1}}
\theoremstyle{plain}
\newtheorem{myex}{Example} 
\renewenvironment{example}{\begin{myex}\rm}{\hfill$\vartriangle$\end{myex}}
\title{Proportionality for Constrained Public Decisions
%%%% Cite as
%%%% Update your official citation here when published 
%\thanks{\textit{\underline{Citation}}: 
%\textbf{Authors. Title. Pages.... DOI:000000/11111.}} 
}
\author{
  Julian Chingoma\\
  ILLC \\
  University of Amsterdam \\
  Amsterdam, The Netherlands\\
  j.z.chingoma@uva.nl\\
  0000-0002-6491-3779\\
   \And
  Umberto Grandi \\
  IRIT \\
  Univeristé Toulouse Capitole \\
  Toulouse, France\\
  umberto.grandi@irit.fr\\
  0000-0002-1908-5142\\
  \And
  Arianna Novaro\\
  CES \\
  Université Paris 1 Panthéon-Sorbonne \\
  Paris, France\\
  arianna.novaro@univ-paris1.fr\\
  0000-0003-3443-1530\\
}
\begin{document}
\maketitle

\begin{abstract}
We study situations where a group of voters need to take a collective decision over a number of public issues, with the goal of getting a result that reflects the voters’ opinions in a proportional manner. 
Our focus is on interconnected public decisions, where the outcome on one or more issues has repercussions on the acceptance or rejection of other issues in the agenda. 
We show that the adaptation of classical justified-representation axioms to this enriched setting are always satisfiable only for restricted classes of public agendas. 
We adapt well-known proportional decision rules to take the structure of the public agenda into account, and we show that they match justified-representation properties in approximation on a class of expressive constraints.
We also identify another path to achieving proportionality on interconnected issues via an adaptation of the notion of priceability.
\end{abstract}

%%%%%%%%%%%%%%%%%%%%%%%%%%%%%%%%%%%%%%%%%%%%%%%%%%%%%%%%%%%%%%%%%%%%%%%%%%%%%%%%%%%%%%%%%%%
\section{Introduction}\label{sec:intro}
%%%%%%%%%%%%%%%%%%%%%%%%%%%%%%%%%%%%%%%%%%%%%%%%%%%%%%%%%%%%%%%%%%%%%%%%%%%%%%%%%%%%%%%%%%%

Consider a municipality that is planning the renovation of one of its main squares in the historical centre. A project of this calibre implies taking numerous decisions, such as on whether to plant some trees (and of which kind), whether to add a fountain, some benches and tables, a bike-sharing station, a statue (and from which artist), and so on. Some of the potential configurations of the square may also be unfeasible, either because they would go over the allocated budget for the project, or because there may not be enough physical space to accommodate all of the desired features. 

Given the impact that such a renovation would have on the everyday lives of its citizens, the municipality may decide to set up a \emph{participatory design} process,
%\footnote{ \citet{helbing2024co} have recently argued for a radical move from \emph{smart cities} to \emph{participatory cities}, where citizens can actively co-create their neighbourhoods.}, 
%UG: commented out because it uses only one references (and I cannot find a good second one for the moment, that is not by Helbig)
in order to make sure that different preferences and needs are taken into account in the final design.\footnote{Our example evokes the participatory process that was set up by the Municipality of Madrid in order to collectively redesign \emph{Plaza de Espa\~{n}a}---although the project was later criticized by the opposition due to its low participation rate \citep{elpais2017}.} This is just one of many real-world examples of public decisions where \emph{(i)} the final outcome should strive to be \emph{representative} of the views of the different stakeholders involved, while also \emph{(ii)} having some external \emph{constraints} that rule out some possible outcomes as unfeasible. 

Regarding our point \emph{(ii)} above, indeed, the square re-design problem can be seen as a (generalized) \emph{participatory budgeting} instance (cf. the recent survey by \citeauthor{reyM2023PBsurvey}, \citeyear{reyM2023PBsurvey}), where the implementation of one project may be conditional on the acceptance, or rejection, of another, all the while satisfying some monetary and spatial constraints. We can also think of the collective \emph{product configuration} problem  \citep{AstesanaConfiguration2010}, where a group of people may need to collectively choose the features of a given product (e.g, a group of friends booking an accommodation for their holidays), knowing that some value combinations may be unavailable or out of stock. Finally, in a \emph{committee election} \citep{LacknerS23mwv} we may need to fill the positions in a committee with some candidates by also respecting various diversity constraints. 

The latter example is an instance of an important sub-setting of public decisions, namely one where the issues at stake all have a \emph{binary} domain: namely, the decision-makers are asked to express their preferences as an acceptance or a rejection of each issues (or candidate). These kind of scenarios may occur also in the context of \emph{multiple referenda}, where the public votes directly on the resolution of political issues, and for \emph{group activity planning}, where a group of individuals has to choose, as a collective, the activities that the entire group shall partake in \citep{langX16HBCOMSOC2016}. 

If we now turn to our point \emph{(i)} above, among the numerous interpretations of \emph{fairness} that have been put forward to make sure that the outcome appropriately reflect the diverse views of the participants, we will focus on the notion of \emph{proportional representation}. Proportionality features prominently in many collective choice settings such as that of \emph{apportionment} \citep{Balinski2005-xu} and the aforementioned committee elections \citep{LacknerS23mwv}. Indeed, even when zooming in on the problem of \emph{public decisions} \citep{skowronGoreckiAAAI2022public}, the goal of producing collective outcomes that proportionally reflect the opinions of the voter population has been drawing increasing attention in recent years \citep{freemanEtAlIJCAI2020variable,MasarikP024generalProp}. 
However, a component that has so far not received much attention in this growing literature on proportionality
%for public decisions 
is the presence of \emph{constraints} that restrict the possible outcomes that can be returned, i.e., the combination of our points \emph{(i)} and \emph{(ii)} above. In this paper, we thus focus on answering the question of what one may do when outcomes that would satisfy classical proportionality axioms---and thus be considered fair---are no longer feasible due to the presence of constraints.

%In tackling our task, we build on existing notions of proportionality that have been posed for simpler models and tailor them for the challenges that come with the existence of constraints. 
%Naturally, this leads us to also consider constrained versions of collective decision rules proposed in the literature and to investigate the extent to which they meet the requirements of our novel constraint-aware notions of proportionality.
%UG: commented these two sentences since they are repeated in the paragraph that follows

%%CONTRIBUTIONS
Our contributions chart the extent to which proportionality can be ensured in a constrained public-decision setting. First, we use the notion of feasible group deviations as a building block that allows the adaptation of existing proportionality axioms---that are based on varying public-decision interpretations of justified representation---to decisions with constraints. For each of our axioms, we show that although it is challenging to satisfy these properties in general constrained instances, when one hones in on a restricted---yet highly expressive---class of constraints, we can achieve proportionality guarantees that represent approximations of desirable justified-representation axioms. In doing so, we also define novel adaptations of recently studied decision rules to our public-decision setting with constraints, namely the method of equal shares (\emph{MES}) and the \emph{MeCorA} rule. Finally, we adapt the priceability notion from the literature on committee elections, which provides another promising route to introduce proportionality into public decisions under constraints.
A detailed summary of our results can be found in Table~\ref{tab:axiomTable}.

%%OUTLINE
This paper is structured as follows. We begin by discussing related work in Section~\ref{sec:related}. We then continue with Section~\ref{sec:model} where we detail the constrained public-decision model, discuss the existing axioms of justified representation for public decisions, and also present the notion of deviating groups. Then each of Sections \ref{sec:justified_cohesive} and \ref{sec:justified_agreement} deals with a particular public-decision interpretation of justified representation, namely via cohesiveness and via agreement. Before concluding in Section~\ref{sec:conclusion}, we deal with our constrained version of the priceability axiom in Section~\ref{sec:priceable}.

%%%RELATED WORK
%%%%%%%%%%%%%%%%%%%%%%%%%%%%%%%%%%%%%%%%%%%%%%%%%%%%%%%%%%%%%%%%%%%%%%%%%%%%%%%%%%%%%%%%%%%
\section{Related Work}\label{sec:related}
%%%%%%%%%%%%%%%%%%%%%%%%%%%%%%%%%%%%%%%%%%%%%%%%%%%%%%%%%%%%%%%%%%%%%%%%%%%%%%%%%%%%%%%%%%%

We begin by noting that our constrained public-decision model closely resembles that of \emph{judgment aggregation} and it also naturally fits into the area of collective decisions in \emph{combinatorial domains} (see \citeauthor{EndrissHBCOMSOC2016}, \citeyear{EndrissHBCOMSOC2016}, and \citeauthor{langX16HBCOMSOC2016}, \citeyear{langX16HBCOMSOC2016}, for general introductions to these two topics, respectively). 

Most relevant to our paper is the recent work conducted on fairness in the context of public decisions without constraints \citep{conitzerFS2017fairPD,freemanEtAlIJCAI2020variable,skowronGoreckiAAAI2022public}. \citet{conitzerFS2017fairPD} focused on individually proportional outcomes, thus, our work more closely aligns with that of \citet{freemanEtAlIJCAI2020variable} and \citet{skowronGoreckiAAAI2022public} who import to public decisions the notion of \emph{justified representation} \citep{azizEtAlSCW2017justified,fernandezEtAlAAAI2017PJR,petersSkowronEC2020welfarism} from the literature on \emph{committee elections} \citep{LacknerS23mwv}. 
Proportionality has also been studied in models of \emph{sequential decision-making} that are relevant to our own as they can be seen as generalisations of the public-decision model without constraints \citep{bulteauHPRT2021perpetualJR,chandakARXIV2024sequential,lackner2020perpetual}. Amongst these sequential decision-making papers, those of \citet{bulteauHPRT2021perpetualJR} and \citet{chandakARXIV2024sequential} relate to our work the most as they also implement justified-representation notions. In particular, \citeauthor{chandakARXIV2024sequential} study a model similar to ours, where decisions involve multiple alternatives. The key distinction is that our work incorporates feasibility constraints on the final outcomes.

Proportionality under constrains have been studied extensively by \citet{MasarikP024generalProp} in a general social-choice model that can model both the unconstrained and constrained versions of public-decisions, along with a variety of other social choice domains. Their model elicits approval ballots from voters over a set of alternatives, and imposes a feasibility constraint on potential outcomes. 
The most notable differences with the model we work with are that $(1)$ while our proportionality axioms—based on the idea of \emph{feasible voter deviations}—are structurally similar to those of \citeauthor{MasarikP024generalProp}, our axioms demand less from voter groups when evaluating their deserved representation, making them arguably more natural but also harder to satisfy; and $(2)$ by introducing properties tailored to our specific setting, we define and analyse constrained public-decision rules not considered by \citeauthor{MasarikP024generalProp}, such as the method of equal shares (an open question explicitly mentioned by \citeauthor{MasarikP024generalProp}).

In a similar line, \citet{mavrovMSEC2023fair} adapted justified representation for the committee elections model with arbitrary constraints, mostly focused on the notion of the core but also defining a version of extended justified representation. We refer to the discussion by \cite{MasarikP024generalProp} for illustrating the limitations of their approach.
%This leads us towards differing approaches in adapting justified representation for constraints and also, analysing different rules. 
%UG: this sentence is way too generic, it basically suggests that you did not want to read their paper

In related fields, previous work studied proportionality in various models that feature collective choices on interconnected propositions: the \emph{belief merging} setting \citep{haretEtAlAAAI2020belief}, interdependent binary issues via \emph{conditional ballots} \citep{brillEtAlIJCAI2023prop}, committee elections with synergies \citep{izsakEtAlAAAI2018synergies}, participatory budgeting with project interactions \citep{jainEtAlIJCAI2020interactions}, and approval-based shortlisting with constraints, in a model of judgment aggregation by \citet{chingomaEtAlAAMAS2022simulate}.

\begin{table*}[t]
\centering  
\begin{tabular}{cl}
\toprule
  Proportionality  &  Satisfiability results \\ 
\midrule
$\alpha\text{-}{\cEJR}\text{-}\beta$   & $\cmark$ $\cMES$ with $\lambda_{\text{unit}}$ for unconstrained elections with $\alpha = \nicefrac{2}{(\max_{t\in[m]}|D_{t}|)}$ and $\beta = 1$ (Prop.~\ref{prop:MES_satisfies_EJR-1}) \\
$\Uparrow$ &  \\

$\cEJR$-1    &  \makecell[l]{$\cmark$ $\cMES$ with $\lambda_{\text{unit}}$ for binary unconstrained elections (Prop.~\ref{prop:MES_satisfies_EJR-1}) \\ $\cmark$ for binary elections with $m = 2$ (Prop.~\ref{prop:c-EJR-1_always_satisfiable_m=2})} \\
$\Uparrow$ &  \\
$\cEJR$   & \makecell[l]{Not always satisfiable \citep{chandakARXIV2024sequential}\\$\cmark$ when $|\C| = 2$ and $\C$ has $\text{NFD}$ (Prop.~\ref{prop:c-EJR_satisfiable_|C|=2_A1})\\ $\cmark$ for binary elections with $m \leq 3$ and $\C$ has NFD (Prop.~\ref{prop:c-EJR_satisfiable_|I|<=3_A1})} \\
$\Downarrow$ &  \\ 
$\cJR$   & Not always satisfiable (Prop.~\ref{prop:c-JR_may_not_exist_A1}) \\
\midrule
$\alpha\text{-}{\agrEJR}\text{-}\beta$    & \makecell[l]{$\cmark$ $\cMES$ with $\lambda_{\text{fix}}$ for binary elections, $k$-restrictive $\C$ with $\alpha = \nicefrac{1}{k}$ and $\beta = 1$ (Thm.~\ref{thm:c-MES-fix_satisfies_1/k-agrEJR-1})\\ $\cmark$ Greedy $\text{MeCorA}_{\C}\text{-}(k-1)$ when $\C$-consistent, $k$-restrictive $\C$ 
%with  $\alpha = 1$ and $\beta = k-1$ 
(Thm.~\ref{thm:greedyMeCorA_approx_satisfies_agrEJR}) \\ $\cmark$ LS-$\text{PAV}_{\C}$ when $\C$-consistent, $k$-restrictive $\C$ with  $\alpha = \nicefrac{2}{(k+1)}$ (Thm.~\ref{thm:lsPAV_approx_satisfies_agrEJR})} \\
$\Uparrow$ &  \\
$\agrEJR$    & Not always satisfiable even when $\C$ has NFD (Cor.~\ref{prop:agrEJR_not_always_satisfiable_NFD}) \\
\midrule
Priceability    & $\cmark$ $\text{MeCorA}_{\C}$ (Prop.~\ref{prop:MeCorA-c_is_priceable})  \\
\bottomrule
\end{tabular}

\caption{Table summarising our main results. The first column presents the constraints-adapted notions of proportionality we consider, and their relative implications. The second column presents results on their satisfiability: $\cmark$ symbol identifies results where the axiom is satisfied by one of our proposed rules, and mentions the relevant theorem or reference.}
\label{tab:axiomTable}
\end{table*}

%%%MODEL
%%%%%%%%%%%%%%%%%%%%%%%%%%%%%%%%%%%%%%%%%%%%%%%%%%%%%%%%%%%%%%%%%%%%%%%%%%%%%%%%%%%%%%%%%%%
\section{The Model}\label{sec:model}
%%%%%%%%%%%%%%%%%%%%%%%%%%%%%%%%%%%%%%%%%%%%%%%%%%%%%%%%%%%%%%%%%%%%%%%%%%%%%%%%%%%%%%%%%%%

A finite set of $n$ voters $N = \{1,\ldots,n\}$ has to take a collective decision on a finite set of $m$ issues $\I = \{a_{1},\ldots,a_{m}\}$. Typically, the public decision setting considers only two available decisions per issue, but we instead adopt the following more general setup. %Take $X = \{a,b,c,\ldots\}$ to be a finite set of \emph{alternatives}. 
Taking $X$ as a set of \emph{alternatives}, each issue $a_{t}\in\I$ is associated with its own finite set of alternatives called a \emph{domain} $D_{t} = \{d_{t}^{1},d_{t}^{2},\ldots\}\subseteq X$ where $|D_{t}| \geq 2$ holds for all $t\in[m]$ with $X$. The design decision to go beyond binary issues is motivated by the wider real-life applicability of this model.
%when more than two alternatives are possible for each issue.
%
Each voter $i\in N$ submits a \emph{ballot} $\vec{b}_{i} = (\vec{b}_{i,1},\ldots,\vec{b}_{i,m}) \in D_{1}\times\ldots\times D_{m}$ 
where an entry $\vec{b}_{i,t}$ indicates that voter $i$ chooses alternative $c$ for the issue~$a_t$. 
A \emph{profile} $\prof{B} = (\vec{b}_{1},\ldots,\vec{b}_{n}) \in (D_{1}\times\ldots\times D_{m})^{n}$ is a vector of the $n$ voters' ballots. An \emph{outcome} $\vec{w} = (w_{1},\ldots,w_{m}) \in D_{1}\times\ldots\times D_{m}$ is then a vector providing a decision for every issue at stake. 

We focus on situations where some \emph{constraints} limit the set of possible collective outcomes: we denote by $\C \subseteq D_{1}\times\ldots\times D_{m}$ the non-empty set of \emph{feasible} outcomes. We write $(\prof{B}, \C)$ to denote an \emph{election instance}. By a slight abuse of notation we also refer to $\C$ as the \emph{constraint}, and thus, we refer to elections instances where $\C = D_{1}\times\ldots\times D_{m}$ as \emph{unconstrained} election instances.\footnote{Note that while we work formally with the constraint being an enumeration of all feasible outcomes, in practice, it is often possible to represent the set of feasible outcomes in more concise forms---via the use of formulas of \emph{propositional logic}, for example---to help with parsing said constraint and/or speed up computation by exploiting the constraint's representation structure.}
Note that voter ballots need not be consistent with the constraints, i.e., for an election instance $(\prof{B}, \C)$, we do not require that $\vec{b}_{i}\in \C$ for all voters $i\in N$.\footnote{This assumption takes our model closer to the particular model of judgment aggregation where the constraints on the output may differ from the constraints imposed on the voters' input judgments \citep{EndrissAAMAS2018,chingomaEtAlAAMAS2022simulate}.}  
%
%\begin{remark}
%While not common in work done in the related model of judgment aggregation, our assumption that the voters' ballots need not correspond to feasible outcomes is common in other 
This assumption is common in a number of settings of social choice. In committee elections, voters can approve more candidates than the committee target size while in participatory budgeting, the sum of the costs of a voter's approved projects may exceed the instance's budget.
When deviating from this assumption we may explicitly require that voter ballots be constraint-consistent. 
For our setting, we argue that applying constraints only on the outcome of the collective decision helps capture real-world, constrained decision-making scenarios where either the constraint is \emph{uncertain} when voters submit their ballots, or possibly, the voting process becomes more burdensome for voters as they attempt to create ballots with respect to a (possibly difficult to understand) constraint. 

\begin{example}
    A group of friends is deciding on the travel destinations of their shared holiday across France, visiting different regions. On a booking platform, there are a certain number of cities that can be selected per region such as: Toulouse, Carcassonne and Montpellier in Occitania; Rouen and Caen in Normandy; Rennes and Brest in Brittany; and so on. Each friend has a preferred combination of cities and their collective itinerary is subject to factors such as their travel budget or the available connections between cities. However, as costs and connections may change significantly on a day-to-day basis, it may be unclear which combination of cities are affordable.  Therefore, it is not reasonable to impose the default requirement that voter ballots are constraint-consistent.
\end{example}
%\end{remark}

At times, we shall restrict to  election instances where $D_t = \{0,1\}$ for every issue $a_t$. We refer to such cases as \emph{binary} election instances. We explicitly state whether any result hinges on the restriction to binary instances. Given an outcome $\vec{w}$ for a binary instance, we define the vector $\vec{\bar w} = ({\bar w}_{1},\ldots,{\bar w}_{m}) $ as $\bar w_t = 1-w_t$ for all issues $a_t\in \I$.

Let us now introduce some useful notation.
Consider an outcome $\vec{w}$, a set of issues $S\subseteq \I$ and some vector $\vec{v} = (v_{1},\ldots,v_{m})\in D_{1}\times\ldots\times D_{m}$ (that can be interpreted as either an outcome or a voter's ballot). We write $\fixDec{\vec{w}}{S}{\vec{v}} = (w_{1}',\ldots,w_{m}')$ where $w_{t}' = w_{t}$ for all issues $a_t\in \I\setminus{S}$ and $w_{t}' = v_{t}$ for all issues $a_j\in S$. In other words, $\fixDec{\vec{w}}{S}{\vec{v}}$ is the resultant vector of updating outcome $\vec{w}$'s decisions on the issues in $S$ by fixing them to those of vector $\vec{v}$. %, i.e., the outcome where we copy the choice of voter $i\in N$ for the issues in $T$ and keep the rest unchanged.
For a given issue $a_t \in \I$ and a decision $d\in D_t$, we use $N(a_t,d) = \{i\in N \mid \vec{b}_{i,t} = d\}$ to denote the set of voters that agree with decision $d$ on issue $a_{t}$. %, and by $N^{0}(a_j) = \{i\in V \mid \vec{b}_{i,t} = 0\}$ the set of voters that reject it. 
Given two vectors $\vec{v}, \vec{v}' \in D_{1}\times\ldots\times D_{m}$, we denote the \emph{agreement} between them by $\Agr{\vec{v}}{\vec{v}'} = \{a_t\in \I\mid v_{t} = {v}_{t}'\}$. 

We define the \emph{satisfaction} that a voter $i$ obtains from an outcome $\vec{w}$ as $\sat{i}{\vec{w}} = |\Agr{\vec{b}_{i}}{\vec{w}}|$, i.e., the number of decisions on which the voter $i$ is in agreement with outcome $\vec{w}$. We adopted this notion of satisfaction as it is a common choice within the literature on public decisions \citep{freemanEtAlIJCAI2020variable,skowronGoreckiAAAI2022public}, which is in turn grounded on a similar assumption from work on committee elections \citep{LacknerS23mwv}.

%%%%%%%%%%%%%%%%%%%%%%%%%%%%%%%%%%%%%%%%%%%%%%%%%%%%%%%%%%%%%%%%%%%%%%%%
\subsection{Proportionality via Justified Representation}\label{subsec:justified}
%%%%%%%%%%%%%%%%%%%%%%%%%%%%%%%%%%%%%%%%%%%%%%%%%%%%%%%%%%%%%%%%%%%%%%%%

This section starts with the observation that classical notions of proportionality fall short when considering interconnected decisions (see Example~\ref{exm:no_fair_outcome_EJR}), and then follows with our adaptations of such axioms for constrained environments. 

Ideally, when looking to make a proportional collective choice, we would like to meet the following criteria: a group of similarly-minded voters that is an $\alpha$-fraction of the population should have their opinions reflected in an $\alpha$-fraction of the $m$ issues. 
%We wish to define an axiom for our model that captures this idea within our richer framework. 
%
In the setting of committee elections, this is formally captured via the justified representation axioms, with one of the most widely studied being \emph{extended justified representation (EJR)} \citep{azizEtAlSCW2017justified}. In the setting of public decisions there are two adaptations that have been studied. The first approach intuitively states that ``a group of voters that agree on an $\alpha$-fraction of the issues in $\I$ and are $\alpha$-fraction of the voter population, should have some control over an $\alpha\cdot|\I|$ number of the issues in $\I$'' \citep{chandakARXIV2024sequential,freemanEtAlIJCAI2020variable}. In this approach, the requirements on the voter groups 
%that are present in the latter approach 
are captured by the notion of \emph{cohesiveness} and so we refer to this version of EJR as \emph{cohesiveness-EJR}.
The second approach intuitively states that ``a group of voters that agree on a set of issues $T$ and represent an $\alpha$-fraction of the voter population, should have some control over an $\alpha\cdot |T|$ number of the total issues in $\I$'' \citep{chandakARXIV2024sequential,MasarikP024generalProp,skowronGoreckiAAAI2022public}. We refer to it as \emph{agreement-EJR}.
%This approach differs from the following adaptation of EJR from committee elections: 
 Observe that cohesiveness-EJR is stronger than, and implies, agreement-EJR.

Meeting the ideal outlined by both of these notions is not easy in our setting, as the constraint $\C$ could rule out a seemingly fair outcome from the onset.

\begin{example}\label{exm:no_fair_outcome_EJR}
Suppose there are two binary issues $\I = \{a_1, a_2\}$ with constraint $\C  = \{(1,0),(0,1)\}$. Then suppose there are two voters $N = \{1,2\}$ with ballots $\vec{b}_{1} = (1,0)$, and $\vec{b}_{2} = (0,1)$. %Note that voters $1$ and $2$ are both, on their own, cohesive groups \hl{(AN: We have not yet defined cohesiveness formally here)}. 
Here, both aforementioned EJR interpretations require each voter to obtain at least $1$ in satisfaction, i.e., deciding half of the two issues at hand. This would be possible by selecting outcomes $(1,1)$ or $(0,0)$, which are however not feasible according to the constraints. Thus, there exists no feasible outcome that provides agreement-EJR or cohesiveness-EJR as one voter $i\in \{1,2\}$ will have satisfaction $\sat{i}{\vec{w}} = 0$ for any outcome $\vec{w}\in \C$. 
\end{example}

\noindent
Example~\ref{exm:no_fair_outcome_EJR} shows that a voter group that is an $\nicefrac{1}{2}$-fraction of the population may lay claim to deciding an $\nicefrac{1}{2}$-fraction of the issues, but in doing so, they may be resolving, or influencing the decision on, a larger portion of the issues than they are entitled to. This is an issue we must take into account when defining proportionality when there are constraints.

%%%%%%%%%%%%%%%%%%%%%%%%%%%%%%%%%%%%%%%%%%%%%%%%%%%%%%%%%%%%%%%%%%%%%%%%
\subsection{Constrained Deviations}\label{sec:deviations}
%%%%%%%%%%%%%%%%%%%%%%%%%%%%%%%%%%%%%%%%%%%%%%%%%%%%%%%%%%%%%%%%%%%%%%%%

To define proportionality axioms that accommodate constraints, we seek meaningful ways to identify, given an outcome~$\vec{w}$, those voter groups that are happier with the selection of an outcome that is different from $\vec{w}$. 
We formalize this by the notion of 
%which we use to identify the voter groups that can propose an alternative, feasible outcome $\vec{w}^{*}$ that yields greater satisfaction to each group member. This of 
deviation, in line with related work \citep{azizEtAlSCW2017justified,MasarikP024generalProp,HaretK0S24equilibria}.

\begin{definition}[$(S,\vec{w},\C)$-deviation]
Given election instance $(\prof{B},\C)$ and outcome $\vec{w}\in \C$, a set of voters $N'\subseteq N$ has an \emph{$(S,\vec{w},\C)$-deviation} if $\emptyset\neq S \subseteq \I$ is a set of issues such that all of the following hold:
    \begin{itemize}
        \item $S \subseteq\Agr{\vec{b}_{i}}{\vec{b}_{j}} \text{ for all } i,j\in N'$ (the voters agree with each other on the decisions on all issues in $S$).
        \item $S\subseteq\I\setminus{\Agr{\vec{b}_{i}}{\vec{w}}}$ for all $i\in N'$  (each voter disagrees with outcome $\vec{w}$'s decisions on all issues in $S$).
        \item $\fixDec{\vec{w}}{S}{\vec{b}_{i}} \in \C$ for all $i\in N'$ (fixing outcome $\vec{w}$'s decisions on issues in $S$, so as to agree with the voters in $N'$, induces a feasible outcome).
    \end{itemize}
\end{definition}

\noindent
We provide an example of a deviation as defined above with the following example.

\begin{example}
    Suppose there are three binary issues $\I = \{a_1, a_2, a_3\}$ with constraint $\C  = \{(0,1,1),(1,0,1)\}$, and three voters $N = \{1,2,3\}$ with ballots $\vec{b}_{1} = (1,0,0)$, $\vec{b}_{2} = (1,0,0)$ and $\vec{b}_{3} = (0,0,0)$. Take outcome $\vec{w}=(0,1,1)$. Voters $N' = \{1,2\}$ agree on issues in  $S = \{a_1, a_2\}$ and they disagree with $\vec{w}$ on the decisions for issues in $S$ and $\fixDec{\vec{w}}{S}{\vec{b}_{1}} = \fixDec{\vec{w}}{S}{\vec{b}_{2}} = (1,0,1) \in \C$. Hence, voters $1$ and $2$ have an $(S,\vec{w},\C)$-deviation to outcome $(1,0,1)$.
\end{example}

\noindent
Intuitively, given an outcome $\vec{w}$, a voter group having an $(S,\vec{w},\C)$-deviation indicates the presence of another feasible outcome $\vec{w}^{*} \neq \vec{w}$ where every group member would be better off. Thus, our goal in providing a fair outcome reduces to finding an outcome where every group of voters that has an $(S,\vec{w},\C)$-deviation is sufficiently represented.
%We shall use this $(S,\vec{w},\C)$-deviation notion to adapt proportionality axioms from unconstrained settings to axioms that deal with constraints. 
%Our goal is to answer the following question: how much representation can we guarantee from some outcome $\vec{w}$, to a group of voters that $(i)$ has an $(S,\vec{w},\C)$-deviation, and $(ii)$ qualifies as underrepresented? The answer to this question is critical as we deem such voter groups---that meet conditions $(i)$ and $(ii)$---to be justified in their complaints and being deserving of some fairness guarantees. 

%%%%%%%%%%%%%%%%%%%%%%%%%%%%%%%%%%%%%%%%%%%%%%%%%%%%%%%%%%%%%%%%%%%%%%%%
\section{Justified Representation with Cohesiveness}\label{sec:justified_cohesive}
%%%%%%%%%%%%%%%%%%%%%%%%%%%%%%%%%%%%%%%%%%%%%%%%%%%%%%%%%%%%%%%%%%%%%%%%

 %We now propose the following adaptation of cohesiveness-EJR to public decisions with constraints. 
 To adapt cohesiveness-EJR to public decisions with constraints, we start from the notion of cohesiveness 
 %that is frequently used 
 introduced in proportionality studies in the committee elections literature.
 %and other social-choice settings. 
 We say that a voter group $N'$ is \emph{$T$-agreeing} for some set of issues $T\subseteq \I$ if $T \subseteq\Agr{\vec{b}_{i}}{\vec{b}_{j}}$ holds for all voters $i,j\in N'$ and then we define cohesiveness as the following:

\begin{definition}[$T$-cohesiveness]
For a set of issues $T\subseteq \I$, we say that a set of voters $N'\subseteq N$ is \emph{$T$-cohesive} if $N'$ is $T$-agreeing and it holds that $|N'| \geq |T|\cdot\nicefrac{n}{m}$.
\end{definition}

\noindent 
Using $T$-cohesiveness, we can define our version of EJR for public decisions with constraints.

\begin{definition}[$\cEJR$]\label{def:cEJR}
Given an election $(\prof{B}, \C)$, an outcome $\vec{w}$ provides \emph{ $\cEJR$} if for every $T$-cohesive group of voters $N'\subseteq N$ for some $T\subseteq \I$ with an $(S,\vec{w},\C)$-deviation for some non-empty $S\subseteq T$, there exists a voter $i\in N'$ such that:
\[\sat{i}{\vec{w}} \geq |T|.\]
\end{definition}

\noindent
Intuitively, $\cEJR$ deems an outcome to be unfair if there exists a $T$-cohesive voter group with $(i)$ none of its group members having at least $|T|$ in satisfaction, and $(ii)$ changing outcome $\vec{w}$'s decisions on some of the issues in $T$, specifically those in a subset $S\subseteq T$, leads to some other feasible outcome.

\begin{example}\label{exm:cEJR}
    Suppose there are three issues $\I = \{a_1, a_2, a_3\}$ with domains $D_{t} = \{d_{1},d_{2},d_{3}\}$ for all $t\in\{1,2,3\}$. Then take the constraint to be $\C  = \{(d_{1},d_{2},d_{3}),(d_{2},d_{1},d_{3})\}$. Then, suppose there are three voters $N = \{1,2,3\}$ with ballots $\vec{b}_{1} = (d_{1},d_{2},d_{1})$, $\vec{b}_{2} = (d_{1},d_{2},d_{2})$ and $\vec{b}_{3} = (d_{3},d_{3},d_{2})$. Note that voters $1$ and $2$ form a $T$-cohesive group for the set of issues $T = \{a_1, a_2\}$ while voter group $\{2,3\}$ is a $T$-cohesive group for $T=
    \{a_3\}$. Take outcome $\vec{w}=(d_{2},d_{1},d_{3})$. Each of voters $N' = \{1,2\}$ has a satisfaction of $0$ and the group has an $(S,\vec{w},\C)$-deviation (with $S=T$) to the alternative outcome $\vec{w}^{*}=(d_{1},d_{2},d_{3})$ thus, outcome $\vec{w}$ does not provide $\cEJR$. For outcome $\vec{w}^{*}$, see that while the voters are a $T$-cohesive group for $T=\{a_3\}$, they do not have an $(S,\vec{w}^{*},\C)$-deviation to a feasible outcome. Thus, it follows that outcome $\vec{w}^{*}$ provides $\cEJR$.
\end{example}

\noindent
We now discuss our proposal in the context of recent adaptations of proportionality axioms to constraints.
We start with the work of \citet{MasarikP024generalProp},
%which can be applied to our setting of constrained public decisions, 
precisely with the axioms of \emph{Base EJR} and \emph{EJR} (Definitions 2 and 4 in the cited paper). 
Consider the election instance in Example~\ref{exm:cEJR}. In this example, we argue that voters $\{1,2\}$ is $T$-cohesive for issues $T=\{a_1,a_2\}$ and is witness to a violation of $\cEJR$ by the outcome $\vec{w}=(d_{2},d_{1},d_{3})$ (at least one of them deserves $|T|=2$ in satisfaction). 
However, in the context of the \emph{Base EJR} axiom of \citet{MasarikP024generalProp}, voter group $\{1,2\}$ would not qualify as a group deserving of representation. To see this, we find a partial outcome within the feasible ones in $\C$, namely $(d_{2},\_,d_{3})$, such that both of the following hold: $(1)$ the group cannot `complete' the partial outcome with decisions that they agree on, and $(2)$ the group size is not strictly larger than $n\cdot \nicefrac{|T|}{(2+|T|)}$ (where $2$ is the number of decided issues in the partial outcome $(d_{2},\_,d_{3})$). 
The same holds for the stronger EJR axiom by \citet{MasarikP024generalProp}, which asks to consider only partial outcomes agreeing with $\vec{w}=(d_{2},d_{1},d_{3})$
%and not from all outcomes in $\C$. Note that the partial outcome
($(d_{2},\_,d_{3})$ still qualifies). 
These considerations illustrate that the axioms of \citet{MasarikP024generalProp} are more demanding of voter groups when assessing whether they deserve representation. 
%in line with their level of cohesiveness. 
We argue that our proposal of $\cEJR$ axiom (and other notions to follow) are more intuitive and appropriately lenient when identifying groups worthy of representation but we do note that this makes our notion much more challenging to satisfy.
Similar reasoning applies to Restrained EJR \cite{mavrovMSEC2023fair}. We direct readers to \citep{MasarikP024generalProp} for an in-depth discussion (Section $3.2$ of their paper) on the similarities between Restrained EJR and Base EJR.

Next, we ask how computationally demanding it is to check whether some outcome provides $\cEJR$. But first, we look at the following computational question associated with $(S,\vec{w},\C)$-deviations: given an election instance $(\prof{B},\C)$ and an outcome $\vec{w}\in \C$, the problem is to find all groups of voters with an $(S,\vec{w},\C)$-deviation.

\begin{proposition}\label{prop:(T,w)-deviation_poly_to_find}
    Given an election instance $(\prof{B},\C)$ and an outcome $\vec{w}\in \C$, there exists an algorithm that finds all groups of voters $N'$ such that there exists an $S\subseteq \I$ with $N'$ having an $(S,\vec{w},\C)$-deviation, that runs in $O((\max_{t\in[m]}|D_{t}|)^{m}|\C|^{2}mn)$ time.
\end{proposition}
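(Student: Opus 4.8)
The plan is to exploit one structural fact: an $(S,\vec{w})$-deviation is completely determined by a single feasible outcome. Indeed, suppose $N'$ has an $(S,\vec{w})$-deviation. Since the members of $N'$ agree on every issue in $S$ (first condition), the vector $\fixDec{\vec{w}}{S}{\vec{b}_i}$ is one and the same for all $i\in N'$; call it $\vec{w}^{*}$. By the third condition $\vec{w}^{*}\in\C$, and by the second condition $\vec{b}_i^{t}\neq w_t$ for every $a_t\in S$, so $\vec{w}^{*}$ differs from $\vec{w}$ on \emph{exactly} the issues in $S$; in particular $\vec{w}^{*}\neq\vec{w}$ and $S$ is recovered from $\vec{w}^{*}$ as $\{a_t\in\I : w^{*}_t\neq w_t\}$. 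So every deviation is encoded by some $\vec{w}^{*}\in\C\setminus\{\vec{w}\}$.

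First I would prove the converse characterisation. For $\vec{w}^{*}\in\C\setminus\{\vec{w}\}$ put $S^{*}=\{a_t\in\I : w^{*}_t\neq w_t\}$ (nonempty since $\vec{w}^{*}\neq\vec{w}$) and
\[
N'(\vec{w}^{*})=\{\,i\in N : \vec{b}_i^{t}=w^{*}_t \text{ for all } a_t\in S^{*}\,\}.
\]
The claim is that a nonempty $N'\subseteq N$ has an $(S,\vec{w})$-deviation for some $S$ if and only if $N'\subseteq N'(\vec{w}^{*})$ for some $\vec{w}^{*}\in\C\setminus\{\vec{w}\}$, in which case $S=S^{*}$ witnesses it. The forward direction is the paragraph above. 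For the backward direction, given nonempty $N'\subseteq N'(\vec{w}^{*})$, I check the three conditions for $S=S^{*}$: all members agree with $\vec{w}^{*}$ on $S^{*}$, hence with one another (condition~1); by definition of $S^{*}$ they therefore disagree with $\vec{w}$ on all of $S^{*}$ (condition~2); and $\fixDec{\vec{w}}{S^{*}}{\vec{b}_i}$ agrees with $\vec{w}^{*}$ on $S^{*}$ and with $\vec{w}$ off $S^{*}$, where $\vec{w}$ and $\vec{w}^{*}$ coincide, so it equals $\vec{w}^{*}\in\C$ (condition~3). Consequently the inclusion-maximal deviating groups are exactly the nonempty sets among $\{N'(\vec{w}^{*}) : \vec{w}^{*}\in\C\setminus\{\vec{w}\}\}$, and every deviating group is a nonempty subset of one of them; this finite family, together with the witnessing sets $S^{*}$, is the natural (polynomial-size) output.

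The algorithm is then immediate: enumerate the at most $|\C|$ feasible outcomes; for each $\vec{w}^{*}\neq\vec{w}$ compute $S^{*}$ in $O(m)$ time and build $N'(\vec{w}^{*})$ by testing each of the $n$ ballots against $\vec{w}^{*}$ on $S^{*}$ in $O(mn)$ time; retain the pair $(N'(\vec{w}^{*}),S^{*})$ whenever $N'(\vec{w}^{*})\neq\emptyset$, optionally removing duplicate groups by pairwise comparison of the at most $|\C|$ retained groups ($O(|\C|^{2}n)$). Correctness is exactly the characterisation above, and the total running time is $O(|\C|^{2}mn)$ (indeed $O(|\C|mn)$ for the main computation, before de-duplication).

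The only real obstacle is the structural observation at the start; once one sees that conditions~1--3 collapse to ``$\vec{w}^{*}:=\fixDec{\vec{w}}{S}{\vec{b}_i}$ is feasible and its difference set from $\vec{w}$ is precisely $S$'', the rest is bookkeeping. The point to be careful about is using condition~2 to guarantee that the difference set of $\vec{w}^{*}$ is \emph{all} of $S$ (not a proper subset), so that the map $\vec{w}^{*}\mapsto S^{*}$ faithfully inverts $S\mapsto\vec{w}^{*}$; and to note that, since there can be exponentially many deviating groups, ``finding all of them'' must be delivered as this polynomial-size family of maximal groups with their witnessing issue sets.
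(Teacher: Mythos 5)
Your proposal is correct and takes essentially the same approach as the paper: iterate over the feasible outcomes $\vec{w}^{*}\neq\vec{w}$, take $S$ to be the set of issues on which $\vec{w}^{*}$ and $\vec{w}$ differ, and collect the voters agreeing with $\vec{w}^{*}$ on $S$. Your write-up is in fact more careful than the paper's, since you prove both directions of the correspondence between deviations and target outcomes and you explicitly address that ``all groups'' must be reported as the polynomial-size family of maximal groups (with subsets implicit), a point the paper's proof glosses over.
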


\begin{proof}{}
    Take $(\prof{B},\C)$ and outcome $\vec{w}\in \C$. Consider the following algorithm that operates in $|\C|$ rounds, assessing an outcome $\vec{w}\in \C$ in each round (with each outcome assessed once throughout): at each round for an outcome $\vec{w}\in \C$, iterate through all other outcomes $\vec{w}^{*}\neq \vec{w}\in \C$; for all sets of issues that $\vec{w}$ and $\vec{w}^{*}$ disagree on, we fix $S$ to be such a set (there are at most $(\max_{t\in[m]}|D_{t}|)^{m}$ of these sets where $\max_{t\in[m]}|D_{t}|$ is the maximal size of any issue's domain); in at most $mn$ steps, it can be checked if there is a set of voters %of size greater than $|S|\cdot\nicefrac{n}{m}$ 
    that agree with $\vec{w}^{*}$ on all issues in $S$ which verifies the existence of a %$S$-agreeing and $S$-large 
    voter group $N'$ with an $(S,\vec{w},\C)$-deviation; keep a count of all such groups $N'$; if all outcomes have been assessed, terminate and return the number of groups identified that have a $(S,\vec{w},\C)$-deviation, otherwise, move to the next outcome. This algorithm takes $O((\max_{t\in[m]}|D_{t}|)^{m}|\C|^{2}mn)$ time to complete in the worst case. 
\end{proof}

\noindent
Given this result, we continue with the following computational result regarding the checking of whether $\cEJR$ is satisfied by an outcome of some election instance.

\begin{proposition}\label{prop:c-EJR_poly_to_find}
 Given an election instance $(\prof{B},\C)$ and an outcome $\vec{w}\in \C$, there exists an algorithm that decides in $O(2(\max_{t\in[m]}|D_{t}|)^{m}|\C|^{3}mn)$ time whether outcome $\vec{w}$ provides $\cEJR$. 
\end{proposition}

\begin{proof}{}
        From Proposition~\ref{prop:(T,w)-deviation_poly_to_find} we know that, given an outcome $\vec{w}$, we can find all groups with some $(S,\vec{w},\C)$-deviation for some $S\subseteq \I$ in $O(\max_{t\in[m]}|D_{t}|)^{m}|\C|^{2}mn)$ time. There can be at most $(\max_{t\in[m]}|D_{t}|)^{m}(|\C|-1)$ such groups (recall that $\max_{t\in[m]}|D_{t}|$ is the maximal size of any issue's domain). Then, for each group $N'$ with an $(S,\vec{w},\C)$-deviation, we can check their size in polynomial time and thus verify whether they are $T$-cohesive with $S\subseteq T$, and if so, we can check if there exists any voter $i\in N'$ with $\sat{i}{\vec{w}} > |T|$. 
\end{proof}

\noindent
This result is similarly negative in comparison to Proposition~\ref{prop:c-EJR_poly_to_find} as it is exponential in the number of issues $m$. Moving on, we mention that \citet{chandakARXIV2024sequential} have already shown that, in general, cohesiveness-EJR is not always satisfiable in their sequential decisions model. This negative result carries over to the unconstrained public-decision setting and thus, to the constrained setting as well. Although we shall, in the sections to follow, analyse the extent to which we can achieve positive results with cohesiveness-EJR in our constrained setting, this negative result motivates the study of the following weaker axiom---which is an adaptation of the JR axiom from committee elections---that can always be satisfied in the public-decision setting without constraints \citep{bulteauHPRT2021perpetualJR,freemanEtAlIJCAI2020variable,chandakARXIV2024sequential}.

\begin{definition}[$\cJR$]\label{def:cJR}
    Given an election instance $(\prof{B}, \C)$, an outcome $\vec{w}$ provides \emph{$\cJR$} if for every $T$-cohesive group of voters $N'\subseteq N$ for some $T\subseteq \I$ with an $(S,\vec{w},\C)$-deviation for some $S\subseteq T$ where $|S| = |T| = 1$, there exists a voter $i\in N'$ such that:
    $\sat{i}{\vec{w}} \geq 1$.
\end{definition}

\noindent
This is a relaxation of $\cEJR$ where the only voter groups that need be guaranteed any amount of satisfaction are those voter groups that are cohesive on, and have a feasible deviation with, a single issue. $\cJR$ aims to ensure that such voter groups have at least one voter that agrees with the outcome's decision on at least one issue.

\begin{example}\label{exm:cJR}
    Take the same election instance from Example~\ref{exm:cEJR} with three issues $\I = \{a_1, a_2, a_3\}$ with domains $D_{t} = \{d_{1},d_{2},d_{3}\}$ for all $t\in\{1,2,3\}$. Then take the constraint to be $\C  = \{(d_{1},d_{1},d_{2}),(d_{2},d_{2},d_{2}),(d_{2},d_{2},d_{3})\}$. Then, suppose there are three voters $N = \{1,2,3\}$ with ballots $\vec{b}_{1} = (d_{1},d_{1},d_{1})$, $\vec{b}_{2} = (d_{1},d_{1},d_{2})$ and $\vec{b}_{3} = (d_{3},d_{3},d_{3})$. Take outcome $\vec{w}=(d_{2},d_{2},d_{2})$. The voter group $\{1,2\}$ is a $T$-cohesive group for the set of issues $T = \{a_1, a_2\}$, both are not satisfied by any decisions by outcome $\vec{w}$ and they have an $(S,\vec{w},\C)$-deviation (with $S=T$) to the alternative outcome $(d_{1},d_{1},d_{2})$ but we have that $|S|=|T|=2>1$ and thus, $\cJR$ does not capture their complaint. In fact, the voter group $\{1,2\}$ does not have a feasible deviation of only a single issue for any of the outcomes in $\C$. So, this pair of voters are not witness to a violation of $\cJR$. On the other hand, voter~$3$ represents a $T$-cohesive group for $T=\{a_3\}$ and they have an $(S,\vec{w},\C)$-deviation of size $|S|=|T|=1$ to the outcome $(d_{2},d_{2},d_{3})$. Thus, outcome $\vec{w}=(d_{2},d_{2},d_{2})$ does not provide $\cJR$ but the outcome $\vec{w}=(d_{2},d_{2},d_{3})$ does.
\end{example}

\noindent
The above example demonstrates that $\cJR$ is a fairly tame demand. Unfortunately, when considering arbitrary constraints, even $\cJR$ cannot always be achieved. Note that this even holds for binary election instances.

\begin{proposition}\label{prop:cohJR_not_always_satisfiable}
There exists an election instance where no outcome provides $\cJR$.
\end{proposition}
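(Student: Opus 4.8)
The plan is to exhibit a single small binary election instance and check directly that \emph{no} feasible outcome provides $\cJR$. The first thing to notice is why Example~\ref{exm:no_fair_outcome_EJR} is not already a witness: with its constraint $\C = \{(1,0),(0,1)\}$, flipping one coordinate of a feasible outcome always leaves $\C$, so there are \emph{no} $(S,\vec w)$-deviations with $|S|=1$ at all, and $\cJR$ holds vacuously. The constraint must therefore be \emph{permissive} enough to admit such single-issue deviations, yet \emph{rigid} enough that every feasible outcome still leaves somebody unrepresented. I would resolve this tension by taking two feasible outcomes at Hamming distance one (so that a single coordinate flip bridges them), and by making each voter's ballot the bitwise complement of one of these two outcomes, which forces that voter's satisfaction to $0$ at the corresponding outcome.

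Concretely, take $N = \{1,2\}$, $\I = \{a_1,a_2\}$ with $D_1 = D_2 = \{0,1\}$, constraint $\C = \{(0,0),(1,0)\}$, and ballots $\vec b_1 = (1,1)$ and $\vec b_2 = (0,1)$. Since $n = m = 2$, the $T$-cohesiveness threshold for $|T|=1$ is just $|N'|\geq 1$, so every singleton voter is a $\{a_t\}$-cohesive group. For $\vec w = (0,0)$: voter $1$ has $\sat{1}{(0,0)} = |\Agr{(1,1)}{(0,0)}| = 0$, and $\{1\}$ has an $(\{a_1\},(0,0))$-deviation, because voter $1$ disagrees with $(0,0)$ on $a_1$ and $\fixDec{(0,0)}{\{a_1\}}{(1,1)} = (1,0)\in\C$; as no member of $\{1\}$ attains satisfaction $1$, the outcome $(0,0)$ fails $\cJR$. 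Symmetrically, for $\vec w = (1,0)$: voter $2$ has $\sat{2}{(1,0)} = |\Agr{(0,1)}{(1,0)}| = 0$ and $\{2\}$ has a $(\{a_1\},(1,0))$-deviation via $\fixDec{(1,0)}{\{a_1\}}{(0,1)} = (0,0)\in\C$, so $(1,0)$ also fails $\cJR$. Since $\C = \{(0,0),(1,0)\}$, there is no feasible outcome providing $\cJR$.

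Beyond this, everything is a routine unpacking of the definitions of $T$-cohesiveness and of $(S,\vec w)$-deviation, so the only genuinely delicate point is the calibration described in the first paragraph — choosing $\C$ so that the needed single-issue deviations exist while the residual feasible outcomes remain unfair; once two adjacent feasible outcomes are fixed and each voter is made the complement of one of them, the verification falls out. (The same idea scales to any $m\geq 2$ by taking $\C = \{\vec 0,\vec e_1\}$ together with, for each of these two outcomes, at least $n/m$ voters whose ballot is its bitwise complement; the restriction $m\geq 2$ is essential, since for $m=1$ the threshold $n/m = n$ would require $2n > n$ voters.)
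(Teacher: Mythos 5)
Your construction is correct and is essentially the paper's own counterexample up to relabelling: the paper uses $\C=\{(0,1),(0,0)\}$ with ballots $(1,1)$ and $(1,0)$, i.e., two feasible outcomes at Hamming distance one with each voter the bitwise complement of one of them, exactly as you do. Your verification of the singleton cohesive groups and their single-issue deviations is accurate, and your observation that Example~\ref{exm:no_fair_outcome_EJR} fails as a witness because it admits no $|S|=1$ deviations is a correct (and worthwhile) clarification.
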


\begin{proof}
Consider the binary election instance with issues $\I = \{a_1,a_2\}$ and a constraint $\C = \{(0,1),(0,0)\}$. Suppose that $N = \{1,2\}$, where $\vec{b}_{1} = (1,1)$ and $\vec{b}_{2} = (1,0)$. Note that for both outcomes $\vec{w}\in \C$, one voter will have satisfaction of $0$ while being a $T$-cohesive group with an $(S,\vec{w},\C)$-deviation for $|S| = |T| = 1$. As each voter is half of the population, they would have a deviation towards the alternative feasible outcome obtained by `flipping' issue $a_2$, which provides them greater satisfaction than the current one.
\end{proof}

\noindent
%One may note that in the counterexample of Proposition~\ref{prop:cohJR_not_always_satisfiable}, the voters' ballots are not consistent with the constraint $\C$. %An immediate thought might then be that this is a critical feature in the failure of $\cJR$. This is, however, not the case. 
This result still holds if we restrict voters' ballots to be consistent with $\C$ (see  Proposition~\ref{prop:c-JR_may_not_exist_A1}'s counterexample).

Moving on, we propose a restriction to the class of constraints that we consider, given the negative results for the general case. To do so, we introduce notation for the \emph{fixed decisions} for a set of outcomes $C \subseteq \C $, which are the issues in $\I$ whose decisions are equivalent across all the outcomes in $C$. For a set of outcomes $C \subseteq \C$, we represent this as:
\[
\I_{\text{fix}}(C) = \{a_t\in \I\mid  \text{there exists some }d\in D_{t} \text{ such that } w_{t} = d  \text{ for all }\vec{w} \in C\}.
\]

\begin{definition}[No Fixed Decisions (NFD) property]
  A constraint $\C$ has the \emph{NFD} property if $\I_{\text{fix}}(\C) = \emptyset$ holds for $\C$.  
\end{definition}

%\begin{remark}
\noindent
While the NFD property seems rather natural,
%rather a property that should be assumed to hold by default. We 
we argue that there are cases in which considering 
%however, that by doing so, we will neglect 
election instances where decisions that are fixed from the get-go may contribute to the satisfaction of voters and, specifically for our goal, these fixed decisions may aid in giving the voters their fair, proportional representation. 
%For this reason, we do not restrict our study only to election instances where the NFD property holds.
%\end{remark} 

Now, we show that with the NFD property, the $\cEJR$ axiom can always be satisfied, albeit only for election instances where either $|\C|=2$ (i.e., there are only two feasible outcomes) or $m\leq 3$ (i.e., the number of issues is lower than three). We begin with the (essentially) trivial case where the number of feasible outcomes is limited to two.

\begin{proposition}\label{prop:c-EJR_satisfiable_|C|=2_A1}
For election instances $(\prof{B}, \C)$ with $|\C| = 2$ where $\C$ has the $\text{NFD}$ property, $\cEJR$ can always be satisfied.
\end{proposition}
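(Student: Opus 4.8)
The plan is to exhibit, for any instance $(\prof{B},\C)$ with $\C = \{\vec{w}^1,\vec{w}^2\}$ satisfying NFD, an explicit choice of outcome from $\{\vec{w}^1,\vec{w}^2\}$ that provides $\cEJR$, by arguing that at least one of the two feasible outcomes gives every relevant cohesive group a represented voter. First I would record the structural consequence of having exactly two feasible outcomes: let $S^\star = \I \setminus \Agr{\vec{w}^1}{\vec{w}^2}$ be the set of issues on which the two outcomes differ. Since $\C$ has the NFD property, every issue on which $\vec{w}^1$ and $\vec{w}^2$ agree would be a fixed decision, so NFD forces $\I_{\text{fix}}(\C) = \emptyset$, hence $\Agr{\vec{w}^1}{\vec{w}^2} = \emptyset$ and $S^\star = \I$; that is, $\vec{w}^2 = \vec{w}^1$ flipped on every issue (in the binary case, $\vec{w}^2 = \vec{\bar{w}}^1$). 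A key observation follows: for \emph{any} ballot $\vec{b}_i$ and any issue $a_t$, voter $i$ agrees with exactly one of $\vec{w}^1, \vec{w}^2$ on $a_t$ \emph{unless} $\vec{b}_i^t$ is a third alternative agreeing with neither — so in binary instances $\sat{i}{\vec{w}^1} + \sat{i}{\vec{w}^2} = m$ exactly, and in general $\sat{i}{\vec{w}^1} + \sat{i}{\vec{w}^2} \le m$.

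Next I would analyze which groups can have an $(S,\vec{w})$-deviation. Since $|\C| = 2$, the only candidate for the deviated-to outcome $\vec{w}^\star$ is the \emph{other} feasible outcome; so if the current outcome is $\vec{w}^1$, a group $N'$ with an $(S,\vec{w}^1)$-deviation must have every member agreeing with $\vec{w}^2$ on all of $S$ and disagreeing with $\vec{w}^1$ on all of $S$, and $\fixDec{(\vec{w}^1)}{S}{\vec{b}_i} \in \C$ forces this fixed vector to equal $\vec{w}^2$ (since $S \subseteq S^\star = \I$ would require flipping everything — actually we only need $\fixDec{(\vec{w}^1)}{S}{\vec{b}_i} = \vec{w}^2$ which means $\vec{b}_i$ agrees with $\vec{w}^2$ on $S$ and $\vec{w}^1 = \vec{w}^2$ off $S$, impossible unless $S = \I$). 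So in fact a deviation from $\vec{w}^1$ can only occur with $S = \I$, and then the deviating members are exactly the voters whose ballot \emph{is} $\vec{w}^2$; symmetrically for $\vec{w}^2$. The plan is then: pick whichever of $\vec{w}^1, \vec{w}^2$ is the ballot of more voters — say $n_1 = |N(\vec{w}^1)| \ge n_2 = |N(\vec{w}^2)|$ where $N(\vec{w}^j)$ is the set of voters with ballot exactly $\vec{w}^j$ — and return $\vec{w} = \vec{w}^1$. Then the only $T$-cohesive group with a relevant $(S,\vec{w}^1)$-deviation consists of voters with ballot $\vec{w}^2$, requiring $S \subseteq T$ and $S = \I$, hence $T = \I$ and $|T| = m$; but such a group has size $\le n_2 \le n/2 < m \cdot n/m = n$ unless $n_2 = n$, and if $n_2 = n$ then $n_1 = n$ too forcing $\vec{w}^1 = \vec{w}^2$, a contradiction with NFD. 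So no such offending group exists and $\vec{w}^1$ trivially provides $\cEJR$.

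I would then double-check the cohesiveness arithmetic carefully, because the whole argument hinges on it: a $T$-cohesive group needs $|N'| \ge |T| \cdot n/m$; with $T = \I$ this is $|N'| \ge n$, i.e. $N' = N$; but then every voter has ballot $\vec{w}^2$, so $\vec{w}^2$ would be the unanimous ballot and giving $\vec{w}^1$ with $\Agr{\vec{w}^1}{\vec{w}^2} = \emptyset$ would mean every voter has satisfaction $0$ — however, this cannot happen, since if all $n$ voters submit $\vec{w}^2$, then choosing $\vec{w} = \vec{w}^2$ instead gives everyone full satisfaction and is the obvious pick; so the refined plan is to choose $\vec{w}$ to be a feasible outcome matching the ballot of a largest coalition of identical-ballot voters, and if the profile is not unanimous on one feasible ballot, the largest such coalition has size $< n$, killing the only potential $T = \I$ deviator. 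The main obstacle I anticipate is handling ballots that are not themselves feasible (not equal to $\vec{w}^1$ or $\vec{w}^2$) and the non-binary case where a ballot may agree with \emph{neither} feasible outcome on some issue: I need to confirm that such voters can never be part of a group with an $(S,\vec{w})$-deviation (they can't, because the deviation requires $\fixDec{\vec{w}}{S}{\vec{b}_i} \in \C = \{\vec{w}^1,\vec{w}^2\}$, forcing $\vec{b}_i$ to coincide with a feasible outcome on $S$, and combined with $S = \I$ this forces $\vec{b}_i \in \C$), so only voters with $\vec{b}_i \in \{\vec{w}^1, \vec{w}^2\}$ ever matter, and the counting above goes through unchanged.
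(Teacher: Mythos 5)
Your proposal is correct and follows essentially the same route as the paper: NFD with $|\C|=2$ forces the two feasible outcomes to differ on every issue, so any $(S,\vec{w})$-deviation requires $S=\I$, hence $T=\I$ and a $T$-cohesive group must be the entire (unanimous) electorate, in which case the other feasible outcome fully satisfies everyone. Your extra care about infeasible and non-binary ballots, and the "largest identical-ballot coalition" tie-break, are sound refinements of the same argument rather than a different proof.
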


\begin{proof}{}
    Take some feasible outcome $\vec{w}\in \C$. Observe that when $|\C| = 2$, if property $\text{NFD}$ holds, then the two feasible outcomes differ on the decisions of all issues. Thus, it is only possible for $T$-cohesive groups with an $(S,\vec{w},\C)$-deviation for $|S| \leq |T| = m$ to have an allowable deviation from $\vec{w}$ to the only other feasible outcome. This means only the entire voter population has the potential to deviate. And if such deviation to $\vec{w}'$ exists, then outcome $\vec{w}'$ sufficiently represents the entire voter population.
\end{proof}

\noindent
Now we ask the following: can we guarantee $\cEJR$ when $m \leq 3$? We answer in the positive when we restrict ourselves to  binary election instances.

\begin{proposition}\label{prop:c-EJR_satisfiable_|I|<=3_A1}
   For binary election instances $(\prof{B}, \C)$ with $m \leq 3$ where the constraint $\C$ has the NFD property, $\cEJR$ can always be provided.
\end{proposition}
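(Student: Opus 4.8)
Fix a binary instance with $m\leq 3$ and a constraint $\C$ with the NFD property. The argument rests on one structural observation: if, for some $\vec w\in\C$, a $T$-cohesive group $N'$ witnesses a failure of $\cEJR$ through an $(S,\vec w)$-deviation, then, because the members of $N'$ all agree on $S$, the deviated outcome $\vec w^{*}=\fixDec{\vec w}{S}{\vec b_{i}}\in\C$ is common to all of them, it differs from $\vec w$ only on $S$, and every member of $N'$ disagrees with $\vec w$ on all of $S$ and agrees with $\vec w^{*}$ on all of $S$; hence passing from $\vec w$ to $\vec w^{*}$ raises the satisfaction of every member of $N'$ by exactly $|S|$ and changes any other voter's satisfaction by at most $|S|$, while every member has $\sat{i}{\vec w}<|T|\leq m$. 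As $m\leq 3$ only $|T|\in\{1,2,3\}$ can occur.

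I would dispatch the easy profiles first. If all voters submit the same ballot $\vec b$, output a feasible outcome $\vec w$ at minimum Hamming distance from $\vec b$. By the observation above, any $(S,\vec w)$-deviation would flip $\vec w$, on a non-empty set of issues where $\vec b$ disagrees with $\vec w$, to a feasible outcome agreeing with $\vec b$ on strictly more issues --- contradicting minimality. So $\vec w$ admits no deviation whatsoever and provides $\cEJR$ vacuously; note this uses neither $m\leq 3$ nor NFD. The same remark covers $m=1$ (a $|T|=1$ group there consists of all $n$ voters with a common ballot, i.e.\ a uniform profile) and, for a non-uniform profile, every obligation with $|T|=m$ (no group of $n$ identical ballots exists).

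For a non-uniform profile with $m\in\{2,3\}$ only $|T|\in\{1,2\}$ remain, and here I would choose $\vec w\in\C$ greedily: among feasible outcomes minimising the number of \emph{totally unrepresented} voters (those with ballot $\vec{\bar w}$), pick one maximising the number of voters at satisfaction $\geq 2$. The facts to use are: a $|T|=1$ violation of $\vec w$ requires at least $\nicefrac{n}{m}$ totally unrepresented voters plus a feasible single-issue flip of $\vec w$; and a $|T|=2$ violation (only for $m=3$) requires at least $\nicefrac{2n}{3}$ voters that are $T$-agreeing for a two-element $T$, all at satisfaction at most $1$, plus a feasible flip on a subset of $T$ --- and a short check shows such a group must sit inside $\{\vec{\bar w},\vec v_{t}\}$ for a single issue $a_{t}$, with $\vec v_{t}$ the ballot agreeing with $\vec w$ on $a_{t}$ only. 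Now if $|\C|>m$ then, the $|\C|$ distinct complementary ballots being unable to all carry $\geq\nicefrac{n}{m}$ voters, the minimiser has fewer than $\nicefrac{n}{m}$ totally unrepresented voters --- killing $|T|=1$ --- while a deviation witnessing a $|T|=2$ violation would, via the two-ballot description, push up the satisfaction-$\geq 2$ count, contradicting the tie-break; if $|\C|\leq m$ then $|\C|=2$ is Proposition~\ref{prop:c-EJR_satisfiable_|C|=2_A1}, and in the one remaining configuration ($|\C|=3$, each complementary ballot supported by exactly $\nicefrac{n}{3}$ voters) NFD forbids three feasible outcomes two of which are single-flips of a common third --- that would freeze an issue, the very obstruction behind Proposition~\ref{prop:cohJR_not_always_satisfiable} --- so the single-flip relation on $\C$ is a matching and the feasible outcome not incident to its (at most one) edge has neither a $|T|=1$ nor, the three ballots being pairwise far from it, a $|T|=2$ violation.

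I expect the real work to be the $|\C|>m$, $m=3$ step: the choice that neutralises $|T|=1$ violations (few totally unrepresented voters) does not automatically neutralise $|T|=2$ violations, so one must certify that a single feasible outcome does both. This comes down to a finite but fiddly case analysis over the possible deviation shapes ($|S|\in\{1,2\}$, and which single- or double-flip of $\vec w$ is feasible) together with a count of how many ballots can be at satisfaction at most $1$ relative to a candidate outcome; the remaining steps are routine bookkeeping.
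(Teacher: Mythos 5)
Your reduction steps are fine (the structural observation about deviations, the uniform-profile/minimum-Hamming-distance argument covering $m=1$ and all $|T|=m$ obligations, and delegating $|\C|=2$ to Proposition~\ref{prop:c-EJR_satisfiable_|C|=2_A1}), but the core of the hard case is a genuine gap, and the gap is not just the ``fiddly case analysis'' you defer: the greedy selection rule you base it on is false as stated. Concretely, take $m=3$, $n=6$, $\C=\{(0,0,0),(1,1,0),(0,1,1),(1,0,1)\}$ (so $|\C|=4>m$ and NFD holds), with four voters having ballot $(1,1,0)$ and two voters having ballot $(0,0,1)$. Every outcome of $\C$ except $(1,1,0)$ has zero totally unrepresented voters, so the first-stage minimisers are $(0,0,0)$, $(0,1,1)$ and $(1,0,1)$; each of these gives satisfaction $\geq 2$ to exactly the two $(0,0,1)$-voters, so they all tie under your second criterion and any of them may be returned. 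Yet each of them violates $\cEJR$: the four $(1,1,0)$-voters are $T$-cohesive for a two-element $T$ (size $4 = 2\cdot\nicefrac{n}{m}$), each has satisfaction $1<2$, and flipping the outcome on the two issues of $T$ on which they disagree lands on $(1,1,0)\in\C$, a legitimate $(S,\vec{w})$-deviation with $S=T$. The unique outcome providing $\cEJR$ here is $(1,1,0)$, which your primary criterion discards because it leaves the two $(0,0,1)$-voters totally unrepresented. This also pinpoints why the intended contradiction fails: the deviation target of a $|T|=2$ violation does indeed have a higher satisfaction-$\geq 2$ count, but it need not be among the minimisers of the number of totally unrepresented voters, so the lexicographic tie-break is never invoked and no contradiction arises.

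So the approach needs a different selection principle, not more bookkeeping: protecting against $|T|=1$ complaints (few complement-ballot voters) can be directly at odds with the claims of large $|T|=2$ cohesive groups. The paper's proof does not construct an outcome via an optimisation criterion at all; for $m=3$ it argues by exhaustive case analysis on $|T|\in\{1,2,3\}$ and on which outcomes NFD forces into $\C$, showing that whenever a cohesive group could deviate, some feasible outcome (typically the deviation target itself or an NFD-guaranteed outcome) already satisfies all relevant groups. Your $|\C|\leq m$ branch for $|\C|=3$ is similarly sketchy, but the counterexample above is the decisive failure.
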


\begin{proof}{} The case for $m=1$ is trivially satisfied (since with only one issue, we get there only being two feasible outcomes and thus, any possible deviating group has to be of size~$n$) so we present the proof as two separate cases where the number of issues is either $m=2$ or $m=3$.

\myparagraph{\textbf{Case }$\boldsymbol{m = 2}$:}  Observe that for two binary issues (i.e., $m = 2$), there are $16$ non-empty constraints $\C$, but only $7$ of them satisfy the NFD property. Take one such $\C$ and a feasible outcome $\vec{w} = (d_{1},d_{2}) \in \C$ where $d_{1},d_{2} \in \{0,1\}$. 

We have to show that for every $T$-cohesive groups of voters $N'\subseteq N$ for some $T\subseteq \I$ with an $(S,\vec{w},\C)$-deviation for some non-empty $S\subseteq T$, there exists a voter $i\in N'$ such that $\sat{i}{\vec{w}} \geq |T|$. Since $|\I| = m = 2$, the agreement among voters (i.e., the set $T$) and their potential deviation (i.e., the set $S$) may concern at most two issues: namely, $|S|, |T| \in\{1,2\}$. 

First, consider $|T| = 1$. Since $|S| \leq |T|$ and $S \neq \emptyset$, we have $|S| = 1$ for any $T$-cohesive group (which is thus of size $|N'| \geq \nicefrac{n}{2}$) wishing to perform an $(S,\vec{w},\C)$-deviation from $\vec{w}$ to some other feasible outcome $\vec{w}'\in\C$. If there is a voter $i \in N'$ such that $\sat{i}{\vec{w}} \geq 1$, group $N'$ would be sufficiently satisfied and $\cEJR$ would be ensured. Otherwise, we have that $\sat{i}{\vec{w}} = 0$ for all $i \in N'$ and they are unanimous on both issues, i.e., $\vec{b}_i = (1-d_{1}, 1-d_{2})$ for all $i \in N'$. There are just two possible outcomes that differ from $\vec{w}$ on only one issue. If neither outcome is in $\C$, then no feasible deviation is possible for $N'$ and we are done. Otherwise, assume without loss of generality that $\vec{w}' = (1-d_{1}, d_{2}) \in \C$. Now, if there is a voter $i \in N \setminus N'$ such that $\sat{i}{\vec{w}'} \geq 1$, then we are done (as the group $N \setminus N'$ would be sufficiently satisfied if it were $T$-cohesive for $|T| = 1$). Else, it means that all voters $j \in N \setminus N'$ are unanimous on ballot $\vec{b}_j = (d_{1}, 1-d_{2})$. But then, since $\C$ satisfies property $\text{NFD}$, there exists some outcome $\vec{w}''\in\C$ such that $\vec{w}''_2 = 1-d_{2}$. Then, $\sat{i}{\vec{w}''} \geq 1$ for all $i \in N$ and no deviation is possible.
    
Now, consider $|T| = m = 2$. For a voter group $N'$ to be $T$-cohesive, they must unanimously agree on both issues and their size must be $|N'| = n$. In order for the group $N'$ to have an ($S,\vec{w}, \C$)-deviation for $|S| \leq |T|$, it must be the case that $\sat{i}{\vec{w}} = 0$ for all $i \in N'$. By property $\text{NFD}$, there must be some outcome $\vec{w}' \neq \vec{w} \in \C$, and thus $\sat{i}{\vec{w}'} \geq 1$ for all $i \in N$.

\myparagraph{\textbf{Case }$\boldsymbol{m = 3}$:} Let $(\prof{B}, \C)$ be an election instance satisfying the conditions in the statement. We now reason on the existence of possible $T$-cohesive groups that are a witness to the violation of $\cEJR$, for each possible size $1 \leq |T| \leq 3$ of the set $T$. 

For $|T|=1$, suppose by contradiction that for all $\vec{w} \in \C$, there is some voter group $N'$ such that $|N'|\geq \nicefrac{n}{3}$ and each voter in $N'$ has satisfaction of~$0$. Thus, for all voters $i \in N'$ we have $\vec{b}_i= \vec{\bar w}$. Moreover, for a $T$-cohesive group with an ($S,\vec{w}$)-deviation for $|S| = |T| = 1$ to be possible, there has to exist a $\vec{w}'\in \C$ whose decisions differ from $\vec{w}$ in exactly one issue, i.e., $\Agr{\vec{w}}{\vec{w}'} = 2$. To fit all these disjoint $T$-cohesive groups for $|T| = 1$, one for each outcome in $\C$, it must be that $n\geq |\C|\cdot \nicefrac{n}{3}$, hence $|\C|\leq 3$ must hold. If $|\C| = 1$, the NFD property cannot be met. If $|\C| = 2$, then the two feasible outcomes cannot differ in the decision of only one issue while also satisfying the NFD property. For $|\C|=3$, to get a $T$-cohesive voter group with an ($S,\vec{w}$)-deviation for $|S| = |T| = 1$ at every $\vec{w}\in\C$, the three feasible outcomes must differ by at most one decision, contradicting the NFD property.

For $|T| = 2$, we only consider ($S,\vec{w}$)-deviations from a $T$-cohesive group $N'$ with $|S| \in \{1,2\}$. Consider the case of $|S|=1$. W.l.o.g., assume that $\vec{w} = (0,0,0)$ and that there exists a $T$-cohesive group $N'$ (where $|N'| \geq n\cdot\nicefrac{2}{3}$) with every voter having satisfaction $\sat{i}{\vec{w}} < 2$, with an ($S,\vec{w}$)-deviation towards some outcome, e.g., $\vec{w}' = (1,0,0)$. So $\sat{i}{\vec{w}'} \geq 1$ holds for at least one voter $i\in N'$. If $N'$ has no ($S,\vec{w}'$)-deviation, then there are not a witness to a violation of $\cEJR$. Otherwise, suppose no voter in $N'$ has satisfaction of $2$ (so all voters in $N'$ have a ballot $(1,1,1)$) and the group $N'$ has an ($S,\vec{w}'$)-deviation, to outcome $\vec{w}'' = (1,1,0)$. Then all voters in $N'$ have satisfaction of at least $2$ with outcome $\vec{w}''$. Suppose that the remaining $\nicefrac{n}{3}$ of the voters $N\setminus N'$ have an ($S,\vec{w}''$)-deviation (so each of these voters derives zero satisfaction from outcome $\vec{w}''$ and all disagree with group $N'$ on the first two issues $a_1$ and $a_2$). This must be a deviation of size $|S|=1$ for the $\nicefrac{n}{3}$ of the voters $N\setminus N'$ to demand it. This deviation can only be to outcome $(1,1,1)$, $(0,1,0)$ or $(1,0,0)$. By the NFD property, we know that one of the outcomes $\{(0,0,1), (0,1,1), (1,0,1),(1,1,1)\}$ must be in $\C$. See that for each of the outcomes $(1,1,1)$, $(1,0,1)$ and $(0,1,1)$, some voter in $N'$ gets at least satisfaction of $2$ while all voters in $N\setminus N'$ get at least $1$ in satisfaction, thus, $\cEJR$ is satisfied. Now, out of these outcomes $\{(0,0,1), (0,1,1), (1,0,1),(1,1,1)\}$, if only outcome $\vec{w}''' = (0,0,1)$ is in $\C$, then group $N'$ has no ($S,\vec{w}'''$)-deviations of size $|S| = 1$ as this is only possible to one of $(0,1,1)$ or $(1,0,1)$. So in this case too, $\cEJR$ is satisfied. 

Now we look at the case for $|S| = 2$. W.l.o.g., consider the outcome $\vec{w} = (0,0,0)$ and assume that there exists a $T$-cohesive group $N'$ (where $|N'| \geq n\cdot\nicefrac{2}{3}$) with an ($S,\vec{w}$)-deviation towards outcome, e.g., $\vec{w}' = (1,1,0)$. Thus, there is some voter $i$ in $N'$ with satisfaction $\sat{i}{\vec{w}'}\geq2$. At this point, the only possible further ($S,\vec{w}$)-deviation could arise for $|S| = 1$ in case there are $\nicefrac{n}{3}$ voters in $N\setminus{N'}$ each have a satisfaction of 0 for $\vec{w}'$, i.e., each has the ballot $(0,0,1)$ and either one of the outcomes in $\{(1,0,0),(0,1,0),(1,1,1)\}$ is in $\C$. Now take instead that $\sat{i}{\vec{w}'} = 2$ and consider two cases where either voter $i$ agrees or disagrees  with the voters in $N\setminus{N'}$ on the decision of issue $a_3$. First, assume that voter $i\in N'$ \emph{agrees} with the voters in $N\setminus{N'}$ on issue $a_3$ (so voter $i$ had the ballot $\vec{b}_{i} = (1,1,1)$). Then if either $(0,1,1)\in\C$ or $(1,1,1)\in\C$ holds, we have that $\cEJR$ is provided. And if $(0,0,1)\in\C$ holds, then voters in $N\setminus{N'}$ are entirely satisfied and the voters in $N'$ may only have an ($S,\vec{w}$)-deviation for $|S| \leq |T| = 2$ if either $(0,1,1)\in\C$ or $(1,1,1)\in\C$ holds (as they only `flip' issues they disagree with), which means that $\cEJR$ is provided. In the second case, assume that voter $i\in N'$ \emph{disagrees} with the voters in $N\setminus{N'}$ on issue $a_3$ and so, voter $i$ had the ballot $\vec{b}_{i} = (1,1,0)$. This means that $\sat{i}{\vec{w}'} = 3$ holds, hence, any outcome that the voters in $N\setminus{N'}$ propose given they have an ($S,\vec{w}$)-deviation for $|S| = 1$, would be one that provides $\cEJR$. 

Finally, a $T$-cohesive group for $|T| = 3$ implies a unanimous profile; if there exists an allowable ($S,\vec{w}$)-deviation for $|S| \leq |T| = 3$, then the outcome in $\C$ maximising the sum of agreement with the profile provides $\cEJR$.
\end{proof}

\noindent
We leave it open whether the above result holds if we do not restrict our view to binary election instances. Unfortunately, we provide an example showing that $\cJR$ cannot be guaranteed in general even when the NFD property holds for binary election instances.

\begin{proposition}\label{prop:c-JR_may_not_exist_A1}
    There exists an election instance $(\prof{B}, \C)$ where $m=8$ and the constraint $\C$ has the NFD property but no $\cJR$ outcome exists.
\end{proposition}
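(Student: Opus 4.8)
The plan is to exhibit a single explicit binary election instance with $m = 4$ and $n = 4$ in which \emph{every} feasible outcome is witnessed to violate $\cJR$, so that no feasible outcome can provide it. The guiding observation is that in a binary instance a voter $i$ satisfies $\sat{i}{\vec{w}} = 0$ exactly when $\vec{b}_{i} = \vec{\bar w}$, so a voter can be fully dissatisfied with at most one outcome. A witness to a $\cJR$-violation at $\vec{w}$ must be a $T$-cohesive group (for some singleton $T$) with an $(S,\vec{w})$-deviation and every member having satisfaction $0$; hence it must consist of at least $\nicefrac{n}{m}$ voters all holding the ballot $\vec{\bar w}$, and $\vec{w}$ must have a feasible neighbour at Hamming distance~$1$ (so that such a group can in fact deviate). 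Forcing a violation at all of $\C$ at once therefore needs pairwise-disjoint ``complement groups'', one per feasible outcome, each of size $\geq\nicefrac{n}{m}$, which already pins down $|\C|\leq m$.

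Concretely, I would set $\C = \{(0,0,0,0),\,(1,0,0,0),\,(1,1,1,1),\,(0,1,1,1)\}$, the union of the two Hamming-distance-$1$ edges $\{(0,0,0,0),(1,0,0,0)\}$ and $\{(1,1,1,1),(0,1,1,1)\}$, and then check NFD: issue $a_1$ takes both values because of the flips inside each edge, and issues $a_2,a_3,a_4$ take both values because one edge sits at $0$ on these coordinates and the other at $1$, so $\I_{\text{fix}}(\C)=\emptyset$. I would then place exactly one voter at each of the four ballots $(1,1,1,1),(0,1,1,1),(0,0,0,0),(1,0,0,0)$, i.e.\ at the bitwise complements of the four feasible outcomes.

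For the verification, fix $\vec{w}\in\C$ and let $N'=\{i\}$ be the unique voter with $\vec{b}_{i}=\vec{\bar w}$. Since $1\geq 1\cdot\nicefrac{n}{m}$, $N'$ is $\{a_1\}$-cohesive, and it has a $(\{a_1\},\vec{w})$-deviation: voter $i$ disagrees with $\vec{w}$ on $a_1$, and $\fixDec{\vec{w}}{\{a_1\}}{\vec{b}_{i}}$ is precisely the other endpoint of $\vec{w}$'s edge, which lies in $\C$. But $\sat{i}{\vec{w}}=|\Agr{\vec{\bar w}}{\vec{w}}|=0<1$, so $\vec{w}$ fails $\cJR$. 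As this holds for every $\vec{w}\in\C$ and any rule must output a feasible outcome, no $\cJR$ outcome exists.

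I expect the main obstacle to be simultaneously reconciling the counting bound $|\C|\leq m$, the NFD requirement, and the condition that every feasible outcome has a Hamming-$1$ neighbour in $\C$: a single short cycle in the hypercube varies only two coordinates and so cannot meet NFD when $m=4$, while adding outcomes to repair NFD threatens $|\C|\leq m$. The resolution is to use two \emph{disjoint} edges placed at complementary ``heights'' of the cube, which together touch all four coordinates (securing NFD) yet keep $|\C|=4=m$ and leave each outcome a neighbour within its own edge. (Note that the voters above happen to cast constraint-consistent ballots, which the model permits but does not require, so nothing extra is needed there.)
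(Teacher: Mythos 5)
Your construction is correct, and it takes a genuinely different route from the paper's. The paper uses $m=8$ with $\C=\{0^8,\,0^21^6,\,1^8,\,1^20^6\}$ and four voters whose ballots coincide with the four feasible outcomes; there, the fully dissatisfied voter at each outcome only has a deviation of size $|S|=2$ (no two feasible outcomes are at Hamming distance $1$), so the argument really exercises the $|T|=2$ cohesive case and is the same example the paper recycles for $\cEJR$-1. Your instance ($m=4$, two disjoint Hamming-distance-$1$ edges at complementary heights of the cube, voters placed at the bitwise complements) instead produces, at every $\vec{w}\in\C$, a singleton group that is $\{a_1\}$-cohesive (since $\nicefrac{n}{m}=1$), has satisfaction $0$, and has a genuine $(\{a_1\},\vec{w})$-deviation to the other endpoint of its edge. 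This matches the literal trigger $|S|=|T|=1$ in Definition~\ref{def:cJR} exactly, so your counterexample is, if anything, a tighter fit to the axiom as stated; it also achieves the minimal $m=4$. The verification steps you give (NFD via the two edges touching all four coordinates, and the per-outcome violation via $\fixDec{\vec{w}}{\{a_1\}}{\vec{b}_{i}}$ landing on the edge partner) all check out, and your side remarks about $|\C|\leq m$ are only motivational and not needed for soundness.
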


\begin{proof}
    Suppose there is a binary election instance with a constraint $\C = \{\vec{w}_1, \vec{w}_2, \vec{w}_3, \vec{w}_4\}$ for $m=8$ such that $\vec{w}_1 = (0,0,0,\ldots,0)$, $\vec{w}_2 = (0,0,1,\ldots,1)$, $\vec{w}_3 = (1,1,1,\ldots,1)$ and $\vec{w}_4 = (1,1,0,\ldots,0)$. Consider now a profile of four voters where $\vec{b}_i = \vec{w}_i$. Given that $m=8$, note that for every outcome $\vec{w} \in \C$, there exists some voter that deserves $2$ in satisfaction by being $T$-cohesive for $|T| = 2$ with an $(S,\vec{w},\C)$-deviation but with zero in satisfaction. And by $\cJR$, such a voter would be entitled to at least $1$ in satisfaction, so there is no outcome in $\C$ that provides $\cJR$.
\end{proof}

\noindent
We now turn our attention towards a weakening of $\cEJR$ that takes inspiration from the `up-to-one' relaxation of EJR studied in the context of participatory budgeting \citep{pierczynskiSP2021propPB,reyM2023PBsurvey}.

\begin{definition}[$\cEJR$-1]\label{def:c-EJR-1}
    Given an election $(\prof{B}, \C)$, an outcome $\vec{w}$ provides \emph{ $\cEJR$-1} if for every $T$-cohesive group of voters $N'\subseteq N$ for some $T\subseteq \I$ with an $(S,\vec{w},\C)$-deviation for some non-empty $S\subseteq T$, there exists a voter $i\in N'$ such that:
    \[
    \sat{i}{\vec{w}} \geq |T| -1
    \]
\end{definition}

\noindent
Intuitively, this axiom states that a fair outcome would be one where all groups that are can demand a certain level of satisfaction as defined by $\cEJR$, are a single decision away from obtaining their deserved satisfaction. As $\cEJR$ implies $\cEJR$-1, the results of Propositions \ref{prop:c-EJR_satisfiable_|C|=2_A1} and \ref{prop:c-EJR_satisfiable_|I|<=3_A1} immediately apply to $\cEJR$-1.

\begin{corollary}\label{cor:c-EJR-1_satisfiable_|C|=2_A1}
    For binary election instances $(\prof{B}, \C)$ with $|\C| = 2$ where the constraint $\C$ has the NFD property, $\cEJR$-1 can always be provided.
\end{corollary}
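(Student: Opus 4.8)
The plan is to obtain this directly from Proposition~\ref{prop:c-EJR_satisfiable_|C|=2_A1} together with the elementary fact that $\cEJR$ implies $\cEJR$-1. First I would make the implication between the two axioms precise. Fix an election instance $(\prof{B},\C)$ and an outcome $\vec{w}$ that provides $\cEJR$. For any $T$-cohesive group $N'\subseteq N$ (for some $T\subseteq\I$) that has an $(S,\vec{w})$-deviation for some $S\subseteq T$, Definition~\ref{def:cEJR} guarantees a voter $i\in N'$ with $\sat{i}{\vec{w}}\geq|T|$, and hence in particular $\sat{i}{\vec{w}}\geq|T|-1$, which is exactly the requirement of Definition~\ref{def:c-EJR-1}. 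Since this holds for every such group, $\vec{w}$ also provides $\cEJR$-1, so any $\cEJR$ outcome is an $\cEJR$-1 outcome.

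Next I would invoke Proposition~\ref{prop:c-EJR_satisfiable_|C|=2_A1}, which states that for every election instance $(\prof{B},\C)$ with $|\C|=2$ whose constraint $\C$ has the NFD property, there is a feasible outcome providing $\cEJR$. This statement is not restricted to binary instances, so in particular it applies to binary election instances with $|\C|=2$ and $\C$ satisfying NFD. Combining this with the implication above, such an instance admits a feasible outcome that provides $\cEJR$, and that same outcome provides $\cEJR$-1, which establishes the corollary.

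There is no genuine obstacle: the only thing that requires checking is the first step, and that is a one-line comparison of Definitions~\ref{def:cEJR} and~\ref{def:c-EJR-1}. The restriction to binary instances in the corollary's statement only weakens it relative to Proposition~\ref{prop:c-EJR_satisfiable_|C|=2_A1}, so nothing beyond these two observations is needed.
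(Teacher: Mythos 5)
Your proposal is correct and matches the paper's own justification: the paper derives this corollary exactly by noting that $\cEJR$ implies $\cEJR$-1 (since $\sat{i}{\vec{w}}\geq|T|$ entails $\sat{i}{\vec{w}}\geq|T|-1$) and then applying Proposition~\ref{prop:c-EJR_satisfiable_|C|=2_A1}, which indeed is not restricted to binary instances. Nothing further is needed.
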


\begin{corollary}\label{cor:c-EJR-1_satisfiable_|I|=3_A1}
    For binary election instances $(\prof{B}, \C)$ with $m \leq 3$ where the constraint $\C$ has the NFD property, $\cEJR$-1 can always be provided.
\end{corollary}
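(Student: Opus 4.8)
The plan is to derive the corollary directly from Proposition~\ref{prop:c-EJR_satisfiable_|I|<=3_A1} together with the trivial observation that $\cEJR$ implies $\cEJR$-1, exactly as the sentence preceding the corollary indicates. First I would recall that Definitions~\ref{def:cEJR} and \ref{def:c-EJR-1} quantify over precisely the same family of potential witnesses---namely $T$-cohesive groups $N' \subseteq N$ (for some $T \subseteq \I$) that admit an $(S,\vec{w})$-deviation for some $S \subseteq T$---and that the two axioms differ only in the satisfaction threshold they demand of some member of the witnessing group: $\cEJR$ requires a voter $i \in N'$ with $\sat{i}{\vec{w}} \geq |T|$, while $\cEJR$-1 requires only $\sat{i}{\vec{w}} \geq |T| - 1$. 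Since $|T| \geq |T| - 1$ always holds, any voter that witnesses the $\cEJR$ guarantee for a given group $N'$ also witnesses the $\cEJR$-1 guarantee for that same group; hence every outcome providing $\cEJR$ also provides $\cEJR$-1.

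With this implication in hand the corollary is immediate: given a binary election instance $(\prof{B},\C)$ with $m \leq 3$ whose constraint $\C$ has the NFD property, Proposition~\ref{prop:c-EJR_satisfiable_|I|<=3_A1} supplies an outcome $\vec{w} \in \C$ that provides $\cEJR$, and by the previous paragraph this same $\vec{w}$ provides $\cEJR$-1. No new construction or case analysis is needed beyond what was already done for $\cEJR$.

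There is no genuine obstacle here---the statement is a corollary in the strict sense rather than a result requiring fresh ideas. The only point worth verifying is that the two axioms really do range over identical witness sets, so that the weakening lies purely in the numerical bound $|T|$ versus $|T|-1$ and not in the shape of $S$, the inclusion $S \subseteq T$, or the $T$-cohesiveness condition; a direct comparison of Definitions~\ref{def:cEJR} and \ref{def:c-EJR-1} confirms this, so the one-line argument goes through. (Corollary~\ref{cor:c-EJR-1_satisfiable_|C|=2_A1} follows in exactly the same way from Proposition~\ref{prop:c-EJR_satisfiable_|C|=2_A1}.)
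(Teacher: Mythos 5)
Your proposal is correct and matches the paper exactly: the paper derives this corollary from Proposition~\ref{prop:c-EJR_satisfiable_|I|<=3_A1} via the observation that $\cEJR$ implies $\cEJR$-1, since the two axioms quantify over the same witnesses and differ only in the threshold $|T|$ versus $|T|-1$. Nothing further is needed.
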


\noindent
Note that for the computational result for $\cEJR$ in Proposition~\ref{prop:c-EJR_poly_to_find}, a simple alteration of the proof given for Proposition~\ref{prop:c-EJR_poly_to_find} (replacing the value $|T|$ with $|T|-1$ in the final satisfaction check) yields a corresponding computational result for $\cEJR$-1 that is similarly (negatively) impacted by the number of issues in the election instance. 

\begin{proposition}\label{prop:c-EJR-1_poly_to_find}
 Given an election instance $(\prof{B},\C)$ and an outcome $\vec{w}\in \C$, there exists an algorithm that decides in $O(2(\max_{t\in[m]}|D_{t}|)^{m}|\C|^{3}mn)$ time whether outcome $\vec{w}$ provides $\cEJR$-1. 
\end{proposition}

\noindent
Next, for the result that shows that $\cEJR$ can be provided when $m=2$ given that NFD holds (see Proposition~\ref{prop:c-EJR_satisfiable_|I|<=3_A1}), we can show something stronger for $\cEJR$-1 by dropping the assumption that the NFD property holds.

\begin{proposition}\label{prop:c-EJR-1_always_satisfiable_m=2}
        For election instances $(\prof{B}, \C)$ with $m = 2$, $\cEJR$-1 can always be provided.
\end{proposition}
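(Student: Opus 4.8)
The plan is to exploit that with $m=2$ any set $T$ occurring in the definition of $\cEJR$-1 satisfies $1\leq|T|\leq 2$, and to treat the two sizes separately. The case $|T|=1$ is immediate: $\cEJR$-1 then only asks for a voter $i\in N'$ with $\sat{i}{\vec{w}}\geq|T|-1=0$, which holds for every voter and every outcome. Hence a $T$-cohesive group with $|T|=1$ is never a witness to a violation, no matter which feasible outcome we return, so these groups impose no restriction on our choice of $\vec{w}$.

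Thus the whole statement reduces to controlling $T$-cohesive groups with $T=\I$ (the only $T$ of size $2$). Here I would note that such a group $N'$ must satisfy $|N'|\geq 2\cdot\nicefrac{n}{m}=n$, forcing $N'=N$, and that $T$-agreeingness for $T=\I$ forces $\Agr{\vec{b}_i}{\vec{b}_j}=\I$ for all $i,j$, i.e.\ the profile is unanimous, say on ballot $\vec{b}$. Consequently, if the profile is not unanimous there is no $T$-cohesive group with $|T|=2$ at all, and then any $\vec{w}\in\C$ already provides $\cEJR$-1. It remains to handle the unanimous case.

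For a unanimous profile with common ballot $\vec{b}$, I would split according to whether $\C$ contains an outcome that agrees with $\vec{b}$ on at least one issue. If it does, choose such a $\vec{w}$; then $\sat{i}{\vec{w}}=|\Agr{\vec{b}}{\vec{w}}|\geq 1=|T|-1$ for every voter, so the unique all-voter $T$-cohesive group for $|T|=2$ is adequately represented whether or not it has a deviation, and $\cEJR$-1 holds. Otherwise every feasible outcome disagrees with $\vec{b}$ on both issues; I would then pick an arbitrary $\vec{w}\in\C$ and show that $N$ has no $(S,\vec{w})$-deviation whatsoever: for any $\emptyset\neq S\subseteq\I$ the candidate outcome $\fixDec{\vec{w}}{S}{\vec{b}}$ carries the value $\vec{b}^{t}$ in at least one coordinate $a_t\in S$, while by assumption no feasible outcome carries $\vec{b}^{t}$ in coordinate $a_t$; hence $\fixDec{\vec{w}}{S}{\vec{b}}\notin\C$, the third requirement of an $(S,\vec{w})$-deviation fails, and $\cEJR$-1 is satisfied vacuously.

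I do not anticipate a real obstacle; the most delicate points are the size bookkeeping that forces $N'=N$ and unanimity in the $|T|=2$ case, and the tacit assumption that $\C\neq\emptyset$ so that some outcome can be returned. I would also remark that this argument nowhere uses the NFD property---which is precisely why it strengthens the $m=2$ instance of Proposition~\ref{prop:c-EJR_satisfiable_|I|<=3_A1}---and that it does not rely on the issues being binary.
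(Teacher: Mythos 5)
Your proof is correct and follows essentially the same route as the paper's: dismiss $|T|=1$ groups via the trivial bound $\sat{i}{\vec{w}}\geq |T|-1=0$, and reduce $|T|=2$ to a unanimous profile with $N'=N$, where either some feasible outcome agrees with the common ballot on at least one issue (and can be output) or no $(S,\vec{w})$-deviation is feasible at all. If anything, your explicit dichotomy on $\C$ and your treatment of $|S|=1$ deviations in the $|T|=2$ case are slightly more complete than the paper's argument, which only spells out the $|S|=|T|=2$ sub-case.
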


\begin{proof}{}
Consider an election over two issues, where a $T$-cohesive group of voters has an $(S,\vec{w},\C)$-deviation for some outcome $\vec{w}$, as per Definition~\ref{def:c-EJR-1}. Observe that, when $m = 2$, $(S,\vec{w},\C)$-deviation are only possible for $|S|\in\{1,2\}$. 
Take a $T$-cohesive group $N'$ for $|T| = 1$ with an $(S,\vec{w},\C)$-deviation from $\vec{w}$ to some other feasible outcome $\vec{w}'\in\C$. Even if $\sat{i}{\vec{w}} = 0$ for every voter $i \in N'$, we have $\sat{i}{\vec{w}} \geq |T| - 1 = 1-1 = 0$, and thus $\cEJR$-1 is satisfied.
Take now a $T$-cohesive group $N'$ for $|T| = 2$: for them to deviate, it must be the case that $N' = N$, and $\sat{i}{\vec{w}} = 0$ for all $i \in N$. If they have an $(S,\vec{w},\C)$-deviation for $|S| = |T| = 2$, the outcome $\vec{w'}$ they wish to deviate to must increase the satisfaction of each voter by at least $1$, which thus satisfies $\sat{i}{\vec{w}} \geq |T| - 1 = 2-1 = 1$, and thus $\cEJR$-1.
\end{proof}

\noindent
Can we show that an outcome providing $\cEJR$-1 always exists when there are more than three issues, unlike for $\cEJR$? Unfortunately, this is not the case, even assuming property $\text{NFD}$, as the same counterexample used to prove Proposition~\ref{prop:c-JR_may_not_exist_A1} yields the following (so also for binary election instances).

\begin{proposition}\label{prop:c-EJR-1_may_not_exist_A1}
     There exists an election instance $(\prof{B}, \C)$ where $m = 8$ and the constraint $\C$ has the NFD property but there exists no outcome that provides $\cEJR$-1.
\end{proposition}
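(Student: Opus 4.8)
The plan is to reuse, unchanged, the instance built in the proof of Proposition~\ref{prop:c-JR_may_not_exist_A1}: the binary election with $m=8$ issues, constraint $\C=\{\vec{w}_1,\vec{w}_2,\vec{w}_3,\vec{w}_4\}$ where $\vec{w}_1=(0,\dots,0)$, $\vec{w}_2=(0,0,1,\dots,1)$, $\vec{w}_3=(1,\dots,1)$, $\vec{w}_4=(1,1,0,\dots,0)$, and the four-voter profile with $\vec{b}_i=\vec{w}_i$. First I would record the easy structural facts: the instance is binary with $m=8>3$; $\C$ satisfies NFD, since issues $a_1,a_2$ take value $0$ in $\vec{w}_1,\vec{w}_2$ and value $1$ in $\vec{w}_3,\vec{w}_4$ while each of $a_3,\dots,a_8$ takes value $0$ in $\vec{w}_1,\vec{w}_4$ and value $1$ in $\vec{w}_2,\vec{w}_3$, so no issue is constant over $\C$; $\C$ is closed under bitwise complementation, with $\vec{w}_3=\overline{\vec{w}_1}$ and $\vec{w}_4=\overline{\vec{w}_2}$; and $\vec{w}_1,\vec{w}_4$ differ only on $\{a_1,a_2\}$, as do $\vec{w}_2,\vec{w}_3$.

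Throughout I would work with the fixed choice $T=S=\{a_1,a_2\}$. Since $n=4$ and $m=8$, we have $|T|\cdot\nicefrac{n}{m}=1$, so every single-voter group $N'=\{j\}$ is $T$-cohesive: it is $T$-agreeing vacuously, and $|N'|=1\geq 1$. Note also that $S=T$ meets the requirement ``$S\subseteq T$'' of Definition~\ref{def:c-EJR-1}.

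The core of the argument is then a four-case check. For each $\vec{w}_k\in\C$, let $j$ be the voter whose ballot is $\overline{\vec{w}_k}$ (namely $j=3,4,1,2$ for $k=1,2,3,4$, using closure under complementation); then $\sat{j}{\vec{w}_k}=0$, and in particular $j$ disagrees with $\vec{w}_k$ on $a_1$ and $a_2$. I would verify that $\{j\}$ has an $(\{a_1,a_2\},\vec{w}_k)$-deviation: the $T$-agreement condition is vacuous for a singleton, the disagreement condition holds since $\vec{b}_j$ differs from $\vec{w}_k$ everywhere, and the feasibility condition holds because flipping the first two coordinates of $\vec{w}_k$ to those of $\vec{b}_j$ lands back in $\C$ --- explicitly $\fixDec{\vec{w}_1}{\{a_1,a_2\}}{\vec{b}_3}=\vec{w}_4$, $\fixDec{\vec{w}_2}{\{a_1,a_2\}}{\vec{b}_4}=\vec{w}_3$, $\fixDec{\vec{w}_3}{\{a_1,a_2\}}{\vec{b}_1}=\vec{w}_2$, and $\fixDec{\vec{w}_4}{\{a_1,a_2\}}{\vec{b}_2}=\vec{w}_1$, each an immediate consequence of the ``differ only on $\{a_1,a_2\}$'' fact above. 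Since $\sat{j}{\vec{w}_k}=0<|T|-1=1$ and $N'=\{j\}$ has no other member, $\vec{w}_k$ violates $\cEJR$-1. As $k$ ranges over all of $\C$, no feasible outcome provides $\cEJR$-1, which is exactly the claim (and the instance is binary and has NFD, so the parenthetical strengthening holds too).

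I do not anticipate a real obstacle: the substance of the proof is just recognising that the Proposition~\ref{prop:c-JR_may_not_exist_A1} counterexample already suffices, and the one delicate point is the bookkeeping in the four membership checks $\fixDec{\vec{w}_k}{\{a_1,a_2\}}{\vec{b}_j}\in\C$, which is precisely what the two-block design of that counterexample --- together with the choice $m=8$, which makes a $2$-cohesive group cost only a single voter --- was engineered to guarantee.
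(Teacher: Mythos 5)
Your proposal is correct and takes exactly the route the paper intends: it reuses the $m=8$, four-outcome counterexample from Proposition~\ref{prop:c-JR_may_not_exist_A1} and, for each feasible outcome, exhibits a singleton $\{a_1,a_2\}$-cohesive group (possible since $2\cdot\nicefrac{n}{m}=1$) with a feasible $(\{a_1,a_2\},\vec{w})$-deviation and satisfaction $0<|T|-1=1$. The four explicit membership checks $\fixDec{\vec{w}_k}{\{a_1,a_2\}}{\vec{b}_j}\in\C$ are the only nontrivial step and you verify them correctly.
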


\noindent
We demonstrate that the challenge of satisfying $\cEJR$-1 lies in the constraints. In fact, we show that in the setting without constraints, it is always possible to find an outcome that provides $\cEJR$-1. To do so, we define the constrained version of MES that has been studied for the public-decision setting without constraints. Our adaptation allows for the prices associated with fixing the outcome's decisions on issues to vary. This contrasts with the unconstrained MES that fixes the prices of every issue's decision to $n$ from the onset. This pricing is determined by a particular pricing type $\lambda:\I\times (D_{1}\cup\ldots\cup D_{m})\rightarrow \mathbb{R}_{\geq 0}$ which maps an issue-decision pair to a non-negative price. 

\begin{definition}[$\cMES$]
The rule runs for at most $m$ rounds. Each voter has a budget of $m$. In every round, for every undecided issue $a_t$ in a partial outcome $\vec{w}^{*}$, we identify those issue-decision pairs $(a_t,d)$ where fixing some decision $d\in D_{t}$ on issue $a_t$ allows for a feasible outcome to be returned in future rounds. Otherwise, for every such pair $(a_t,d)$, we calculate the minimum value for $\rho_{(a_t,d)}$  such that if each voter in $N(a_t,d)$ were to pay either $\rho_{(a_t,d)}$ or the remainder of their budget, then these voters could afford to pay the price $\lambda(a_t,d)$ (determined by the pricing type $\lambda$). If there exists no such value for $\rho_{(a_t,d)}$, then we say that the issue-decision pair $(a_t,d)$ is \emph{not affordable} in round, and if in a round, there are no affordable issue-decision pairs, the rule stops. 
Otherwise, we update $\vec{w}^{*}$ by setting decision $d$ on issue $a_t$ for the pair $(a_t,d)$ with a minimal value $\rho_{(a_t,d)}$ (breaking ties arbitrarily, if necessary) and have each voter in $N(a_t,d)$ either paying $\rho_{(a_t,d)}$, or the rest of their budget. Note that $\cMES$ may terminate with not all issues being decided and we assume that all undecided issues are decided arbitrarily. 
\end{definition}

\noindent
A natural candidate for a pricing type is the standard pricing of unconstrained MES where the price for every issue-decision pair $(a_t,d)$ is set to $\lambda(a_t,d) = n$. We refer this pricing as \emph{unit} pricing $\lambda_{\text{unit}}$.

\begin{example}
    Suppose there are three issues $\I = \{a_1, a_2, a_3\}$ with domains $D_{t} = \{d_{1},d_{2},d_{3}\}$ for all $t\in\{1,2,3\}$. Then take the constraint to be $\C  = \{(d_{1},d_{2},d_{2}),(d_{1},d_{3},d_{3}),(d_{3},d_{2},d_{3})\}$. Then, suppose there are three voters $N = \{1,2,3\}$ with ballots $\vec{b}_{1} = (d_{1},d_{2},d_{1})$, $\vec{b}_{2} = (d_{1},d_{2},d_{2})$ and $\vec{b}_{3} = (d_{3},d_{3},d_{2})$. We consider the execution of $\cMES$ with a unit pricing $\lambda_{\text{unit}}$ so we have that each issue costs $n = 3$ and each voter has a budget of $m= 3$. In every round, setting any decision would cost voter~$3$ has a price of $3$ as no other voter agrees with them on any decision. Now, take the first round. Voter group $\{1,2\}$ would set either issue~$a_1$ to decision $d_{1}$, or issue~$a_2$ to decision $d_{2}$, at a cost of $\nicefrac{3}{2}$ to each of them. Suppose it was the latter and thus, outcome $(d_{1},d_{3},d_{3})$ is no longer feasible. In the second round, voter~$3$ setting either issue~$a_1$ or issue~$a_3$ to decision $d_{3}$ would be feasible with respect to the constraint, however, the rule favours voter group $\{1,2\}$ setting issue $a_1$ to $d_{1}$ as they (again) pay $\nicefrac{3}{2}$ each. Thus, outcome $(d_{3},d_{2},d_{3})$ is also not feasible. As voter~$3$ does not approve of a decision that remains feasible and the voters in $\{1,2\}$ have spent the of their budgets, the rule terminates with issue~$a_3$ not being set and the partial outcome $(d_{1},d_{2},\_)$ is returned. However, note that with $(d_{1},d_{2},d_{2})$ there is a feasible completion of this partial outcome in $\C$.
\end{example}

\noindent
Next, we show that in unconstrained elections, we can use $\cMES$ to provides fairness guarantees that are `close to' $\cEJR$-1. We first define this approximate version of $\cEJR$.

\begin{definition}[$\alpha\text{-}{\cEJR}\text{-}\beta$]
    Given an election $(\prof{B}, \C)$, some $\alpha\in (0,1]$ and some positive integer $\beta$, an outcome $\vec{w}$ provides \emph{$\alpha\text{-}{\cEJR}\text{-}\beta$} if for every $T$-cohesive group of voters $N'\subseteq N$ for some $T\subseteq \I$ with an $(S,\vec{w},\C)$-deviation for some non-empty $S\subseteq T$, there exists a voter $i\in N'$ such that:
    \[
    \sat{i}{\vec{w}} \geq \alpha\cdot|T| - \beta.
    \]
\end{definition}

\noindent
The multiplicative and additive factors in this axiom allow us to measure how well rules satisfy this notion even if they fall short of providing the ideal representation. Now, here is our result for $\cMES$.

\begin{proposition}\label{prop:MES_satisfies_EJR-1}
    For election instances, when $\C = \{0,1\}^{m}$, $\cMES$ with unit pricing $\lambda_{\text{unit}}$ satisfies $\nicefrac{2}{(\max_{t\in[m]}|D_{t}|)}\text{-}{\cEJR}\text{-}1$.
\end{proposition}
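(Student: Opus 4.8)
The plan is to first reduce the claim to a cleaner statement, then run the standard Method-of-Equal-Shares budget analysis, adapted to this setting.

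\emph{Reduction.} In the unconstrained binary case an $(S,\vec{w})$-deviation is essentially ``free'': the feasibility condition $\fixDec{\vec{w}}{S}{\vec{b}_i}\in\C$ holds automatically because $\C=\{0,1\}^m$. So suppose some $T$-cohesive group $N'$ has every member at satisfaction at most $|T|-2$. Writing $\ell=|T|$ and $T^-=\{a_t\in T:\vec{b}_i^{t}\neq w_t\}$ (well defined since $N'$ is $T$-agreeing), the set $T^+=T\setminus T^-$ of issues on which $N'$ agrees with $\vec{w}$ satisfies $|T^+|\leq\sat{i}{\vec{w}}\leq\ell-2$, hence $|T^-|\geq 2$; and $S:=T^-$ is then an $(S,\vec{w})$-deviation with $\emptyset\neq S\subseteq T$, witnessing a violation of $\cEJR$-1. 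Conversely, if some member has satisfaction at least $|T|-1$ the axiom is met for $N'$ regardless of deviations. Thus it suffices to show: the outcome $\vec{w}$ returned by $\cMES$ with $\lambda_{\text{unit}}$ satisfies that every $T$-cohesive $N'$ contains a voter with $\sat{i}{\vec{w}}\geq|T|-1$. Assume for contradiction a $T$-cohesive $N'$ with $\ell=|T|\geq 2$ and all members at satisfaction $\leq\ell-2$.

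\emph{Bookkeeping facts about $\cMES$ under $\lambda_{\text{unit}}$ (unconstrained, binary).} (i) Each voter starts with budget $m$, each fixed issue--decision extracts exactly $n$ from the electorate, and a voter pays in a round only if the decision fixed that round agrees with its ballot, hence in at most $\sat{i}{\vec{w}}$ rounds overall. (ii) The cost-shares $\rho_{(a_t,d)}$ used across successive rounds are non-decreasing. (iii) No issue--decision pair is ever barred for feasibility, and after $k\leq m-2$ selections the total remaining budget is $(m-k)n\geq 2n$, which, split between the two decisions of any still-undecided issue, leaves at least one of them affordable; hence $\cMES$ performs at least $m-1$ selections, so at most one issue is decided arbitrarily and at termination $\sum_{i\in N}\text{rem}(i)\leq n$ (in particular $\sum_{i\in N'}\text{rem}(i)\leq n$).

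\emph{The squeeze.} Since $|T^-|\geq 2$ and at most one issue is decided arbitrarily, some $a_{t_0}\in T^-$ is fixed by the rule, necessarily to $w_{t_0}\neq\vec{b}_i^{t_0}$, i.e.\ against $N'$'s unanimous preferred decision $d_{t_0}$ on $a_{t_0}$. Let $r^*$ be the last round in which any voter of $N'$ pays (it exists: otherwise every $i\in N'$ keeps budget $m$ and $\sum_{i\in N'}\text{rem}(i)=m|N'|\geq\ell n\geq 2n>n$ contradicts fact (iii)), and set $\sigma=\rho_{r^*}$. By (ii) every payment ever made by a voter of $N'$ is at a cost-share at most $\sigma$, and by (i) each makes at most $\ell-2$ of them, so $\text{rem}(i)\geq m-(\ell-2)\sigma$ for all $i\in N'$; combined with $|N'|\geq\ell n/m$ and $\sum_{i\in N'}\text{rem}(i)\leq n$ this forces $\sigma$ to be large. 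On the other hand, at the round where $a_{t_0}$ is fixed against $N'$, the pair $(a_{t_0},d_{t_0})$ is an available candidate whose supporter set contains $N'$, and every $i\in N'$ still retains budget at least $m-(\ell-2)\sigma$ there (budgets only decrease, and $N'$'s are frozen after $r^*$); with $|N'|\geq\ell n/m$ this pair would be affordable at cost-share at most $n/|N'|\leq m/\ell$, strictly below the large cost-share actually used at that round — contradicting that $\cMES$ picks a cheapest affordable pair each round. The case $\ell=2$ is already closed, and the single point of slack that yields $|T|-1$ rather than $|T|$ is precisely the one possibly-arbitrary issue of fact (iii).

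\emph{Main obstacle.} The heart is making this squeeze watertight. The naïve estimate ``at most $\ell-2$ payments, each at most the largest cost-share ever used'' is, on its own, consistent with the negation, so it is essential to anchor the counting at the right round (the last round the group spends) and to use the monotonicity of cost-shares so that all of a member's payments are bounded by the single cost-share $\sigma$; one must then reconcile the degenerate sub-cases — the lower bound $m-(\ell-2)\sigma$ becoming vacuous when $\sigma$ is large, whether the arbitrarily-decided issue itself lies in $T^-$, and the relative order of $r^*$ and the round at which $a_{t_0}$ is fixed. A further subtlety, particular to public decisions as opposed to participatory budgeting, is that the rule can fix an issue \emph{against} $N'$ without $N'$ paying for it, so the group's budget is never consumed on the issues it loses; this is exactly what keeps its leftover budget large enough to force the contradiction, but it also means the argument must be run with disagreed issues in place of ``unbought projects''.
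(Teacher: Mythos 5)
Your reduction is correct, and your bookkeeping facts are sound; in particular fact (iii) --- that $\cMES$ with unit pricing on a binary unconstrained instance always fixes at least $m-1$ issues, so the whole electorate retains at most $n$ at termination --- is a clean observation that immediately closes the ``all payments are small'' case. But the squeeze does not close, and the choice to anchor at the \emph{last} round $r^*$ in which $N'$ pays is the reason. From $\sum_{i\in N'}\mathrm{rem}(i)\leq n$ and $\mathrm{rem}(i)\geq m-(\ell-2)\sigma$ you derive $m-(\ell-2)\sigma\leq n/|N'|$, hence $\sigma>m/\ell$. But $m-(\ell-2)\sigma$ is also your only lower bound on each member's budget at the round $r_0$ where $a_{t_0}$ is fixed against $N'$, and the very same inequality says this bound is at most $n/|N'|$ --- which is exactly the per-voter contribution each member of $N'$ would need to make $(a_{t_0},d_{t_0})$ affordable at cost-share $n/|N'|$. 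So affordability of the group's pair at round $r_0$ does not follow: the two halves of the squeeze use the same quantity in opposite directions and never meet. In addition, the cost-share ``actually used at that round'' is $\rho_{r_0}$, not $\sigma=\rho_{r^*}$; when $r_0<r^*$ monotonicity only gives $\rho_{r_0}\leq\sigma$, so you cannot conclude that it exceeds $n/|N'|$ either. These are not loose sub-cases to be reconciled but a structural defect of the anchor: the quantity $\sigma$ that your budget bound depends on is precisely the quantity the argument proves to be large.

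The paper's proof (the standard MES analysis) anchors instead at the \emph{first} round in which some member of $N'$ pays more than the fixed threshold $m/|T|$. Then every earlier payment by a member of $N'$ is at most $m/|T|$, each member has made at most $|T|-2$ of them, and hence still holds at least $m-(|T|-2)\cdot m/|T|=2m/|T|\geq 2n/|N'|$ at that round --- a lower bound genuinely above what is needed for the group (or, for an issue outside $T$, the majority half of the group) to afford an available pair at a cost-share below the one being charged. If no member ever pays more than $m/|T|$, each retains at least $2m/|T|$ at termination, the group collectively retains at least $2n$, and your fact (iii) (or the paper's ``at least two unfunded issues'' argument) yields the contradiction. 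You should rewrite the squeeze with this first-violation anchor; your reduction and facts (i)--(iii) can stay as they are.
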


\begin{proof}{}
Take an outcome $\vec{w}$ returned by $\cMES$ with unit pricing $\lambda_{\text{unit}}$ and consider a $T$-cohesive group of voters $N'$. Let us assume that for every voter $i\in N'$, it holds that $\sat{i}{\vec{w}} < (\nicefrac{2}{\max_{t\in[m]}|D_{t}|})\cdot|T|-1$ and then set $\ell = (\nicefrac{2}{\max_{t\in[m]}|D_{t}|})\cdot|T|-1$. So to conclude the run of $\cMES$, each voter in $N'$ paid for at most $\ell-1 = (\nicefrac{2}{\max_{t\in[m]}|D_{t}|})\cdot|T|-2$ issues. 

    Now, assume that the voters in $N'$ paid at most $\nicefrac{m}{(\ell + 1)}$ for any decision on an issue. We know that each voter has at least the following funds remaining at that moment:
\begin{equation*}
m - (\ell - 1)\frac{m}{\ell + 1}  = \frac{2m}{\ell + 1} = \frac{(\max_{t\in[m]}|D_{t}|)m}{|T|} \geq \frac{(\max_{t\in[m]}|D_{t}|)n}{|N'|}.  
\end{equation*}

 The last step follows from the group $N'$ being $T$-cohesive (and thus, $|N'| \geq |T|\cdot\nicefrac{n}{m}$). So now we know that the voters in $N'$ hold at least $(\max_{t\in[m]}|D_{t}|)n$ in funds at the end of $\cMES$'s run. Thus, we know that at least $\max_{t\in[m]}|D_{t}|$ issues have not been funded and for at least one of these issues, at least an $\nicefrac{1}{(\max_{t\in[m]}|D_{t}|)}$-fraction of $N'$ agree on the decision of this issue (as the election instance has at most $\max_{t\in[m]}|D_{t}|$ alternatives for any issue) and they hold enough funds to pay its price of $n$ (given by the unit pricing of $\cMES$), hence, we have a contradiction to $\cMES$ terminating. 

Now, assume that some voter $i$ in $N'$ paid more than $\nicefrac{m}{(\ell + 1)}$ for a decision on an issue. Since we know that at the end of $\cMES$'s execution, each voter in $N'$ paid for at most $\ell-1 = (\nicefrac{2}{\max_{t\in[m]}|D_{t}|})\cdot|T|-2$ issues, it must be that in the round~$r$ that voter~$i$ paid more than $\nicefrac{m}{(\ell + 1)}$ for an issue's decision, the voters in $N'$ collectively held at least $(\max_{t\in[m]}|D_{t}|)n$ in funds. But, there are at least $\max_{t\in[m]}|D_{t}|$ issues in $\I$ that were not funded, so there exists some issue that could have been paid for in round~$r$ where voters each pay $\nicefrac{m}{(\ell + 1)}$. This contradicts the fact that voter $i$ paid more than $\nicefrac{m}{(\ell + 1)}$ in round $r$. 
So, we have that this group of voters $N'$ cannot exist and that $\cMES$ satisfies $\nicefrac{2}{(\max_{t\in[m]}|D_{t}|)}\text{-}{\cEJR}\text{-}1$.
\end{proof}

\noindent
Note that for unconstrained binary elections, the above result states that $\cMES$ with unit pricing $\lambda_{\text{unit}}$ provides $\cEJR$-1.\footnote{Note that \citet{skowronGoreckiAAAI2022public} gave a similar result for $\cMES$ with unit pricing $\lambda_{\text{unit}}$ by showing that it satisfies an axiom based on the agreement-EJR notion in the unconstrained binary setting (see Theorem~$2$ in their paper).} However, the proportionality guarantee weakens when more than two issues are considered for issues in the election. Recall that results by \citet{chandakARXIV2024sequential} showed that we cannot get $\cEJR$ even in the unconstrained setting. Thus, the above positive result shows the promise of $\cMES$ in providing strong representation guarantees when we have constraints.

%%%%%%%%%%%%%%%%%%%%%%%%%%%%%%%%%%%%%%%%%%%%%%%%%%%
\section{Justified Representation with Agreement}\label{sec:justified_agreement}
%%%%%%%%%%%%%%%%%%%%%%%%%%%%%%%%%%%%%%%%%%%%%%%%%

Given the mostly negative results regarding the (stronger) cohesiveness-EJR notion, we move on to justified representation based on agreement, which leads to weaker requirements in the unconstrained setting. 
%We justify this move as the notion based on agreement is weaker and yields more positive results in the unconstrained setting. Thus, by assessing it here, we are able to establish a baseline of what can be achieved in terms of EJR-like proportionality guarantees in our constrained model. 
First, we formalise agreement-based EJR with the following axiom.

\begin{definition}[$\agrEJR$]\label{def:agrEJR}
    Given an election $(\prof{B}, \C)$, an outcome $\vec{w}$ provides \emph{$\agrEJR$} if for every $T$-agreeing group of voters $N'\subseteq N$ for some $T\subseteq \I$ with an $(S,\vec{w},\C)$-deviation for some $S\subseteq T$ with $|S|\leq |T|\cdot\nicefrac{|N'|}{n}$, there exists a voter $i\in N'$ such that:
    \[\sat{i}{\vec{w}} \geq |T|\cdot\frac{|N'|}{n}.\]
\end{definition}

\noindent
Intuitively, $\agrEJR$ states that a (member of a) $T$-agreeing voter group deserves at least $|T|\cdot\nicefrac{|N'|}{n}$ in satisfaction from an outcome $\vec{w}$. However, if this is not the case, then fairness is only violated if said voter group can find a suitable set $S\subseteq T$ of issues that they can change towards deviating to a different, feasible outcome. Note a size requirement for the set $S$ has been set to $|S|\leq |T|\cdot\nicefrac{|N'|}{n}$. This is made to prevent scenarios where a voter group has an $(S,\vec{w},\C)$-deviation that is `too large', e.g., a voter group that agrees on all issues but constitutes $50\%$ of the population, cannot claim a violation of  $\agrEJR$ via an $(S,\vec{w},\C)$-deviation where $S = \I$. 
%Thus, such claims are not considered in our axiom.

\begin{example}\label{exm:agrEJR}
    Suppose there are three issues $\I = \{a_1, a_2, a_3\}$ with domains $D_{t} = \{d_{1},d_{2},d_{3}\}$ for all $t\in\{1,2,3\}$. Then take the constraint to be $\C  = \{(d_{1},d_{2},d_{3}),(d_{2},d_{2},d_{3})\}$. Then, suppose there are three voters $N = \{1,2,3\}$ with ballots $\vec{b}_{1} = (d_{1},d_{1},d_{1})$, $\vec{b}_{2} = (d_{1},d_{1},d_{2})$ and $\vec{b}_{3} = (d_{3},d_{3},d_{3})$. Take the outcome to be $\vec{w} = (d_{2},d_{2},d_{3})$. Observe that voter group $
    \{1,2\}$ agree on the first two issues and they represent a $\nicefrac{2}{3}$-rd fraction of the population. Since this group has an $(S,\vec{w},\C)$-deviation for $S = \{1\}$ that is of the appropriate size (at most $2\cdot\nicefrac{2}{3}$), we see that $\vec{w} = (d_{2},d_{2},d_{3})$ violates $\agrEJR$.
    %as the axiom states that they deserve at least $1$ issue in satisfaction. 
    On the other hand, the outcome $\vec{w} = (d_{1},d_{2},d_{3})$ provides $\agrEJR$.
\end{example}

\noindent
Unfortunately, we find that $\agrEJR$ is not always satisfiable in general. This follows from the counterexample of Proposition~\ref{prop:c-JR_may_not_exist_A1}, as each voter requires at least $1$ in satisfaction for to $\agrEJR$ to be satisfied. 

\begin{corollary}\label{prop:agrEJR_not_always_satisfiable_NFD}
    There exists an election instance where no outcome provides $\agrEJR$ (even when $\C$ satisfies NFD).
\end{corollary}

\noindent
We now introduce a particular class of constraints that allows us to precisely define how restrictive, and thus how costly, the fixing of a particular issue-decision pair is. Similarly to work by \citet{reyEtAlKR2020general,reyEtAlSCW2023general}, we consider constraints $\C$ that can be equivalently expressed as a set of implications $\textit{Imp}_{\C}$, where each implication in $\textit{Imp}_{\C}$ is a propositional formula with the following form where $\ell_{(a_x,d_{x})}$ is a literal associated with the issue-decision pair $(a_x,d_{x})$: $\ell_{(a_x,d_{x})} \rightarrow \ell_{(a_y,d_{y})}$.
This class of constraints allows us, for instance, to express simple dependencies and conflicts such as `selecting $x$ means that we must select $y$' and `selecting $x$ means that $y$ cannot be selected', respectively. These constraints correspond to \emph{propositional logic formulas} in \emph{2CNF}. 

\begin{example}\label{exm:constraint_implications}
Take a set of issues $\I = \{a,b,c,d,e\}$ for a binary election instance. An example of an implication set is
$\textit{Imp}_{\C} = \{(a,1) \rightarrow (b,1),  (c,1) \rightarrow (e,0), (d,1) \rightarrow (e,0)\}$. Here, accepting issue $a$ means that issue $b$ must also be accepted while accepting either issues $c$ or $d$ requires the rejection of issue $e$.
\end{example}

\noindent
Given a (possibly partial) outcome $\vec{w}\in\C$ and the set $\textit{Imp}_{\C}$, we construct a directed \emph{outcome implication graph} $G_{\vec{w}} = \langle H,E\rangle$ where $H = \bigcup_{a_{t}\in \I}\{(a_{t},d)\mid d\in D_{t}\}$ %\I\times\{0,1\} , E\rangle$ 
as follows: 
\begin{enumerate}
    \item Add the edge $((a_x,d_{x}), (a_y,d_{y}))$ to $E$ if $\ell_{(a_x,d_{x})} \rightarrow \ell_{(a_y,d_{y})} \in\textit{Imp}_{\C}$  and $w_{y} \neq d_{y}$; %$w_{y} = 1 - d_{y}$;
    \item If there exists an implication  $\ell_{(a_x,d_{x})} \rightarrow \ell_{(a_y,d_{y})} \in\textit{Imp}_{\C}$ while $w_{x} = d_{x}$ holds, then add the edge $((a_y,d_{y}^{*}), (a_x,d_{x}^{*}))$ for all $d_{y}^{*}\neq d_{y}\in D_{y},d_{x}^{*}\neq d_{x}\in D_{x}$ to $E$. 
\end{enumerate}

Given such a graph $G_{\vec{w}}$ for an outcome $\vec{w}$, we use $G_{\vec{w}}(a_x,d_{x})$ to denote the set of all vertices that belong to some path in $G_{\vec{w}}$ having vertex $(a_x,d_{x})$ as the source. Note that $G_{\vec{w}}(a_x,d_{x})$ excludes $(a_x,d_{x})$.

\begin{example}\label{exm:constraint_graphs}
Consider a binary election instance and take a set of issues $\I = \{a_{1},a_{2},a_{3},a_{4}\}$ and the implication set $\textit{Imp}_{\C} = \{(a_{1},1) \rightarrow (a_{2},1), (a_{1},1) \rightarrow (a_{3},1), (a_{2},1) \rightarrow (a_{4},1)\}$ of some constraint $\C$. Consider the outcome implication graph for $\vec{w}_{1} = (0,0,0,0)$ (vertices with no adjacent edges are omitted for readability):

\begin{center}
		\begin{tikzpicture}[anchor=center]
			\node[name = A1] at (-1, 2) {$(a_{1},1)$};
            \node[name = C1] at (1, 1) {$(a_{3},1)$};
            \node[name = B1] at (1, 2) {$(a_{2},1)$};
            \node[name = D1] at (3, 2) {$(a_{4},1)$};

			\path[->] (A1) edge (B1);
            \path[->] (A1) edge (C1);
			\path[->] (B1) edge (D1);
		\end{tikzpicture}
\end{center}

Then, we have $G_{\vec{w}_{1}}(a_1,1) = \{(a_2,1), (a_3,1), (a_4,1)\}$ and therefore $|G_{\vec{w}_{1}}(a_1,1)| = 3$. Now, consider the outcome implication graph for $\vec{w}_{2} = (0,1,0,0)$:

\begin{center}
		\begin{tikzpicture}[anchor=center]
			\node[name = A1] at (-1, 2) {$(a_{1},1)$};
            \node[name = C1] at (1, 1) {$(a_{3},1)$};
            \node[name = B1] at (1, 2) {$(a_{2},0)$};
            \node[name = D1] at (3, 2) {$(a_{4},0)$};

			%\path[->] (A1) edge (B1);
            \path[->] (A1) edge (C1);
			\path[<-] (B1) edge (D1);
		\end{tikzpicture}
\end{center}

Here, we have that $G_{\vec{w}_{2}}(a_4,0) = \{(a_2,0)\}$ but note that there is no edge from $(a_2,0)$ to $(a_1,0)$ as the latter already holds in the outcome.
\end{example}

\noindent
Thus, for an issue-decision pair $(a_x, d_x)$, we can count the number of affected issues in setting a decision $d_{x}$ for the issue $a_x$. This leads us to the following class of constraints.

\begin{definition}[$k$-restrictive constraints]
Take some constraint $\C$ expressible as a set of implications $\textit{Imp}_{\C}$. For some positive integer $k\geq 2$, we say that $\C$ is \emph{$k$-restrictive} if for every outcome $\vec{w}\in\C$, it holds that:
\[
\max\bigg\{|G_{\vec{w}}(a_x,d_{x})| \biggm|  (a_x,d_{x})\in \bigcup_{a_{t}\in \I}\{(a_{t},d)\mid d\in D_{t}\}\bigg\} = k-1
\]
where $G_{\vec{w}}$ is the outcome implication graph constructed for outcome $\vec{w}$ and the implication set $\textit{Imp}_{\C}$.
\end{definition}

\noindent
Intuitively, with a $k$-restrictive constraint, if one were to fix/change an outcome $\vec{w}$'s decision for one issue, this would require fixing/changing $\vec{w}$'s decisions on at most $k-1$ other issues. 
%So intuitively, when dealing with $k$-restrictive constraints, we can quantify (at least loosely speaking) how `difficult' it is to satisfy a constraint via the use of this value $k$, which we can in turn use when designing proportionality axioms. 
In the remainder of the paper, when we refer to a $k$-restrictive constraint $\C$, we assume that $\C$ is expressible using an implication set $\textit{Imp}_{\C}$. 
%
%Before assessing how $k$-restrictive constraints affect our goal of providing proportionality,
We also mention in passing that the computational complexity of checking, for some constraint $\C$, whether there exists a set of implications $\textit{Imp}_{\C}$ that is equivalent to $\C$,  
%For the case of binary elections, this problem 
is a problem that corresponds to \emph{Inverse Satisfiablility}, which has been shown to be polynomial for formulas in 2CNF \citep{KavvadiasS98inverseSAT}. 

%%%%%%%%%%%%%%%%%%%%%%%%%%%%%%%%%%%%%%%%%%%%%%%%%

\noindent
We now present an approximate variant of the agreement-EJR notion for public decisions with constraints in a similar manner as we did with $\cEJR$.

\begin{definition}[$\alpha\text{-}{\agrEJR}\text{-}\beta$]\label{def:alpha-beta-c-PROP}
    Given an election $(\prof{B}, \C)$, some $\alpha\in (0,1]$ and some positive integer $\beta$, an outcome $\vec{w}$ provides \emph{$\alpha\text{-}{\agrEJR}\text{-}\beta$} if for every $T$-agreeing group of voters $N'\subseteq N$ for some $T\subseteq \I$ with an $(S,\vec{w},\C)$-deviation for some non-empty $S\subseteq T$ with $|S|\leq |T|\cdot\nicefrac{|N'|}{n}$, there exists a voter $i\in N'$ such that:
    \[\sat{i}{\vec{w}} \geq \alpha\cdot|T|\cdot\frac{|N'|}{n} - \beta.\]
\end{definition}

\noindent
%With this axiom, we formalise agreement-EJR to our constrained public-decision model with the presence of 
Note again that we place a size requirement on the set $S$ on which a group has an $(S,\vec{w},\C)$-deviation so that we rule out cases such as a single voter only having an $(S,\vec{w},\C)$-deviation for $S=\I$ while not intuitively being entitled to that much representation. 
Note that for readability, when we have $\alpha = 1$ or $\beta = 0$, we omit them from the notation when referring to $\alpha\text{-}{\agrEJR}\text{-}\beta$.

\begin{example}
 Suppose there is a binary election instance with four issues $\I = \{a_1, a_2, a_3, a_4\}$ and consider a constraint $\C = \{(1,1,0,0),(1,1,1,0)\}$. Then suppose there are two voters with ballots $\vec{b}_{1} = (1,1,1,1)$ and  $\vec{b}_{2} = (0,0,0,0)$ so each voter deserves at least $2$ in satisfaction according to the agreement-EJR notion. See that outcome $\vec{w} = (1,1,0,0)$ provides $\agrEJR$ while the outcome $\vec{w}' = (1,1,1,0)$ only provides $\nicefrac{1}{2}\text{-}{\agrEJR}$ as voter $2$ only obtains $1$ in satisfaction whilst having a sufficiently small $(S,\vec{w},\C)$-deviation for the issue $a_{3}$ (deviating to outcome $\vec{w}$). 
\end{example}

\noindent
We now analyse $\cMES$ with respect to this axiom for the class of $k$-restrictive constraints. We say that for $\cMES$, the price for an issue-decision pair $(a_x,d)$ given a partial outcome $\vec{w}^{*}$ is $\lambda(a_x,d) = n\cdot (|G_{\vec{w}^{*}}(a_x,d)|+1)$ and we refer to this as a \emph{fixed} pricing $\lambda_{\text{fix}}$.  Then we can show the following for binary election instances.

\begin{theorem}\label{thm:c-MES-fix_satisfies_1/k-agrEJR-1}
    For binary election instances $(\prof{B},\C)$ where $\C$ is $k$-restrictive for some $k$, $\cMES$ with fixed pricing $\lambda_{\text{fix}}$ satisfies $\nicefrac{1}{k}\text{-}{\agrEJR}\text{-}1$.
\end{theorem}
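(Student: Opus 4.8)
\noindent The plan is to adapt the argument of Proposition~\ref{prop:MES_satisfies_EJR-1} to the $k$-restrictive setting; the two new ingredients are that $\lambda_{\text{fix}}$ makes prices $k$-bounded rather than exactly $n$, and that the ``rescue'' issue one exhibits must be certified feasible through the given $(S,\vec{w})$-deviation. Fix an outcome $\vec{w}$ returned by $\cMES$ with fixed pricing, a $T$-agreeing group $N'$ with an $(S,\vec{w})$-deviation for some $S\subseteq T$ with $|S|\le |T|\cdot\nicefrac{|N'|}{n}$, and write $\alpha=\nicefrac{|N'|}{n}$. Since satisfaction is non-negative we may assume $\nicefrac{\alpha|T|}{k}-1>0$, i.e.\ $\alpha|T|>k$. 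Suppose for contradiction that $\sat{i}{\vec{w}}<\nicefrac{\alpha|T|}{k}-1$ for every $i\in N'$. As in Proposition~\ref{prop:MES_satisfies_EJR-1}, whenever a voter $i$ pays for a pair $(a_t,d)$ during the run we have $\vec{b}_{i}^{t}=d=w_t$, and distinct payments concern distinct issues, so each $i\in N'$ pays for at most $\sat{i}{\vec{w}}\le\ell-1$ issues, where $\ell:=\lceil\nicefrac{\alpha|T|}{k}\rceil-1\ge 1$. Set $\rho^{*}:=\nicefrac{m}{\ell}$. Because $\ell<\nicefrac{\alpha|T|}{k}\le\nicefrac{\alpha m}{k}$ we have $k\ell<\alpha m$, hence $\rho^{*}>\nicefrac{k}{\alpha}$ and therefore $|N'|\cdot\rho^{*}>kn$; this surplus is what the rest of the argument exploits.

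The structural input from the constraint is that, since $\C$ is $k$-restrictive, every issue-decision pair $(a_t,d)$ that keeps a feasible completion reachable from the current partial outcome $\vec{w}^{*}$ satisfies $|G_{\vec{w}^{*}}(a_t,d)|\le k-1$, so $\lambda_{\text{fix}}(a_t,d)=n\bigl(|G_{\vec{w}^{*}}(a_t,d)|+1\bigr)\le kn$. Now split on a per-payment threshold of $\rho^{*}$. If no voter of $N'$ ever pays more than $\rho^{*}$ for a single issue, then throughout the run (in particular at its end) each $i\in N'$ retains at least $m-(\ell-1)\rho^{*}=\rho^{*}$ of its budget. Otherwise, let $r$ be the first round at which some $i\in N'$ pays more than $\rho^{*}$; all earlier $N'$-payments were at most $\rho^{*}$ and every $N'$-voter has paid at most $\ell-1$ times in total, so at the start of round $r$ each $i\in N'$ still holds at least $\rho^{*}$. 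In either case there is a round at which every member of $N'$ holds at least $\rho^{*}$, hence $N'$ jointly holds more than $kn$.

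It remains to turn this surplus into a contradiction. Here I would invoke the $(S,\vec{w})$-deviation: $\fixDec{\vec{w}}{S}{\vec{b}_{i}}$ is feasible and every member of $N'$ agrees with it---and hence with one another---on all of $S$, while disagreeing with $\vec{w}$ there. One shows that at the crucial round above there remains an issue $a_s\in S$ that is undecided in $\vec{w}^{*}$ and for which fixing the pair $(a_s,\vec{b}_{i}^{s})$ still leaves a feasible completion reachable (a completion consistent with $\fixDec{\vec{w}}{S}{\vec{b}_{i}}$ works, and $k$-restrictiveness bounds how many further decisions this fixing forces). Since $N'\subseteq N(a_s,\vec{b}_{i}^{s})$ and each member of $N'$ can contribute $\rho^{*}$, we get $\sum_{j\in N(a_s,\vec{b}_{i}^{s})}\min(\rho^{*},\text{budget}_j)\ge|N'|\cdot\rho^{*}>kn\ge\lambda_{\text{fix}}(a_s,\vec{b}_{i}^{s})$, so the pair is affordable at per-voter cost at most $\rho^{*}$. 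In the first case this contradicts $\cMES$ having terminated (the pair is affordable, keeps feasibility, and $a_s$ is undecided so not all issues are fixed); in the second it contradicts $\cMES$ choosing at round $r$ a pair for which $i$ pays more than $\rho^{*}$, because $\cMES$ always picks a minimum-$\rho$ affordable pair.

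The step I expect to be the main obstacle is exactly the one just sketched: guaranteeing that an \emph{undecided}, still \emph{feasibly fixable} issue of $S$ survives to the round where $N'$'s budget surplus is available---i.e.\ ruling out that $\cMES$, through a sequence of individually cheap and legitimate choices (tie-breaks included), has already pinned down all of $S$ to $\vec{w}$'s values or otherwise destroyed every feasible completion agreeing with $N'$ on $S$. The likely remedy is to carry out the budget bookkeeping at the \emph{last} round at which a completion consistent with $\fixDec{\vec{w}}{S}{\vec{b}_{i}}$ still exists, and to use $|S|\le\alpha|T|$ together with the $k$-boundedness of prices---so that ``consuming'' the $S$-deviation has a price that must be borne by voters, limiting how much of $N'$'s own budget can be diverted before the rescue move becomes necessary---to show the surplus is still in place at that round.
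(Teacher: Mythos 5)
Your proposal follows essentially the same route as the paper's proof: assume every member of a $T$-agreeing group $N'$ falls short, bound the number of payment events per voter by their satisfaction, split on whether any $N'$-voter ever pays more than a per-issue threshold ($\nicefrac{m}{\ell}$ for you, $\nicefrac{m}{(\ell+1)}$ in the paper), and in either case exhibit a round at which $N'$ collectively holds more than the price cap $nk$ imposed by $k$-restrictiveness, contradicting termination or the minimality of the chosen $\rho$. The arithmetic in your version checks out. The one place you diverge is the final ``rescue'' step, and here the comparison is instructive: the paper never invokes the $(S,\vec{w})$-deviation at all --- it argues that at least $k$ issues went unfunded and, by binarity and pigeonhole, half of $N'$ (holding at least $nk$) agrees on a decision for one of them and could buy it; you instead locate the rescue issue inside $S\subseteq T$, where \emph{all} of $N'$ agrees, which is why your weaker collective bound of $kn$ (rather than the paper's $2nk$) suffices. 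The obstacle you flag --- certifying that the candidate pair still ``allows for a feasible outcome to be returned in future rounds,'' given that $\cMES$ may also terminate for feasibility rather than affordability reasons, and that the final $\vec{w}$ completes the partial outcome arbitrarily --- is a genuine issue, but you should know it is not resolved in the paper's proof either: the paper simply asserts the contradiction with termination without checking feasibility of the affordable pair. Your idea of routing the argument through the $(S,\vec{w})$-deviation (which at least guarantees $\fixDec{\vec{w}}{S}{\vec{b}_{i}}\in\C$) is the more principled way to attempt closing this, so your proposal is not behind the paper's argument on this point; just be aware that what you describe as ``the likely remedy'' is a sketch of work the published proof also leaves implicit.
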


\begin{proof}{}
For a binary election instance $(\prof{B},\C)$ where $\C$ is $k$-restrictive, take an outcome $\vec{w}$ returned by $\cMES$ with fixed pricing $\lambda_{\text{fix}}$. Consider a $T$-agreeing voter group $N'$. Let us assume that for every $i\in N'$, it holds that $\sat{i}{\vec{w}} < \nicefrac{|N'|}{nk}\cdot|T|-1$ and then set $\ell = \nicefrac{|N'|}{nk}\cdot|T|-1$. So to conclude $\cMES$, each voter $i\in N'$ paid for at most $\ell-1 = \nicefrac{|N'|}{nk}\cdot|T|-2$ issues. Note that for a $k$-restrictive constraint $\C$, the maximum price $\cMES$ with fixed pricing $\lambda_{\text{fix}}$ sets for any issue-decision pair is $nk$ (as at most $k$ issues are fixed for a $\cMES$ purchase). Now, assume that the voters in $N'$ paid at most $\nicefrac{m}{(\ell + 1)}$ for any decision on an issue. We know that each voter has at least the following funds remaining at that moment:
\begin{equation*}
m - (\ell - 1)\frac{m}{\ell + 1}  = \frac{2m}{\ell + 1} = \frac{2m}{\nicefrac{|N'|}{kn}\cdot|T|} = \frac{2mnk}{|N'||T|} \geq \frac{2nk}{|N'|}.  
\end{equation*}

We now have that voter group $N'$ holds at least $2nk$ in funds at the rule's end. Thus, we know that at least $k$ issues have not been funded and for at least one of these $k$ issues, at least half of $N'$ agree on the decision for it (as the election is a binary instance) while having enough funds to pay for it. Hence, we have a contradiction to $\cMES$ terminating. 

Now, assume that some voter $i\in N'$ paid more than $\nicefrac{m}{(\ell + 1)}$ for fixing an issue's decision. Since we know that at the end of $\cMES$'s run, each voter in $N'$ paid for at most $\ell-1$ issues, then at the round~$r$ that voter $i$ paid more than $\nicefrac{m}{(\ell + 1)}$, the voters group $N'$ collectively held at least $2nk$ in funds. Since at least $k$ issues in were not funded, there exists some issue that could have been paid for in round~$r$, where voters each pay $\nicefrac{m}{(\ell + 1)}$. This contradicts the fact that voter $i$ paid more than $\nicefrac{m}{(\ell + 1)}$ in round $r$. 
Hence group of voters $N'$ cannot exist, concluding the proof.
\end{proof}

\noindent
 %Towards an even more positive result, and one where we are not limited to binary election instances,
 We now provide a definition of a constraint-aware version of the \emph{MeCorA} rule \citep{skowronGoreckiAAAI2022public}, in an effort to relax the assumption of working in a binary election instance. 
 In the unconstrained public-decision model, MeCorA is presented as an auction-style variant of MES that allows voter groups to change the decision of an issue all while increasing the price for any further change to this issue's decision. In our constrained version of this rule, voter groups are allowed to pay for changes to the decisions on \emph{sets of issues}, as long as these changes represent a feasible deviation.
 
 \begin{definition}[$\text{MeCorA}_{\C}$]
Take some constant $\epsilon > 0$. Start by setting $\lambda_{t} = 0$ as the current price of every issue $a_t\in \I$, endow each voter $i\in N$ with a personal budget of $m$ and take some arbitrary, feasible outcome $\vec{w}\in\C$ as the current outcome. A groups of voters can `update' the current outcome $\vec{w}$'s decisions on some issues $S\subseteq \I$ if the group:
\begin{itemize}
    \item[] $(i)$ can propose, for each issue $a_t\in S$, a new price $\lambda_{t}^{*} \geq \lambda_{t}+\epsilon$,
    \item[] $(ii)$ can afford the sum of new prices for issues in $S$, and
    \item[] $(iii)$ has an $(S,\vec{w},\C)$-deviation.
\end{itemize}
 The rule then works as follows. Given a current outcome $\vec{w}$, it computes, for every non-empty $S\subseteq \I$, the smallest possible value $\rho_{(t,S)}$ for each issue $a_t\in S$ such that for some $N'$, if voters in $N'$ each pay $\rho_{S} = \sum_{a_t\in S} \rho_{(t,S)}$ (or their remaining budget), then $N'$ is able to `update' the decisions on every $a_t\in S$ as per conditions $(i)-(iii)$. If there exists no such voter group for issues $S$ then it sets $\rho_{S} = \infty$. 

If $\rho_{S} = \infty$ for every $S\subseteq \I$, the process terminates and returns the current outcome $\vec{w}$. Otherwise, it selects the set $S$ with the lowest value $\rho_{S}$ (any ties are broken arbitrarily) and does the following:
\begin{enumerate}
    \item updates the current outcome $\vec{w}$'s decisions on issues in $S$ to the decisions agreed upon by the voters with the associated $(S,\vec{w},\C)$-deviation,
    \item  updates the current price of every issue $a_t\in S$ to $\lambda_{t}^{*}$,
    \item  returns all previously spent funds to all voters who paid for the now-changed decisions on issues in $S$,
    \item and finally, for each voter in $N'$, deduct $\sum_{a_t\in S} \rho_{(t,S)}$ from their personal budget (or the rest of their budget).
\end{enumerate}
\end{definition}

\begin{example}\label{exm:MeCorA}
      Suppose we have a election instance with three issues $\I = \{a_1, a_2, a_3\}$ with domains $D_{t} = \{d_{1},d_{2},d_{3}\}$ for all $t\in\{1,2,3,4\}$. Then take the constraint to be $\C  = \{(d_{1},d_{1},d_{1}),(d_{1},d_{1},d_{2}),(d_{1},d_{1},d_{3}),(d_{2},d_{2},d_{1})\}$. Then, suppose there are four voters $N = \{1,2,3\}$ with ballots $\vec{b}_{1} = (d_{1},d_{1},d_{1})$, $\vec{b}_{2} = (d_{1},d_{1},d_{2})$, $\vec{b}_{3} = (d_{2},d_{2},d_{2})$ and $\vec{b}_{4} = (d_{3},d_{3},d_{3})$.
      
      Assume that the rule begins with the current outcome $\vec{w} = (d_{2},d_{2},d_{1})$ where the price for each issue is $0$ and the individual voter budgets are $m=3$. As voters $\{1,2\}$ have an $(S,\vec{w},\C)$-deviation for the first two issues $S= \{a_{1},a_{2}\}$ and they each have their entire budgets available, they can pay to update the current outcome to $\vec{w}^{*} = (d_{1},d_{1},d_{1})$. This update raises the price of each of issues $\{a_1,a_2\}$ by $\epsilon$. In the next instance, there are feasible deviations on a single issue that are available for the groups $\{2,3\}$ (to outcome $(d_{1},d_{1},d_{2})$), $\{3\}$ (to outcome $(d_{2},d_{2},d_{1})$) and $\{4\}$ (to outcome $(d_{1},d_{1},d_{3})$). All of these deviations are affordable however, the rule favours the voter group $\{2,3\}$ as it is cheaper per voter. Thus, the current outcome is updated to $\vec{w}^{**} = (d_{1},d_{1},d_{2})$ and the price to update issue~$a_3$ is raised to $\epsilon$. Finally, suppose that at some future instance, the current outcome is again $\vec{w} = (d_{1},d_{1},d_{2})$ and suppose that the price of issue is at least $3$ (this is possible as each voter has a budget of $3$). Then, although voter~$4$ has an $(S,\vec{w},\C)$-deviation to outcome $(d_{1},d_{1},d_{3})$ and all of their budget (as they have zero satisfaction), they cannot afford to raise the price of issue~$a_3$ so cannot update the outcome.
\end{example}

\noindent
Our definition of $\text{MeCorA}_{\C}$ will be useful in Section~\ref{sec:priceable}, while now we need to refine its definition further to study its justified representation guarantees.
%
%Next, we show the representation guarantees can be achieved on instances with $k$-restrictive constraints via the use of a greedy version of $\text{MeCorA}_{\C}$. 
%oreover, we achieve our result by dropping the restriction to binary election instances that was necessary for the result of Theorem~\ref{thm:c-MES-fix_satisfies_1/k-agrEJR-1}. 
%In this $\text{MeCorA}_{\C}$ variant, 
We first partition the voter population into groups where members of each group agree on some set of issues. Then, for each group, its members may only pay to change some decisions as a collective and only on those issues that they agree on. Contrarily to $\text{MeCorA}_{\C}$, voter groups cannot pay to change some decisions if this leads to the group's members gaining ``too much'' satisfaction from the altered outcome (i.e., a voter group exceeding their proportional share of their agreed-upon issues, up to some additive factor $q$ that parameterises the rule).

\begin{definition}[Greedy $\text{MeCorA}_{\C}\text{-}q$]
The set of the voters $N$ is partitioned into $p$ disjoints sets $N(T_{1}),\ldots,N(T_{p})$ such that:
\begin{itemize}
    \item[] $(i)$ for every $x\in \{1,\ldots,p\}$, a voter group $N(T_{x})\subseteq N$ is $T_{x}$-agreeing for some $T_{x}\subseteq \I$, and
    \item[] $(ii)$ for all $x\in\{1,\ldots,p-1\}$, it holds that $|N(T_{x})|\cdot |T_{x}| \geq |N(T_{x+1})|\cdot |T_{x+1}|$
\end{itemize}

As with $\text{MeCorA}_{\C}$, voter groups pay to change the decisions of some issues during the rule's execution. However, given the initial partition, during the run of Greedy $\text{MeCorA}_{\C}\text{-}q$, the voters in $N(T_{x})$ may only change decisions for the issues in $T_{x}$.
Moreover, if a voter group $N(T_{x})$ for some $x\in \{1,\ldots,p\}$ wishes to change some decisions at any moment during the process, this change does not lead to any voter in $N(T_{x})$ having satisfaction greater than $\nicefrac{|N(T_{x})|}{n}\cdot |T_{x}| - q$ with the updated outcome for some non-negative integer $q$. Besides these two differences, the rule works exactly as $\text{MeCorA}_{\C}$.
\end{definition}

\noindent
Consider the following example that demonstrates how Greedy $\text{MeCorA}_{\C}\text{-}q$ works. 

\begin{example}\label{exm:greedyMeCorA}
     Take an election instance with five issues $\I = \{a_1, a_2, a_3, a_4, a_5\}$ with domains $D_{t} = \{d_{1},d_{2},d_{3}\}$ for all $t\in\{1,2,3,4,5\}$. %Then take the constraint to be $\C  = \{(d_{1},d_{1},d_{1}),(d_{1},d_{1},d_{2}),(d_{1},d_{1},d_{3}),(d_{2},d_{2},d_{1})\}$.
     Then, suppose there are five voters $N = \{1,2,3,4,5\}$ with ballots $\vec{b}_{1} = (d_{1},d_{1},d_{1},d_{1},d_{1})$, $\vec{b}_{2} = (d_{1},d_{1},d_{1},d_{1},d_{1})$, $\vec{b}_{3} = (d_{2},d_{2},d_{2},d_{1},d_{1})$, $\vec{b}_{4} = (d_{2},d_{2},d_{2},d_{2},d_{2})$ and $\vec{b}_{5} = (d_{3},d_{3},d_{3},d_{3},d_{3})$. The rule first partitions the voters into three groups $\{1,2\}$, $\{3,4\}$ and $\{4\}$ where the voter groups can only update the decisions of issues in $\{a_1, a_2, a_3, a_4,a_5\}$, $\{a_1, a_2,a_3,a_4\}$ and $\{a_1, a_2, a_3, a_4,a_5\}$, respectively. The only other difference to $\text{MeCorA}_{\C}$ is that for the voter groups $\{1,2\}$, $\{3,4\}$ and $\{4\}$, the satisfaction garnered from the new outcome in any deviating update cannot exceed a certain amount. Suppose that $q=0$ for Greedy $\text{MeCorA}_{\C}\text{-}q$. For voter groups $\{1,2\}$, $\{3,4\}$ and $\{4\}$, this satisfaction limit is set to $\nicefrac{2}{5}\cdot5=2$, $\nicefrac{2}{5}\cdot3 = \nicefrac{6}{5}$ and $\nicefrac{1}{5}\cdot5 = 1$, respectively.
\end{example}

\noindent
We can now show our main result for Greedy $\text{MeCorA}_{\C}\text{-}q$ on a $k$-restrictive constraint. For this result, we require that voter ballots are feasible outcomes in $\C$, but we do not need to assume that we are in a binary election instance.

\begin{theorem}\label{thm:greedyMeCorA_approx_satisfies_agrEJR}
For election instances $(\prof{B},\C)$ where voters' ballots are consistent with the constraint $\C$ and $\C$ is $k$-restrictive for some $k\geq 2$, Greedy $\text{MeCorA}_{\C}\text{-}(k-1)$ satisfies ${\agrEJR}\text{-}(k-1)$.
\end{theorem}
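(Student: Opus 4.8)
The plan is to argue by contradiction along the lines of the proof of Theorem~\ref{thm:c-MES-fix_satisfies_1/k-agrEJR-1}, adding two ingredients to cope with the auction dynamics of $\text{MeCorA}_{\C}$ and with the partition that Greedy $\text{MeCorA}_{\C}\text{-}(k-1)$ is built on. So suppose $\vec{w}$ is returned by the rule and that ${\agrEJR}\text{-}(k-1)$ fails, witnessed by a $T$-agreeing group $N'$ with an $(S,\vec{w})$-deviation, $S\subseteq T$, $|S|\le|T|\cdot\nicefrac{|N'|}{n}$, and $\sat{i}{\vec{w}}<\nicefrac{|N'|}{n}\cdot|T|-(k-1)$ for every $i\in N'$.

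The first step is to relate $N'$ to the partition. Reading ``Greedy'' literally, the blocks $N(T_1),\dots,N(T_p)$ are obtained by repeatedly carving off a maximum-product pair $(N(T_x),T_x)$ (a $T_x$-agreeing group maximising $|N(T_x)|\cdot|T_x|$ among the not-yet-assigned voters), which is exactly what condition~$(ii)$ records. Let $x_1$ be the least index with $N(T_{x_1})\cap N'\neq\emptyset$; since $N'$ is untouched by blocks $1,\dots,x_1-1$, maximality of the greedy choice gives $|N(T_{x_1})|\cdot|T_{x_1}|\ge|N'|\cdot|T|$, hence the cap $\ell:=\nicefrac{|N(T_{x_1})|}{n}\cdot|T_{x_1}|-(k-1)$ of this block is at least the target $\nicefrac{|N'|}{n}\cdot|T|-(k-1)$ for $N'$. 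Writing $B=N(T_{x_1})$ and $U=T_{x_1}$, it therefore suffices to exhibit a member of $N'\cap B$ whose final satisfaction is at least $\ell$.

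The core is then a budget argument inside $B$, mirroring Theorem~\ref{thm:c-MES-fix_satisfies_1/k-agrEJR-1}. A voter keeps a payment only for an issue whose current decision matches its ballot (overturned decisions are refunded in step~$3$ of $\text{MeCorA}_{\C}$), so at any round a voter pays for at most its current satisfaction, and in particular each $i\in N'\cap B$ pays for fewer than $\ell$ issues at the end; moreover, since all members of $B$ agree on $U$ and $B$ may only change issues in $U$, each move raises everyone's satisfaction equally, so the members of $N'\cap B$ stay strictly below $\ell$ as long as $B$ still has a move. As in Theorem~\ref{thm:c-MES-fix_satisfies_1/k-agrEJR-1}, the satisfaction deficit forces at least $k$ issues of $U$ on which the members of $N'\cap B$ currently disagree with $\vec{w}$ (hence all of $B$ disagrees, since these are $U$-issues), and $k$-restrictiveness together with feasibility of the ballots turns any one of them into a $B$-wide feasible deviation $S_t\subseteq U$ touching at most $k$ issues (a short propagation argument in $\textit{Imp}_{\C}$, using that the target values come from a feasible ballot and that $|G_{\vec{w}}(a_t,d)|\le k-1$). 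Splitting into the two cases of Theorem~\ref{thm:c-MES-fix_satisfies_1/k-agrEJR-1} --- whether some retained payment ever exceeded $\nicefrac{m}{\ell+1}$ --- and using $m\ge|U|$, one obtains that at the relevant round the members of $N'\cap B$ jointly hold at least $2nk$ in budget; since the at most $k$ issues of $S_t$ were previously re-priced only by groups of at most $n$ voters who spent no more than their budget on them, their current prices sum to less than $2nk$, so $N'\cap B$ could afford to bid $\epsilon$ above them. Thus $B$ cannot have halted for lack of funds, so its being stuck must be due to the cap: any feasible deviation would push some member of $B$ strictly past $\ell$, which (the move having size at most $k$) already certifies that this member had satisfaction above $\ell-k$; the remaining work is to upgrade this into a member of $N'$ at satisfaction at least $\ell$, using the greedy choice to make the responsible member's claim $|B|\cdot|U|$ dominate $|N'|\cdot|T|$ --- and this is exactly why both the rule's and the axiom's additive slack is $k-1$.

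The main obstacle is twofold. First, unlike $\cMES$ with fixed pricing, $\text{MeCorA}_{\C}$'s prices are monotone increasing with no a priori bound, so the clean ``every purchase costs at most $nk$'' step must be replaced by the standard auction device of freezing the analysis at the round just after the relevant $\le k$ issues were last re-priced and bounding that price by the budget the $\le n$ earlier payers could have sunk into those issues; and because refunds make individual spending non-monotone, the ``pays for at most $\ell-1$ issues'' bound has to be re-derived from the invariant ``a voter pays only for currently agreed-upon issues'' rather than copied. Second --- and this is where I expect the real difficulty to lie --- one must handle the case where $B$ is blocked by the cap because some member (not necessarily in $N'$) already enjoys high satisfaction from issues outside $U$: here the greedy ordering must be leveraged to show that this member's product-claim exceeds $|N'|\cdot|T|$ so that its satisfaction still witnesses ${\agrEJR}\text{-}(k-1)$ for $N'$, and the integrality/closure bookkeeping (that the forced closure of a $U$-issue stays inside $U$ for a $k$-restrictive constraint, or the handling of the cases where it escapes) must be pushed through so that the $k-1$ slack is tight enough to close the gap.
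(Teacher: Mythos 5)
Your skeleton (contradiction, reduction to a block of the partition via the greedy max-product property, then a budget argument) matches the paper's, and your observation that the greedy ordering yields a block $B$ with $|B|\cdot|U|\geq|N'|\cdot|T|$ is exactly how the paper disposes of the case where $N'$ is split across blocks (the paper adds that all members of a block end with equal satisfaction, since they only pay collectively, so the bound proved for $B$ transfers to the member of $N'$ inside it). However, the core of your budget argument has a genuine gap. You conclude that the $\leq k$ issues of a single deviation ``were previously re-priced only by groups of at most $n$ voters who spent no more than their budget on them, [so] their current prices sum to less than $2nk$, so $N'\cap B$ could afford to bid $\epsilon$ above them.'' This bound does not hold: in the $\text{MeCorA}_{\C}$ auction the price of an issue is bounded only by what earlier bidders could sink into it, i.e.\ up to order $nm$, not $2nk$; there is no analogue of the $\lambda_{\text{fix}}(a_t,d)\leq nk$ cap from Theorem~\ref{thm:c-MES-fix_satisfies_1/k-agrEJR-1}, and you cannot in general show that the deprived group can afford \emph{any particular} deviation. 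You half-acknowledge this (``no a priori bound''), but the fix you sketch still lands on the unjustified $2nk$ figure.

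The paper closes this differently, and the difference is essential. It never upper-bounds the price of a deviation. Instead it argues: the deprived group collectively retains at least $\nicefrac{nmk}{|T|}$ in funds, so every one of its available deviations must cost at least that much (otherwise the rule would not have terminated); $k$-restrictiveness plus ballot-feasibility yields at least $\nicefrac{(|T|-\ell+1)}{k}$ \emph{distinct} such deviations inside $T$; and the current (disagreeable) decisions underlying all of them must be funded by voters in $N\setminus N'$, whose total expenditure would then have to exceed
\[
\frac{nmk}{|T|}\cdot\frac{|T|-\ell+1}{k} \;>\; m(n-|N'|),
\]
which is more than the entire budget of $N\setminus N'$ --- a contradiction. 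So the counting is over \emph{many} deviations against the \emph{complement's} total budget, not over one deviation against the group's own funds. Your second admitted obstacle (the ``blocked by the cap'' case) is also not a separate difficulty in the paper's route: once the aggregate-cost contradiction is in place for a full block, the cap case reduces to the equal-satisfaction observation above. As written, your proposal does not close the main step.
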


\begin{proof}{} 
Take an outcome $\vec{w}$ returned by Greedy  $\text{MeCorA}_{\C}\text{-}(k-1)$. Assume that $\vec{w}$ does not provide ${\agrEJR}\text{-}(k-1)$. Thus, there is a $T$-agreeing group $N'$ such that $\sat{i}{\vec{w}} < \nicefrac{|N'|}{n}\cdot |T| - k+1 = \ell$ holds for every $i\in N'$. 
Now, consider the partition of voters $N(T_{1}),\ldots,N(T_{p})$ constructed by Greedy $\text{MeCorA}_{\C}\text{-}(k-1)$ to begin its run. 
Assume first that there is some $x\in \{1,\ldots,p\}$ such that $N' = N(T_{x})$, i.e., voters $N'$ appear in their entirety in said partition. 
We then have $T= T_x$. Moreover, voters in $N'$ each contribute to at most $\ell$ decisions at any moment of the run of Greedy $\text{MeCorA}_{\C}\text{-}(k-1)$, as this is the limit the rule imposes on their total satisfaction. We now consider two cases. 
Assume that the voters in $N'$ contributed at most $\nicefrac{m}{(\ell + k - 1)}$ to change some decisions during the rule's execution. It follows that each voter has at least the following funds remaining: $m - (\ell - 1)\cdot\nicefrac{m}{(\ell + k - 1)} \geq \nicefrac{nmk}{|N'||T|}$.

In this case, the voters in $N'$ would have at least $\nicefrac{nmk}{|T|}$ in collective funds, so it follows that each distinct $(S,\vec{w},\C)$-deviation available to $N'$ must cost at least $\nicefrac{nmk}{|T|}$. As $N'$ is $T$-agreeing, it must be that $N'$ has at least a $\nicefrac{(|T| - \ell + 1)}{k}$ many $(S,\vec{w},\C)$-deviations due to $\C$ being $k$-restrictive and as the voters' ballots are consistent with $\C$.

Now, consider the case where some voter in $N'$ contributed more than $\nicefrac{m}{(\ell + k - 1)}$ to change some decisions. The first time that this occurred, the change of decisions did not lead to any voter in $N'$ obtaining a satisfaction greater than $ \ell = \nicefrac{|N'|}{n}\cdot |T| - k + 1$ (otherwise the rule would not allow these voters to pay for the changes). Thus, each voter in $N'$ must have contributed to at most $\ell -1$ issues before this moment. From the reasoning above, it must hold that in this moment, each voter held at least $\nicefrac{nmk}{|N'||T|}$ in funds with there being at least $\nicefrac{(|T| - \ell + 1)}{k}$ feasible deviations available to $N'$ and each such deviation costing at least $\nicefrac{nmk}{|T|}$. So in both cases, for the $(S,\vec{w},\C)$-deviations that are present in $T$ that voters in $N'$ wish to make, outcome $\vec{w}$'s decisions must have been paid for by voters within the remaining voter population $N\setminus N'$. And so, these decisions must have cost the voters in $N\setminus N'$ at least:
\[
\frac{nmk}{|T|}\cdot\bigg(\frac{|T| - \ell + 1}{k}\bigg) = \frac{nm}{|T|}\cdot\bigg(|T| - \frac{|N'|}{n}\cdot |T|  + k\bigg)
\]
\[
> \frac{nm}{|T|}\cdot\bigg(\frac{n|T| - |N'||T|}{n}\bigg) = m(n - |N'|).
\]
However, voters $N\setminus N'$ have at most $m(n - |N'|)$ in budget. Thus, the rule cannot have terminated with the voter group $N'$ existing.

Now, assume that the group $N'$ did not appear in their entirety within the partition $N(T_{1}),\ldots,N(T_{p})$ made by Greedy $\text{MeCorA}_{\C}\text{-}(k-1)$. This means that some voter $i\in N'$ is part of another voter group $N(T_{x})$ that is $T_{x}$-agreeing such that $\nicefrac{|N(T_{x})|}{n}\cdot |T_{x}| \geq \nicefrac{N'}{n}\cdot |T|$. Now, recall that for each voter group $N(T)$ in the partition, the voters in $N(T)$ have the same satisfaction to end the rule's execution (as they only pay to flip decisions as a collective). Thus, from the arguments above, it holds for this voter $i\in N'\cap N(T_{x})$ that $\sat{i}{\boldsymbol{w}} \geq \nicefrac{|N(T_{x})|}{n} \cdot |T_{x}| - k +1\geq \nicefrac{|N'|}{n}\cdot |T| - k +1$, which contradicts the assumption that every voter in $N'$ has satisfaction less than $\nicefrac{|N'|}{n}\cdot |T| - k +1$. 
\end{proof}

\noindent
We now explore another direction towards producing proportional outcomes on $k$-restrictive constraints. 
We define a constraint-aware version of the \emph{Local Search Proportional Approval Voting (LS-PAV)} rule, which is a polynomial-time computable rule that is known to satisfy EJR \citep{azizEtAlSCW2017justified}. 
In the committee elections setting, the rule begins with an arbitrary committee of some fixed size $k$, and improves it in iterations by searching for any swaps between committee members and non-selected candidates that brings about an increase of the \emph{PAV score} by at least $\nicefrac{n}{k^{2}}$. In our model, the PAV score of some feasible outcome  $\vec{w}\in \C$ is defined to be $\text{PAV}(\vec{w}) = \sum_{i\in N} \sum_{t = 1}^{\sat{i}{\vec{w}}} \nicefrac{1}{t}$. 
%We can then lift Local Search PAV to our setting with constraints.

\begin{definition}[Local Search $\text{PAV}_{\C}$, LS-$\text{PAV}_{\C}$]
Let $\vec{w}\in \C$ be an arbitrary outcome. 
%, the rule looks for all possible deviations. 
If there exists an ($S$,$\vec{w}$)-deviation for some voter group to some outcome $\vec{w}'\in\C$ such that $\text{\sc PAV}(\vec{w}') - \text{\sc PAV}(\vec{w})\geq \nicefrac{n}{m^{2}}$, i.e., the new outcome $\vec{w}'$ yields a PAV score that is at least $\nicefrac{n}{m^{2}}$ higher than that of $\vec{w}$, then the rule sets $\vec{w}'$ as the current winning outcome. The rule proceeds in iterations until there exists no deviation that improves the PAV score of the current winning outcome by at least $\nicefrac{n}{m^{2}}$.
\end{definition}

\begin{example}
    Suppose we have a election instance with three issues $\I = \{a_1, a_2, a_3\}$ with domains $D_{t} = \{d_{1},d_{2},d_{3}\}$ for all $t\in\{1,2,3,4\}$. Then take the constraint to be $\C  = \{(d_{1},d_{1},d_{2}),(d_{2},d_{2},d_{2}),(d_{3},d_{1},d_{2})\}$. Then, suppose there are four voters $N = \{1,2,3\}$ with ballots $\vec{b}_{1} = (d_{1},d_{1},d_{1})$, $\vec{b}_{2} = (d_{1},d_{1},d_{1})$, $\vec{b}_{3} = (d_{2},d_{2},d_{2})$ and $\vec{b}_{4} = (d_{3},d_{3},d_{3})$. So LS-$\text{PAV}_{\C}$ updates the current outcome if the PAV score increases by at least $\nicefrac{n}{m^2} = \nicefrac{4}{9}$. Suppose that the rule begins with $\vec{w} = (d_{2},d_{2},d_{2})$ being the current outcome. Observe that updating the current outcome to $\vec{w}^{*} = (d_{1},d_{1},d_{2})$ (which is a valid an ($S$,$\vec{w}$)-deviation for voter group $\{1,2\}$) would add $3$ to the PAV score (voters $1$ and $2$ being satisfied by their first two decisions) while taking away $\nicefrac{5}{6}$ (from voter~$3$'s lost agreement on issues $\{1,2\}$) which is more than $\nicefrac{4}{9}$, so this update occurs. Consider in the next instance that voter~$4$ wishes to deviate to outcome $\vec{w}^{**} = (d_{3},d_{1},d_{2})$. This would result in no change to the PAV score (as voter~$4$'s added contribution of $1$ is cancelled out by voter $1$ and $2$'s lost contribution of $\nicefrac{1}{2}$ each) and thus, the rule does not make this change. In fact, $\vec{w}^{*} = (d_{1},d_{1},d_{2})$ yields the largest PAV score is returned as the final outcome by LS-$\text{PAV}_{\C}$.
\end{example}

\noindent
As there is a maximum obtainable PAV score, LS-$\text{PAV}_{\C}$ is guaranteed to terminate. The question is how long this rule takes to return an outcome when we have to take $k$-restrictive constraints into account. 

\begin{proposition}
    For elections instances where $\C$ is $k$-restrictive (where $k$ is a fixed constant), LS-$\text{PAV}_{\C}$ terminates in polynomial time. 
\end{proposition}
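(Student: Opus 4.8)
The plan is to separate the argument into a bound on the number of iterations and a bound on the work performed per iteration.

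First I would bound the iterations. Since every voter's satisfaction is at most $m$, the PAV score of any feasible outcome is at most $n\cdot\sum_{t=1}^{m}\nicefrac{1}{t}$, and by definition each iteration of LS $\text{PAV}_{\C}$ raises the PAV score of the current outcome by at least $\nicefrac{n}{m^{2}}$. Hence the rule performs at most $m^{2}\cdot\sum_{t=1}^{m}\nicefrac{1}{t}=O(m^{2}\log m)$ iterations before it stops; this part is exactly the standard termination analysis of Local Search PAV in multiwinner voting and does not use $k$-restrictiveness.

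Next I would show that each iteration can be carried out in polynomial time, which is where $k$-restrictiveness enters. Given the current outcome $\vec{w}\in\C$, the rule must decide whether some $(S,\vec{w})$-deviation leads to a feasible $\vec{w}'$ with $\text{PAV}(\vec{w}')-\text{PAV}(\vec{w})\geq\nicefrac{n}{m^{2}}$ and, if so, exhibit one. The idea is to reduce this to inspecting only polynomially many ``atomic'' deviations. For each issue $a_{t}$ and each decision $d\neq w_{t}$, build the outcome implication graph $G_{\vec{w}}$ from $\textit{Imp}_{\C}$ and read off the forced closure $S_{(t,d)}=\{a_{t}\}\cup\{a_{s}\mid (a_{s},d')\in G_{\vec{w}}(a_{t},d)\text{ for some }d'\}$; by $k$-restrictiveness this set contains at most $k$ issues, and when it is consistent, flipping $\vec{w}$ exactly on $S_{(t,d)}$ (to the values dictated by $G_{\vec{w}}$) yields a feasible outcome. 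There are only $\sum_{t}(|D_{t}|-1)$ such closures, i.e.\ polynomially many, and for each one we can check in polynomial time---as in the proof of Proposition~\ref{prop:(T,w)-deviation_poly_to_find}---whether some voter's ballot agrees with the flipped decisions (so that it is a genuine deviation) and whether it raises the PAV score by the prescribed amount. The two facts that make this reduction legitimate are: every feasible $(S,\vec{w})$-deviation has $S$ equal to a union of such closures, and these closures are nested along reachability in $G_{\vec{w}}$ (if the source of one lies inside another, the first is contained in the second).

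The last and hardest step is to argue that restricting to these size-at-most-$k$ deviations loses essentially nothing: if some $(S,\vec{w})$-deviation improves the PAV score by at least $\nicefrac{n}{m^{2}}$, then already some closure $S_{(t,d)}$ deviation improves it by at least $\nicefrac{n}{m^{3}}$. Granting this, one runs the local search with the relaxed improvement threshold $\nicefrac{n}{m^{3}}$ over the polynomially many closure deviations; it still performs $O(m^{3}\log m)$ iterations, each in polynomial time, and by the contrapositive of the claim it halts only at an outcome admitting no $\nicefrac{n}{m^{2}}$-improving deviation at all, which is a legitimate output of LS $\text{PAV}_{\C}$. I expect this reduction to be the main obstacle. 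Writing $S$ as a union of the closures $S_{(t,d)}$ and trying to split the total PAV gain over them runs into trouble because the closures may overlap, so the naive subadditivity argument for the concave PAV objective does not close: the witnessing voter's own (monotone) contribution can be handled by concavity of $\sum_{r}\nicefrac{1}{r}$, but the remaining voters' contributions must be controlled more delicately, most likely by peeling the closures off in an order consistent with their nesting and tracking the PAV gain retained at each step rather than distributing it over an arbitrary cover.
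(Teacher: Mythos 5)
Your first two steps are exactly the paper's argument. The iteration bound via the maximum attainable PAV score $n\cdot\sum_{t=1}^{m}\nicefrac{1}{t}$ and the minimum increment $\nicefrac{n}{m^{2}}$ is precisely how the paper closes its proof, and the per-iteration work is handled, as you do, by constructing the outcome implication graph $G_{\vec{w}}$ from $\textit{Imp}_{\C}$ and, for each pair $(a_t,d)$ with $d\neq w_t$, reading off the closure of at most $k-1$ forced consequences; there are polynomially many such closures and each feasibility, agreement and score check is polynomial. Up to that point you and the paper coincide.

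The divergence is your third step. The paper does not prove (or even state) the reduction you single out as the main obstacle---that an arbitrary improving $(S,\vec{w})$-deviation implies an improving single-closure deviation. Its proof simply asserts that enumerating the per-issue closures finds an improving deviation if one exists; in effect it treats LS $\text{PAV}_{\C}$ as a local search over the polynomially many atomic closure deviations only, for which no decomposition lemma is needed, and that weaker reading is also all that the subsequent $\nicefrac{2}{(k+1)}\text{-}{\agrEJR}\text{-}(k-1)$ theorem uses, since the witnessing deviations there are themselves single closures of size at most $k$. So relative to the paper you are attacking a strictly harder statement, and your suspicion about it is well founded: for a voter whose satisfaction increases, concavity does give $g_i(S)\leq\sum_j g_i(S_j)$ even for overlapping closures, but the loss term $H(r_i)-H(r_i-x)$ is convex and hence superadditive, so if several closures share an issue on which a voter agrees with $\vec{w}$, the summed losses over the cover can strictly exceed the loss of the union and the inequality $\text{PAV}(\vec{w}')-\text{PAV}(\vec{w})\leq\sum_j[\text{PAV}(\fixDec{\vec{w}}{S_j}{\vec{b}_i})-\text{PAV}(\vec{w})]$ fails. (Note also that two closures can overlap without either source lying in the other, so nesting does not rescue the peeling argument.) Your $\nicefrac{n}{m^{3}}$ relaxation therefore rests on an unproven claim and additionally alters the rule's termination condition. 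The pragmatic resolution is not to prove the reduction but to observe, as the paper implicitly does, that the rule only ever needs to detect improving closure deviations for both its polynomial running time and its representation guarantee.
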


\begin{proof}{}
We show that given an outcome $\vec{w}$, finding all possible deviations can be done in polynomial time for a $k$-restrictive constraint $\C$. This can be done by exploiting the presence of the implication set $\textit{Imp}_{\C}$. Note that the size of the implication set $\textit{Imp}_{\C}$ is polynomial in the number of issues. So we can construct the outcome implication graph of $\textit{Imp}_{\C}$ and the outcome $\vec{w}$ in polynomial time. Then for each issue $a_t\in \I$, we can find the set $G_{\vec{w}}(a_t,d)$ for some $d\neq w_{t}\in D_{t}$ in polynomial time and the issue-decision pairs represent the required additional decisions to be fixed in order to make a deviation from outcome $\vec{w}$ by changing the $\vec{w}$'s decision on issue $a_t$ to $d$. Doing this for each issue $a_t$ allows us to find a deviation that can improve the PAV score, if such a deviation exists. With similar reasoning used in other settings \citep{azizEtAlSCW2017justified,chandakARXIV2024sequential}, we end by noting that since there is a maximum possible PAV score for an outcome, and each improving deviation increases the PAV score by at least $\nicefrac{n}{m^2}$, the number of improving deviations that LS-$\text{PAV}_{\C}$ makes is polynomial in the number of issues $m$. 
\end{proof}

\noindent
Off the back of this positive computational result, we present the degree to which LS-$\text{PAV}_{\C}$ provides proportional outcomes with regards to the $\alpha\text{-}{\agrEJR}\text{-}\beta$ axiom. 

\begin{theorem}\label{thm:lsPAV_approx_satisfies_agrEJR}
   For election instances $(\prof{B},\C)$ where the voters' ballots are consistent with the constraint $\C$ and $\C$ is $k$-restrictive for some $k\geq 2$, LS-$\text{PAV}_{\C}$ satisfies $\nicefrac{2}{(k+1)}\text{-}{\agrEJR}\text{-}(k-1)$.
\end{theorem}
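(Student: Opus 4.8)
The plan is to adapt the classical EJR proof for Local Search PAV to our constrained setting, using the $k$-restrictiveness of $\C$ to control the additive and multiplicative losses. Suppose for contradiction that the returned outcome $\vec{w}$ does not provide $\nicefrac{2}{(k+1)}\text{-}{\agrEJR}\text{-}(k-1)$: there is a $T$-agreeing group $N'$ with an $(S,\vec{w})$-deviation for some $S\subseteq T$ with $|S|\leq |T|\cdot\nicefrac{|N'|}{n}$, yet every voter $i\in N'$ has $\sat{i}{\vec{w}} < \nicefrac{2}{(k+1)}\cdot\nicefrac{|N'|}{n}\cdot|T| - (k-1)$. The goal is to exhibit a single feasible deviation (changing $\vec{w}$ on some subset of the issues in $T$ on which the $N'$-voters agree but $\vec{w}$ disagrees) whose execution raises the PAV score by at least $\nicefrac{n}{m^{2}}$, contradicting termination of LS~$\text{PAV}_{\C}$.

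First I would set up the marginal-contribution accounting. For a candidate deviation that flips $\vec{w}$ to agree with $N'$ on a set $S'$ of issues (with $S'$ among those the $N'$-voters agree on), the PAV-score change decomposes into: a gain from the voters in $N'$, who each move from satisfaction $\sat{i}{\vec{w}}$ to at least $\sat{i}{\vec{w}}+|S'|$ minus whatever issues in $S'$ they were already agreeing with $\vec{w}$ on (none, by the deviation's definition restricted to $T$), plus a possibly negative term from the ``collateral'' issues outside $T$ that a $k$-restrictive constraint forces us to also flip, affecting voters outside $N'$. Because $\C$ is $k$-restrictive, flipping any single issue drags along at most $k-1$ others, so a deviation changing a single agreed issue touches at most $k$ issues total, and the collateral loss is bounded in terms of $k$ and $n$. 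Averaging the net PAV gain over an appropriate family of single-issue (plus forced-collateral) deviations available to $N'$ — there are at least on the order of $|T| - \sat{\cdot}{\vec{w}}$ of them, roughly $|T|(1 - \nicefrac{2}{(k+1)}\cdot\nicefrac{|N'|}{n}) + (k-1)$ after the assumed satisfaction bound, grouped into blocks of size at most $k$ — and using that each $N'$-voter's marginal $\nicefrac{1}{t}$ term is at least $\nicefrac{1}{(\sat{i}{\vec{w}}+1)} \geq \nicefrac{1}{\ell}$ with $\ell$ the assumed satisfaction bound, I would derive that some deviation in the family yields average gain at least $\nicefrac{n}{m^{2}}$. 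The factor $\nicefrac{2}{(k+1)}$ and the additive $k-1$ should fall out precisely from balancing the number of available deviations (divided by $k$ for the block size) against the $\nicefrac{1}{\ell}$ per-voter gain, mirroring the bookkeeping already seen in the proof of Theorem~\ref{thm:c-MES-fix_satisfies_1/k-agrEJR-1} and Theorem~\ref{thm:greedyMeCorA_approx_satisfies_agrEJR}.

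The main obstacle I anticipate is controlling the collateral PAV loss incurred on voters outside $N'$ when a $k$-restrictive constraint forces auxiliary flips: unlike the unconstrained LS PAV argument where a swap is a clean ``add one candidate, drop one candidate'', here a deviation is an arbitrary feasible move, and the forced changes on up to $k-1$ extra issues could each decrease some outside voter's satisfaction by one, costing up to $\nicefrac{1}{1} = 1$ in PAV score per such voter per issue. I would handle this by noting that the assumption that voters' ballots are consistent with $\C$ (explicitly required in the statement) lets me argue that the $N'$-voters' agreed positions on $S$ already extend to a feasible completion, so the outcome-implication graph closures $G_{\vec{w}}(a_x,d_x)$ for the relevant flips are themselves subsets of what $N'$ agrees on up to the $k$-bound — keeping the collateral confined and its cost absorbed into the $(k-1)$ additive slack. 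The remaining steps — verifying the $|S|\leq |T|\cdot\nicefrac{|N'|}{n}$ size precondition is compatible with a deviation existing, and checking the edge cases where $\ell \leq 0$ (so the claim is vacuous) or where $|T|$ is small — are routine and I would dispatch them briefly at the end.
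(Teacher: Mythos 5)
Your skeleton is essentially the paper's: argue by contradiction with $\ell = \nicefrac{2}{(k+1)}\cdot\nicefrac{|N'|}{n}\cdot|T|-k+1$, look at the at least $|T|-\ell+1$ issues in $T$ on which the $T$-agreeing group $N'$ agrees but the returned outcome $\vec{w}$ disagrees, use $k$-restrictiveness together with ballot-feasibility (so each flip drags along at most $k-1$ forced flips, all consistent with the group's ballots) to obtain at least $\nicefrac{(|T|-\ell+1)}{k}$ feasible deviations, bound each $N'$-voter's marginal gain via $r_i\leq \ell-1$, and conclude by averaging that some single deviation improves the PAV score by at least $\nicefrac{n}{m^2}$, contradicting termination of LS-$\text{PAV}_{\C}$. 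Up to that point you and the paper coincide.

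The genuine gap is your treatment of the PAV loss inflicted on voters outside $N'$, which you correctly identify as the main obstacle but then resolve incorrectly. Constraint-consistency of the ballots only shows that the forced flips in the closure $G_{\vec{w}}(a_x,d_x)$ agree with the ballots of $N'$ (so they contribute to the gain side); it says nothing about voters in $N\setminus N'$ who currently agree with $\vec{w}$ on those issues and whose satisfaction drops when they are flipped. Your own per-issue estimate (a cost of up to $1$ per outside voter per flipped issue), summed over all outside voters and all deviations in the family, can vastly exceed the aggregate gain of roughly $|N'|\cdot\nicefrac{k}{(\ell+k)}$ per deviation, and the statement that this cost is ``absorbed into the $(k-1)$ additive slack'' has no PAV-score content: the additive slack only lowers the satisfaction threshold $\ell$, it does not cancel score decreases. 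The paper closes exactly this hole with an amortized bound over the whole family of deviations: a voter $i\in N\setminus N'$ with satisfaction $r_i$ can be hurt by at most $\nicefrac{r_i}{k}$ of the considered deviations, and the total score she can lose is at most $\frac{r_i}{k}\sum_{t=0}^{k-1}\frac{1}{r_i-t}\leq\frac{k+1}{2}$, so the aggregate loss is at most $\frac{k+1}{2}\cdot(n-|N'|)$. The multiplicative factor $\nicefrac{2}{(k+1)}$ in the theorem is calibrated precisely against this amortized loss bound --- not, as you suggest, against the block-size-versus-$\nicefrac{1}{\ell}$ bookkeeping on the gain side --- and without this (or an equivalent) amortization your averaging step does not go through.
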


\begin{proof}{}
For an election instance $(\prof{B},\C)$ where $\C$ is $k$-restrictive for $k\geq 2$, take an outcome $\vec{w}$ returned by LS-$\text{PAV}_{\C}$ and consider a group of voters $N'$ that agree on some set of issues $T$. Let us assume that for every voter $i\in N'$, it holds that $\sat{i}{\vec{w}} < \nicefrac{2}{k+1}\cdot\nicefrac{|N'|}{n}\cdot|T|-k + 1$ and then set $\ell = \nicefrac{2}{k+1}\cdot\nicefrac{|N'|}{n}\cdot|T|-k + 1$. We use $r_{i}$ to denote the number of outcome $\vec{w}$'s decisions that a voter $i\in N$ agrees with.

For each voter $i\in N\setminus N'$, we calculate the maximal reduction in PAV score that may occur from a possible deviations by LS-$\text{PAV}_{\C}$ when $\C$ is $k$-restrictive. This happens when for each of at most $\nicefrac{r_{i}}{k}$ deviations, we decrease their satisfaction by $k$ and remove $\sum_{t=0}^{k-1}\nicefrac{1}{(r_{i} - t)}$ in PAV score. So for these voters in  $N\setminus N'$, we deduct at most the following:
\[
 \sum_{N\setminus N'} \frac{r_{i}}{k} \cdot \bigg(\sum\limits_{t=0}^{k-1}\frac{1}{r_{i} - t}\bigg) \leq \sum_{N\setminus N'} \frac{r_{i}}{k} \cdot\bigg(\frac{\sum_{t=1}^{k}t}{r_{i}}\bigg) = \frac{k+1}{2} \cdot(n - |N'|).
\]

Now, so there are $|T| - (\ell - 1) = |T| - \ell +1$ issues that all voters in $N'$ agree on but they disagree with outcome $\vec{w}$'s decisions on these issues. Since we assume the constraint is $k$-restrictive, then for each of these $|T| - \ell +1$ issues, they fix at most $k-1$ other issues and thus, there are at least $\nicefrac{(|T| - \ell +1)}{k}$ feasible deviations that can be made by LS-$\text{PAV}_{\C}$ amongst these issues. For the voters in $N'$, we now consider the minimal increase in PAV score that may occur from these possible deviations by LS-$\text{PAV}_{\C}$. For each such deviation, we increase their satisfaction by at least $k$ and thus, for a voter $i\in N'$, we increase the PAV score by $\sum_{t=1}^{k}\nicefrac{1}{(r_{i} + t)}$. Since for each voter $i\in N'$ we have $r_{i} \leq \ell - 1$, and as there are at least $\nicefrac{(|T| - \ell +1)}{k}$ feasible deviations in $T$, it follows that we add at least the following to the PAV score: 

\[
\frac{|T| - \ell +1}{k} \cdot\bigg(\sum_{i\in N'} \sum\limits_{t=1}^{k}\frac{1}{r_{i} + t}\bigg)\geq \frac{|T| - \ell +1}{k} \cdot\bigg(\sum_{i\in N'} \sum\limits_{t=1}^{k}\frac{1}{\ell + t - 1}\bigg)
\]

Taking into account that $k\geq 2$ and $\ell = \nicefrac{2|N'||T|}{(n(k+1))}-k+1$, then with further simplification, we find that at least the following is added to the PAV score:
\[
> \frac{n(k+1)}{2} - |N'| +  \frac{n(k+1)}{|T|} \geq \frac{k+1}{2} \cdot(n - |N'|) +  \frac{n(k+1)}{|T|}.
\]

So the total addition to the PAV score due to satisfying voters in $N'$ is strictly greater than the PAV score removed for the added dissatisfaction of voters in $N\setminus N'$ (which is at most $\nicefrac{(k+1)(n - |N'|)}{2}$). And specifically, this change in score is at least $\nicefrac{n(k+1)}{|T|} > \nicefrac{n}{|T|}$ and thus, at least one of the $\nicefrac{(|T| - \ell +1)}{k}$ many deviations must increase the PAV score by more than:
\[
\frac{k}{|T| - \ell +1}\cdot \frac{n}{|T|} \geq \frac{1}{|T|}\cdot \frac{n}{|T|} \geq \frac{n}{|T|^{2}} \geq \frac{n}{m^{2}}.
\]

Thus, LS-$\text{PAV}_{\C}$ would not terminate but would instead make this deviation in order to increase the total PAV score. Thus, contradicting that such a group $N'$ cannot exist. 
\end{proof}

\noindent
With this result, we have a rule that when focused on $k$-restrictive constraints, is both polynomial-time computable and provides substantial proportional representation guarantees (assuming voter ballots are constraint consistent).

%%%%%%%%%%%%%%%%%%%%%%%%
\section{Proportionality via Priceability}\label{sec:priceable}
%%%%%%%%%%%%%%%%%%%%%%%%

In this section we propose an alternative to the interpretation of proportional representation as justified representation through the notion of priceability \citep{petersSkowronEC2020welfarism,LacknerS23mwv}. Recent work has shown the promise of this market-based approach for a general social choice model \citep{MasarikP024generalProp} and the sequential choice model \citep{chandakARXIV2024sequential}. 
We propose the following version for constrained public decisions albeit looking at a weaker priceability axiom than that of \citet{MasarikP024generalProp} as they use an adaptation of \emph{stable priceability} \citep{petersPSS2021market} that is not always satisfiable \citep{MasarikP024generalProp} (see Section~$6.2$ in their paper).

\begin{definition}[Priceability]
Each voter has a personal budget of $m$ and they have to collectively fund the decisions on some issues, with each decision coming with some price. A \emph{price system} $\textbf{ps} = (\{p_{i}\}_{i\in N},\{\pi_{(a_t,d)}\}_{(a_t,d)\in H})$ where $H = \bigcup_{a_{t}\in \I}\{(a_{t},d)\mid d\in D_{t}\}$ is a pair consisting of $(i)$ a collection of \emph{payment functions} $p_{i}: \I\times\{0,1\} \rightarrow [0, b]$, one for each voter $i\in N$, and $(ii)$ a collection of \emph{prices} $\pi_{(a_t,d)} \in \mathbb{R}_{\geq 0}$, one for each decision pair $(a_t,d)$ for $a_t\in\I$ and $d\in D_{t}$. We consider priceability with respect to outcomes $\vec{w} \in\C$ where decisions are made on all issues. 
We say that an outcome $\vec{w} = (w_{1},\ldots,w_{m})$ is \emph{priceable} if there exists a price system $\textbf{ps}$ such that:
 
 \begin{enumerate}
     \item[] $(\rm P1):$ For all $a_t\in\I$ and $d\in D_{t}$, it holds that if $d \neq \vec{b}_{i,t}$ we have $p_{i}(a_t,d) = 0$, for every $i\in N$.
     
     \item[] $(\rm P2):$ $\sum_{(a_t,d)\in H} p_{i}(a_t,d) \leq m$ for every $i\in N$ where it holds that $H = \bigcup_{a_{t}\in \I}\{(a_{t},d)\mid d\in D_{t}\}$.
     
    \item[] $(\rm P3):$ $\sum_{i\in V} p_{i}(a_t,d) = \pi_{(a_t,w_{t})}$ for every $a_t\in\I$.
    
    \item[] $(\rm P4):$ $\sum_{i\in V} p_{i}(a_t,d) = 0$ for every $a_t\in\I$ and every $d\neq w_{t}\in D_{t}$.

    \item[] $(\rm P5):$ There exists no group of voters $N'$ with an $(S,\vec{w},\C)$-deviation for some $S\subseteq \I$, such that for each $a_t\in S$:
    \[
    \sum\limits_{i\in N'} \bigg(m-\sum\limits_{(a_{t}',d')\in H} p_{i}(a_{t}',d')\bigg) > \pi_{(a_t,w_{t})}
    \] 
    where $H = \bigcup_{a_{t}\in \I}\{(a_{t},d)\mid d\in D_{t}\}$.
\end{enumerate}
    
\end{definition}

\noindent
Condition (P1) states that each voter only pays for decisions that she agrees with; (P2) states that a voter does not spend more than her budget $m$; (P3) states that for every decision in the outcome, the sum of payments for this decision is equal to its price; (P4) states that no payments are made for any decision not in the outcome; and, finally, (P5) states that for every set of issues $S$, there is no group of voters $N'$ agreeing on all decisions for issues in $S$, that collectively hold more in unspent budget to `update' outcome $\vec{w}$'s decision on every issue $a_t\in S$ to a decision that they all agree with (where `updating' these issues leads to a feasible outcome). We illustrate priceability in our setting with the following example of a binary election instance.

\begin{example}
    Take four issues $\I = \{a_1, a_2, a_3, a_4\}$ in a binary election instance and a constraint $\C = \{(1,1,1,1),(1,1,0,0)\}$. Suppose there are two voters with ballots $\vec{b}_{1} = (1,1,1,1)$ and  $\vec{b}_{2} = (0,0,0,0)$. Note that outcome $\vec{w} = (1,1,1,1)$ is not priceable as any price system where voter $1$ does not exceed her budget would have voter $2$ having enough in leftover budget to cause a violation of condition $(\rm P5)$ (with her entire budget being leftover, she can afford more than the price of the $(S,\vec{w},\C)$-deviation to outcome $\vec{w}$). On the other hand, $\vec{w}' = (1,1,0,0)$ is priceable where we set the price of this outcome's decisions to $1$.
\end{example}

\noindent
The following result gives some general representation guarantees whenever we have priceable outcomes.

\begin{proposition}\label{prop:c-priceable-PJR-like-rep}
    Consider a priceable outcome $\vec{w}$ with price system $\textbf{ps} = (\{p_{i}\}_{i\in N},\{\pi_{(a_t,d)}\}_{(a_t,d)\in H})$ where $H = \bigcup_{a_{t}\in \I}\{(a_{t},d)\mid d\in D_{t}\}$. Then, for every $T$-cohesive group of voters $N'\subseteq N$ for some $T\subseteq \I$ with an $(S,\vec{w},\C)$-deviation for some non-empty $S\subseteq T$, it holds that:
    \[
    \sum\limits_{i\in N'} \sat{i}{\vec{w}} \geq \frac{n}{q}\cdot|T| - |S|
    \]
    where $q = \max\{\pi_{(a_t,w_{t})}\}_{a_t\in S}$.
\end{proposition}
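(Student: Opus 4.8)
The plan is to turn the no-deviation condition (P5) into a lower bound on how much the group $N'$ must collectively have spent, and then to convert that spending into a lower bound on the number of issues on which members of $N'$ are represented by $\vec{w}$.

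First I would extract the budget consequence of (P5). Since $N'$ has an $(S,\vec{w})$-deviation, the price system cannot satisfy the bracketed condition of (P5) for this pair $(N',S)$; hence there is some issue $a_{t^*}\in S$ with $\sum_{i\in N'}\bigl(m-\sum_{(a_t',d')\in H}p_i(a_t',d')\bigr)\le \pi_{(a_{t^*},w_{t^*})}\le q$. In words, the total unspent budget of $N'$ is at most $q$. Since each voter starts with budget $m$ and $N'$ is $T$-cohesive (so $|N'|\ge |T|\cdot\nicefrac{n}{m}$, i.e. $|N'|m\ge n|T|$), the voters in $N'$ together spend at least $|N'|m-q\ge n|T|-q$.

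Next I would control where that money can go. By (P1) together with (P4) and non-negativity of payments, a voter $i$ can make a positive payment only on a pair $(a_t,w_t)$ with $w_t=\vec{b}_{i}^{t}$, i.e. only on issues in $\Agr{\vec{b}_{i}}{\vec{w}}$; in particular, since an $(S,\vec{w})$-deviation forces every voter of $N'$ to disagree with $\vec{w}$ on all of $S$, no member of $N'$ pays for any issue in $S$. Let $R\subseteq\I\setminus S$ be the set of issues on which at least one voter of $N'$ pays a positive amount. Each such issue contributes to $\sat{i}{\vec{w}}$ for at least one $i\in N'$, so $R\subseteq\bigcup_{i\in N'}\Agr{\vec{b}_{i}}{\vec{w}}$ and therefore $|R|\le\sum_{i\in N'}\sat{i}{\vec{w}}$. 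By (P3) the total amount paid on any issue $a_t$ equals $\pi_{(a_t,w_t)}$, so the amount $N'$ spends on $a_t$ is at most $\pi_{(a_t,w_t)}\le q$; summing over $R$ shows that $N'$ spends at most $q\,|R|\le q\sum_{i\in N'}\sat{i}{\vec{w}}$. Combining with the lower bound from the previous step gives $n|T|-q\le q\sum_{i\in N'}\sat{i}{\vec{w}}$, hence $\sum_{i\in N'}\sat{i}{\vec{w}}\ge \frac{n|T|}{q}-1\ge \frac{n}{q}|T|-|S|$, using $|S|\ge 1$.

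The main obstacle is the step bounding $N'$'s payment on each funded issue by $q$: a priori the prices $\pi_{(a_t,w_t)}$ of issues outside $S$ need not be controlled by $q=\max_{a_t\in S}\pi_{(a_t,w_t)}$, so one must either argue that the only decision prices that matter here are indeed at most $q$ (for instance by taking $q$ to upper-bound the prices of the decisions $N'$ actually funds, or by first normalising to an equal-price price system witnessing priceability), or absorb the slack into the additive $|S|$ term. Everything else — extracting the leftover bound from (P5), the cohesiveness inequality $|N'|m\ge n|T|$, and the double counting $|R|\le\sum_{i\in N'}\sat{i}{\vec{w}}$ — is routine.
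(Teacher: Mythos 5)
Your proposal follows essentially the same route as the paper's proof: condition (P5) applied to the pair $(N',S)$ caps the group's total leftover budget by $q$, cohesiveness gives $|N'|\cdot m \geq n|T|$ as the total endowment, and conditions (P1), (P3), (P4) convert the amount spent into a count of decisions of $\vec{w}$ that members of $N'$ agree with. The paper runs the argument as a contradiction and counts funded decisions directly rather than introducing your set $R$, but the content is identical, and your final bound $\sum_{i\in N'}\sat{i}{\vec{w}}\geq \nicefrac{n|T|}{q}-1$ is if anything slightly stronger than the stated one.

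The obstacle you single out is genuine, but it is not one you introduced: since members of $N'$ disagree with $\vec{w}$ on all of $S$, every decision they fund lies outside $S$, and $q=\max\{\pi_{(a_t,w_t)}\}_{a_t\in S}$ gives no a priori control over those prices. The paper's own proof makes exactly the same move and justifies it with the parenthetical ``as $q$ is the price system's maximal price,'' which is inconsistent with the definition of $q$ in the statement. So you have correctly isolated the one non-routine step. If $q$ is read as the maximum price over all decisions of $\vec{w}$ (or at least over the decisions that voters in $N'$ pay for), both your derivation and the paper's close; with $q$ as literally defined, both require the repair you sketch (redefining $q$ or normalising the price system). No further action is needed on the remaining steps, which are correct as written.
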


\begin{proof}{}
Take a priceable outcome $\vec{w}$ and consider a $T$-cohesive group of voters $N'$. Suppose that $\sum_{i\in N'} \sat{i}{\vec{w}}< \nicefrac{n}{q}\cdot|T| - |S|$ where $q = \max\{\pi_{(a_t,w_{t})}\}_{a_t\in S}$. As a group, the voters $N'$ have a budget of $m|N'|$. Now, the voters in $N'$ collectively contributed to at most $\nicefrac{n}{q}\cdot|T| - |S|-1$ decisions in outcome $\vec{w}$, and for each decision, the price was at most $q$ (as $q$ is the price system's maximal price). So, we have that voter group $N'$ has at least the following in leftover budget:
\[
m|N'| - q\cdot\bigg(\frac{n}{q}\cdot|T| - |S|-1\bigg)\geq m\cdot \frac{n|T|}{m} - n|T| + q|S|+q = q\cdot(|S|+1).
\]
Note we made use of the fact that $N'$ is $T$-cohesive.
Thus, we know that $N'$ has strictly more than $q|S|$ in funds and for each issue in $a_t\in S$, holds more than in funds than $q\geq \pi_{(a_t,w_{t})}$. This presents a violation of condition ${\rm P5}$ of priceability. Hence, voter group $N'$ cannot exist.
\end{proof}

\noindent
However, we now must ascertain whether priceable outcomes always exist, regardless of the nature of the constraint. We see that this is possible thanks to the rule we have already defined, namely $\text{MeCorA}_{\C}$.

The next result shows that $\text{MeCorA}_{\C}$ captures the notion of priceability.

\begin{proposition}\label{prop:MeCorA-c_is_priceable}
    $\text{MeCorA}_{\C}$ always returns priceable outcomes.
\end{proposition}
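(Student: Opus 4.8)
The goal is to show that any outcome $\vec{w}$ produced by a run of $\text{MeCorA}_{\C}$ is priceable, so the plan is to read off a price system $\textbf{ps}$ directly from the rule's execution and verify conditions (P1)--(P5). First I would set up the bookkeeping: track, for each voter $i$, how much budget has been deducted from them over the course of the run, and define $p_i(a_t,d)$ by how that spending is attributed to the decisions that survive in the final outcome. The natural choice is: $p_i(a_t, w_t)$ equals the amount voter $i$ paid in the \emph{last} update that set issue $a_t$ to its final value $w_t$ (recall that when a group updates $S$, step (3) of $\text{MeCorA}_{\C}$ refunds everyone who had paid for the now-overwritten decisions, so the only payments that ``stick'' are those from the last update touching each issue); set $p_i(a_t,d) = 0$ for $d \neq w_t$ and for issues never updated away from the arbitrary starting outcome (their price is $0$). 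Prices are then forced by (P3): $\pi_{(a_t,w_t)} = \sum_{i\in N} p_i(a_t,w_t)$.

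With this definition, (P1) is immediate since in $\text{MeCorA}_{\C}$ a voter in the paying group $N'$ only pays for an $(S,\vec{w})$-deviation, and by definition of a deviation all of $N'$ agree with the new decisions on $S$; hence any positive payment $p_i(a_t,w_t)$ comes from a voter with $\vec{b}_i^t = w_t$. Condition (P2) follows because each voter starts with budget $m$ and deductions only ever decrease the running budget, while refunds from step (3) only increase it back; since the $p_i$ we defined count only the final (non-refunded) payments, their sum is at most the total net amount deducted, which is at most $m$. Conditions (P3) and (P4) hold essentially by construction of $\pi$ and by the fact that $p_i(a_t,d) = 0$ for $d \neq w_t$.

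The substantive step is (P5), and this is where I expect the main obstacle. Suppose for contradiction that some group $N'$ has an $(S,\vec{w})$-deviation with $\sum_{i\in N'}(m - \sum_{(a_t',d')\in H} p_i(a_t',d')) > \pi_{(a_t,w_t)}$ for every $a_t \in S$. The quantity $m - \sum p_i(\cdot)$ is voter $i$'s \emph{final} leftover budget, so $N'$ collectively holds, at termination, strictly more than $\sum_{a_t\in S}\pi_{(a_t,w_t)}$ in unspent funds — and I need to convert this into the statement that at termination there was still an available update with $\rho_S < \infty$, contradicting the rule's stopping condition. The delicate point is the price-escalation mechanism: to re-update the issues in $S$, the group must propose new prices $\lambda_t^* \geq \lambda_t + \epsilon$ where $\lambda_t$ is the \emph{current} price of $a_t$. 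I would argue that the current price $\lambda_t$ of each $a_t \in S$ at termination is exactly $\pi_{(a_t,w_t)}$ (the price last paid is the current price, since no later update touched $a_t$), so $N'$ needs roughly $\sum_{a_t\in S}(\pi_{(a_t,w_t)} + \epsilon)$ to afford the deviation; the strict inequalities in (P5), summed over $S$, give more than $\sum \pi_{(a_t,w_t)}$, and one must check this slack absorbs the $|S|\epsilon$ overhead — this is exactly why the paper builds $\epsilon$ into $\text{MeCorA}_{\C}$ and why (P5) uses a strict inequality, but making the $\epsilon$-accounting airtight (possibly needing $\epsilon$ small relative to the minimal positive gap, or a limiting argument as $\epsilon \to 0$) is the part that needs care. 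Once that is done, the existence of this affordable, feasible, price-increasing deviation means $\rho_S < \infty$ at the moment of termination, contradicting that $\text{MeCorA}_{\C}$ halted; hence no such $N'$ exists and (P5) holds.
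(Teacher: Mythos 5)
Your construction and verification mirror the paper's own proof: it likewise sets $\pi_{(a_t,w_t)} = \lambda_t$, the issue's final $\text{MeCorA}_{\C}$ price (equivalently, the sum of the last non-refunded payments), takes the $p_i$ to be the money voters had spent at termination, checks (P1)--(P4) directly, and obtains (P5) by contradiction with the rule's stopping condition, exactly as you propose. The $\epsilon$-overhead you flag is not addressed in the paper either---its proof simply asserts that a (P5)-violating group could have afforded a further price-increasing update---so on that point your attempt is, if anything, the more careful of the two.
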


\begin{proof}{}
Let $\vec{w} = (w_{1},\ldots,w_{m})$ be the outcome returned by $\text{MeCorA}_{\C}$. We define the following price system $\textbf{ps}$: For each issue $a_t\in\I$, fix the prices $\pi_{(a_t,w_{t})} = \pi_{(a_t,d)} = \lambda_{t}$ for all $d\neq w_{t}\in D_{t}$ where $\lambda_{t}$ is issue $a_t$'s last $\text{MeCorA}_{\C}$ price (before being set to $\infty$) prior to the rule's termination. Fix the payment functions $p_{i}$ for each voter to the money they spent to end the execution of $\text{MeCorA}_{\C}$. Observe that the priceability conditions $(\rm P1)$-$(\rm P4)$ clearly hold: since we have that, to end $\text{MeCorA}_{\C}$'s run, voters do not pay for decisions that $(i)$ they do not agree with (condition $(\rm P1)$) and $(ii)$ are not made by outcome $\vec{w}$ (condition $(\rm P4)$); $\text{MeCorA}_{\C}$ limits each voter a budget of $m$ (condition $(\rm P2)$) $(\rm P2)$; and the sum of payments for decisions made by outcome $\vec{w}$ will equal exactly $\pi_{(a_t,w_{t})} = \lambda_{t}$ (condition $(\rm P3)$). Now, for condition $(\rm P5)$, note that if such a group of voters $N'$ existed for some set of issues $S$, then $\text{MeCorA}_{\C}$ would not have terminated as this group of voters could have changed the decisions of these issues in $S$ while increasing each issues' prices.
\end{proof}

\noindent
This is a positive result that, combined with that of Proposition~\ref{prop:c-priceable-PJR-like-rep}, gives a rule that always returns us priceable outcomes for any election instance.

%%%%%%%%%%%%%%%%%%%%%%%%%%%%%%%%%%%%%%%%%%%%%%%%%%%%%%%%%%%%%%%%%%%%%%%%
\section{Conclusion}\label{sec:conclusion}
%%%%%%%%%%%%%%%%%%%%%%%%%%%%%%%%%%%%%%%%%%%%%%%%%%%%%%%%%%%%%%%%%%%%%%%%

We considered two different interpretations of justified representation from committee elections and adapted them to a public-decision model with constraints. 
In analysing the feasibility of the axioms, we devised restricted classes of constraints such as those not fixing decisions and those that can be represented as simple implications. 
While we could show mostly negative results for the satisfaction of cohesiveness-EJR under constraints, we were able to adapt successfully three known rules (MES, Local Search PAV and MeCorA) to yield positive proportional guarantees that meet, in an approximate sense, the requirements of agreement-EJR. Additionally, we defined a suitable notion of priceability and showed that our adaptation of MeCorA always returns priceable outcomes.

Our work opens up a variety of paths for future research. First, assessing a class of constraints that are more expressive than simple implications seems a natural starting point in extending our work. Then, on a more technical level, it would be interesting to check if the representation guarantees that are offered by $\cMES$, LS-$\text{PAV}_{\C}$ and Greedy $\text{MeCorA}_{\C}\text{-}(k-1)$ still hold for a wider range of election instances. Regarding our adaptation of priceability, the question is open as to whether there are more constrained public-decision rules that always produce complete priceable outcomes. Given that we opted to represent the constraints as an enumeration of all feasible outcomes, it is natural to ask whether our computational results 
%such as Proposition~\ref{prop:(T,w)-deviation_poly_to_find} 
still hold under compact constraint representations, e.g., $\C$ is represented as a Boolean formula of propositional logic. We also note some lingering computational questions such as the computational complexity of $(i)$ computing outcomes for rules such as $\cMES$ and Greedy $\text{MeCorA}_{\C}\text{-}(k-1)$ for general constraints, and $(ii)$ of checking whether a given feasible outcome is priceable. Finally, the list of proportionality notions to be tested on the constraints test-bed is not exhausted \citep{LacknerS23mwv}, with the proportionality degree \citep{skowron21propDegree} and $\text{EJR}{+}$ \citep{BrillP20023robust} being notable notions still to be considered. 

\clearpage
\bibliographystyle{abbrvnat}
\bibliography{theorydec}  

\end{document}